\theoremstyle{plain}
\newtheorem{theo}{Theorem}[section]
\newtheorem*{theo*}{Theorem}
\newtheorem{coro}[theo]{Corollary}
\newtheorem{lemm}[theo]{Lemma}
\newtheorem{theomain}{Theorem}
\theoremstyle{definition}
\newtheorem{defi}[theo]{Definition}
\newtheorem{rema}[theo]{Remark}
\newtheorem{exem}[theo]{Example}
\newenvironment{important}{\begin{center}\begin{minipage}{.8\linewidth}\textit\bgroup}{\egroup\end{minipage}\end{center}}
\newcommand*{\dd}%
  {\relax\ifnum\lastnodetype>0\mskip\medmuskip\fi\mathrm{d}}
\newcommand{\limit}{\mathsf{limit}}
\newcommand{\pref}{\succsim}
\newcommand{\spref}{\succ}
\title{Cycles in synchronous iterative voting: general robustness and examples in Approval Voting}
\author{Beno\^{\i}t R. Kloeckner,\footnote{ORCID 0000-0002-4966-7864, \url{benoit.kloeckner@u-pec.fr}} \thanks{Univ Paris Est Creteil, CNRS, LAMA, F-94010 Creteil, France -- Univ Gustave Eiffel, LAMA, F-77447 Marne-la-Vallée, France}}
\begin{document}

\maketitle

\begin{abstract}
\noindent\textbf{Abstract.} We consider synchronous iterative voting, where voters are given the opportunity to strategically choose their ballots depending on the outcome deduced from the previous collective choices.

We propose two settings for synchronous iterative voting, one of classical flavor with a discrete space of states, and a more general continuous-space setting extending the first one. We give a general robustness result for cycles not relying on a tie-breaking rule, showing that they persist under small enough perturbations of the behavior of voters. Then we give examples in Approval Voting of electorates applying simple, sincere and consistent heuristics (namely Laslier's Leader Rule or a modification of it) leading to cycles with bad outcomes, either not electing an existing Condorcet winner, or possibly electing a candidate ranked last by a majority of voters. Using the robustness result, it follows that those ``bad cycles'' persist even if only a (large enough) fraction of the electorate updates its choice of ballot at each iteration.

We complete these results with examples in other voting methods, including ranking methods satisfying the Condorcet criterion; an \emph{in silico} experimental study of the rarity of preference profiles exhibiting bad cycles; and an example exhibiting chaotic behavior.
\end{abstract}

\section{Introduction}
 
\paragraph*{Iterative Voting.}

In any voting system, the choice of a ballot by any voter is guided by her preferences between the candidates and the expected effect of each ballot she can cast.
To estimate the effect of any ballot, one needs to know the ballots cast by the other voters, or at least to have some information on them. Assuming perfect information is neither realistic nor theoretically useful: in large electorates, changing one's ballot rarely changes the outcome of the election (hence, as is well-known, the notion of Nash equilibrium is ineffectively broad in the context of elections \cite{Laslier2009leader}). One thus often considers situations where information is imperfect, so that the ballot to be chosen by any given voter might change the outcome, with a tiny but non-zero probability. Another point of view is to consider the possibility for voters of aligned interests to form coalitions and decide together which ballots to cast. A third point of view, taking from both of the first two, is to consider that some (or all) information about voters' intents is common knowledge, and that voters choose their ballot under the assumption that other voters that have the same preferences as theirs will make the same reasoning as they will -- thus overcoming the insufficient weight of one ballot to change the outcome. 

In any case, these considerations introduce a \emph{dynamical} component to voting: after all voters adjusted their intentions, the information under which this adjustment was made is outdated; the new intents result in new information that could be shared, and itself result in new strategic adjustments of voters' intents, etc. Given a model of voter's response to new information, one obtains a new set of equilibria, which we call \emph{dynamical equilibria}: situations when the intents of voters stay unchanged after the outcome they induce is broadcast. In this article, the single word \emph{equilibrium} will only be used to denote dynamical equilibria, never Nash equilibria. Natural questions are thus whether such dynamical equilibrium exist; whether such equilibrium is unique; what properties such equilibriums have, in particular which candidates can be elected at equilibrium; and which other sets of intents converge to an equilibrium after successive adjustments. The field of \emph{iterative voting} is the study of these questions, under various modeling assumptions.

\paragraph*{Synchronized and asynchronous iterative voting.}

By far the most studied version of iterative voting is \emph{asynchronous}, i.e. only some voters (most commonly a single one) adjust their intents given the available information, and information is updated before some other voter(s) make their adjustments, etc. While this model has both a theoretical interest and practical applications to some collective decision processes, it is ill-suited to model large-scale political election, where the number of times when the information is updated (poll publications) is much lesser than the number of voters.\\

In contrast, we consider in this article \emph{synchronous} iterative voting, where all voters are given the information at time $t$, and simultaneously adjust their intents to build the new situation at time $t+1$. 
It is important to observe that what makes iterative voting synchronous is not that the voters' intents change exactly at the same time, but that all voters have the opportunity to adjust their preferences before information is updated.

Synchronous iterative voting can be thought of as a model for political elections, where information is broadcasted through polls; we will thus name our dynamical model \emph{Polling Dynamics}; similar points of view have been notably taken in \cite{chopra2004knowledge}, and \cite{reijngoud2012response} where the emphasis is on the effect of the amount of information given by the polls (ranks, or scores, etc). 

However this is not our sole motivation: considering synchronized iterative voting is also interesting from a theoretical perspective, when trying to define a ``best'' ballot for each voter. Indeed, to be efficient in her choice a voter should not only vote strategically, but also anticipate that other voters will also do so, who will themselves try to anticipate all voters' strategies, etc. A model for this situation is to replace instantaneous anticipations by iterations: each voter is given a fixed heuristic, which dictates for each possible expected outcome which ballot to cast, and we inductively apply these heuristics and update the expected outcome. At a fixed point of the Polling Dynamics, which we call a \emph{dynamical equilibrium}, anticipating each other's strategy result in the same situation, which is therefore stable. Conversely, the existence of a cycle means that in some circumstances there is no meaningful way for voters to inductively anticipate other's strategies. Note that even when limiting the number of steps of counter-strategies, the game-theoretic analysis of voting rules can be quite deep, see \cite{grandi2019GSgames}.\\

We shall be especially interested in Approval Voting -- the voting system in which a ballot can contain the names of any subset of candidates, and the candidate whose name is present in the most ballots is elected. Approval Voting has the very interesting feature that several sincere ballots coexist for each order of preference, so that there exist heuristics that are both sincere, simple, and non-trivial; let us discuss this in more detail.

\paragraph*{On sincerity, strategic voting and straightforwardness} 

Durand notes in \cite{Durand2015manipulables} that in general the meaning of ``sincerity'' is open to interpretation, and that this word has often been used to argue against strategic voting. He makes a compelling point that strategic voting is to be expected, and even advised to voters, and that what causes a democratic problem is the \emph{necessity} of resorting to strategies to get the best outcome rather than the fact that voters embrace this necessity (``\emph{Manipuler c'est bien, la manipulabilité c'est mal}'': ``Manipulation is good, manipulability is bad''). Indeed strategies that either need a lot of information on others' plans or that are too contrived cannot be applied uniformly, creating asymmetries between voters. Even in the absence of manipulation or strategic voting, sincere ballots can be cast but afterward regretted in view of the outcome of the election, thus lowering trust and confidence in the democratic system. A most important property is thus \emph {straightforwardness} \cite{gibbard1973manipulation}, i.e. the existence of strategies that yield an optimal outcome whatever choices are made by the other voters. Since straightforwardness cannot be hoped in general \cite{gibbard1973manipulation,satterthwaite1975strategy}, one can try to determine whether certain systems often offer weak forms of straightforwardness (see again \cite{Durand2015manipulables}).

In Approval Voting, one can say that a ballot is \emph{sincere} whenever all candidates present on the ballot are preferred to all candidates not present on it \cite{Brams2007approval}. As soon as there are more than two candidates each voter has several sincere ballots, corresponding to the various points in her order of preference where she can draw the line between acceptance and rejection. In particular, strategic voting (i.e. choosing one's ballot depending on the ballots expected  to be cast by the other voters) can occur even when restricting to sincere ballots. In other words, sincerity in the above sense does not imply straightforwardness. One of our goals is to give examples showing how very far from straightforward Approval Voting can be even when restricting to sincere ballots.

\paragraph*{Is Approval Voting a Condorcet system ``in practice''?} 

While Approval Voting is known not to be a Condorcet system, it has been argued that it might be close to Condorcet ``in practice''. 
We can e.g. read in \cite{brams2010going}: ``In particular, Condorcet candidates [...] almost always win under AV'', and some elections where this assertion can be checked are mentioned as evidence. 

Several theoretical results could also be seen as providing a basis for the above claim. First, even though many other voting systems also satisfy these properties, let us mention that Brams and Sanver \cite{Brams2003sovereignty} showed that when a Condorcet winner exists, her election is a strong Nash equilibrium; and in \cite{sertel2004strong} it is shown that only Condorcet Winners are elected at strong Nash equilibriums. Strong Nash equilibrium are said by Brams and Sanver to be ``globally stable'', but one point not considered in either works above is whether they should be expected to be reached in practice.

A stronger argument originates in Laslier \cite{Laslier2009leader}, where it is proved that under a large-electorate model with uncertainty in the recording of votes and perfect common information, the best course of action for voters results in a particular heuristic, the ``Leader Rule'' (LR). Additionally, he proved that if there is a Condorcet winner and all voters apply the Leader Rule, then there is at least one equilibrium, and any equilibrium elects the Condorcet winner.\footnote{A similar argument has been raised by Warren D. Smith and is used by advocates of Approval Voting and Range Voting, see https://www.rangevoting.org/AppCW.html. The argument of Smith is less rigorous than Laslier's, since the equilibrium assumption is kept implicit.}

\paragraph*{Description of the main results.} 

To introduce the Polling Dynamics, we propose both a classical-flavored \emph{discrete space} setting, and an enlarged \emph{continuous space} setting that allows to continuously perturb a situation (i.e. a given electorate with fixed heuristics), for example in order to assume that at each step of the dynamics, a small fraction of voters fail to adjust their ballots. This makes it possible to discuss how dynamical phenomenons change under such perturbations. 

Our first main result (Theorem \ref{t:robust} in Section \ref{s:robust}) is a quite general robustness result, showing under a mild assumption that cycles or equilibriums in the discrete space setting persist under perturbation inside the continuous space setting. All the examples we will give benefit this robustness. Our main contribution here is in the framework for Polling Dynamics, the proof of Theorem \ref{t:robust} being rather straightforward once the definitions are set up.

Among the possible perturbations, one can model the situation where instead of having all voters change their ballots when informed of the last poll, a (small enough) fraction keeps their ballot as they were. This particular case brings us closer to asynchronous voting, but only to some extent since we need to stay close to the discrete, synchronous model.
\\

The next main results are examples showing that in Approval Voting, the Polling Dynamics can exhibit a ``bad cycle'' where a disliked candidate can be elected, even in the presence of a Condorcet winner (Theorems \ref{t:LR} and \ref{t:example2} in Section \ref{s:AVexamples}). The main point shall not be the mere existence of cycles, a rather unsurprizing phenomenon,\footnote{An example of cycle under the Leader Rule is given as exercise 8.4.4 in \cite{meir2018strategic}, but there are no Condorcet winner.} but rather that such cycles can result in the election of a suboptimal candidate. As noted by Laslier, previous examples of cycles (notably in \cite{Brams2007approval}) needed some voters to change their heuristic at some iteration of the process; in our examples the assumed strategic behaviors is consistent, i.e. constant in time; they are also sincere, simple, and strategically sound (the first example uses the Leader Rule, the second a slight modification where voters refuse to vote for several of their least-preferred candidates in any circumstance). Two obvious limitations to the relevance of these existence results are that such ``bad cycles'' could be rare, and that it is unrealistic to assume all voters to update their choice of ballots at each poll publication. The former limitation will be addressed by numerical experiments (which partly confirm it, and partly moderate it, see below); the latter is addressed by our robustness result. Indeed, Theorem \ref{t:robust} shows that if instead of assuming that all voters update their choice of ballot at each iteration, we only assume that a large enough proportion of them do, then the same cycle of winners still occurs.

Regarding whether Approval Voting should often elects the Condorcet winner, our examples cast a doubt. Laslier's results mentioned above mean that in every situation where a Condorcet winner exists but is not elected, voters (individually) have a strategical interest in changing their ballots. Our examples expose that this does \emph{not} mean that following this heuristic shall lead to the election of a Condorcet winner: it may also lead to a periodic pattern where the Condorcet winner is absent. More generally, even simple heuristics make the availability of information a possible impediment to reaching an equilibrium and/or to elect an existing Condorcet winner.

This new issue with the Approval Voting system complement in particular the examples provided in \cite{Sinopoli2006examples}.\\

%
%
%

We also provide a few examples in other voting systems in Section \ref{s:other}. Together with the Approval Voting examples, they show how extreme an impact polling can have on the election outcome: rigging \emph{any one} poll can prevent the election of the Condorcet winner even if \emph{all} subsequent polls are perfectly conducted and reported; and in fact, even if all polls are perfect but in the first one voters respond according to a pre-established expectation of the outcome of the election, this expectation can determine the outcome of the election even after arbitrary many polls: polling need not induce synchronization nor loss of memory. Even with perfect unrigged polls, the sheer \emph{number} of polls (e.g. its parity) can decide the outcome of the election.\\


It could be expected that such bad cycles are in practice rare. In Section \ref{s:experimental} we explore numerically this question, for Approval Voting, under various cultures modeling the construction of random voters' preferences. The main take-away is that bad cycles are rare when voters apply the Leader Rule, but can be more common, appearing in more than $15\%$ of preference profiles, in its modification where voters give themselves a limit to the candidates they may approve of. Most striking is that the culture where bad cycles are rarest for the Leader Rule is the one where they are most common for its modification.\\

The continuous-state setting enables one to model a broad variety of voter's collective behavior. In Section \ref{s:chaos}, we consider an example where voters are reluctant to approve of a candidate who is ranked low in their preference order, but may do so if a very close-call makes this move susceptible to improve the outcome. This arbitration can surprisingly make the Continuous-State Polling Dynamics a \emph{chaotic dynamical system}, making the prediction of the next winners from the observation of a sequence of previous expected winners practically impossible.

\paragraph*{Brief overview of the literature.}

Let us end this section by describing some more previous works. With no pretense to exhaustivity, the aim is to describe some directions previously explored in iterative voting; a more detailed review can be found in \cite{meir2017review}. The present work distinguishes itself from previous ones by considering synchronous iterative voting, previously seldomly studied: most commonly \emph{iterative voting} is used to mean what we named here \emph{asynchronous iterative voting}. Another divide which mostly parallels the previous one is between works considering at each step a set of possible moves by voters (e.g. best-responses, quite commonly considered in asynchronous voting) and, as we do here, deterministic heuristics. The reader can find an overview on strategic voting in \cite{meir2018strategic}.\\

In the case of Plurality Voting, discussed in slightly more details below, very general convergence results have been obtained in \cite{Meir2014dominance,Meir2015uncertainty,Meir2017iterative}. Many other voting rules --- not Approval --- have been considered in \cite{Lev2016convergence,Koolyk2017convergence}. Their theoretical results are negative (no guaranteed convergence) but empirical tests seem to indicate that cycles are rare. Restricted strategies, where voters may be constrained to some not necessarily best or better responses, have been studied in \cite{obraztsova2015restrictive} and \cite{grandi2013restricted}.

Non-myopic strategies have been considered for example in \cite{bowman2014separability,airiau2017learning,Obraztsova2015farsighted}, showing notably that voters learning from past information produce relatively good outcomes. Specific behaviors of voters, assumed to be subject to either \emph{truth bias} or \emph{lazy bias}, are studied in \cite{Rabinovich2015equilibria}.

Considering Plurality Voting, the situation is complicated by the rigidity of the single-name ballot, which forces voters to choose a trade-off between preferences and probability to improve the outcome of the election. The works \cite{Meir2014dominance} and \cite{Meir2015uncertainty} have studied in depth models taking into account the scores of the candidates and a level of uncertainty to define the possible voters' strategies. They obtained several results proving under some assumptions convergence to equilibrium (the result closest to our present setting is Theorem 5 in \cite{Meir2015uncertainty}, where at each iteration an arbitrary subset of voters adjust their votes according to the current poll results, thus including the case studied here where all voters adjust their votes at each iteration). In the case when strategies are restricted, \cite{obraztsova2015restrictive} gives sufficient conditions for convergence in many voting systems.

The question of the quality of equilibria have been considered for Plurality, Veto and Borda in \cite{branzei2013selfish}, where the voters can adjust their votes one at a time, and starting from the state where each voter casts her sincere ballot (which is unique in these voting systems). Concerning Plurality, they find that Nash equilibria that can be attained in this way are all very good; but in their model, the individual updates are made greedily and any candidate that would start with two or more votes less than the starting winner will never receive any new vote. One could expect real voters to vote not for immediate improvement of the winner, but in order to give a better position to a contender that might receive more support from others. This is the kind of modeling assumptions that our framework is made to support.

\paragraph*{Acknowledgments.}
I am indebted to Adrien Fauré \texttt{@AdrienGazouille} for  \href{https://twitter.com/AdrienGazouille/status/1181494450826014727}{a long debate on twitter} (in French) that lead me to seek and design the examples for Approval Voting presented here, and to François Durand for introducing me to the Social choice theory, for many discussions on Voting Systems, and for many suggestions that helped me improve this article. It also benefited from relevant comments provided by Adrien Fauré, Jean-François Laslier, Reshef Meir and anonymous referees who I warmly thank.

\section{Polling dynamics with discrete or continuous space}

This section mostly consists of notations and definitions; its main novelty is to propose a continuous space framework for iterative voting, which allows to model much more varied situations than the usual discrete  space versions. It models a situation where voters are grouped by type, e.g. according to their preference order, but where not all voters of a given type will react in the same way to an information update. For example, we can assume that a fraction of voters of each type is not aware of an information update, and will thus keep their previous intended ballot for the next iteration. In another direction, if the win is a close call, some voter's type may strategically adjust their ballots accordingly; but how close the call should be for them to bother changing their intent may vary from individual to individual, and our continuous space framework will make it possible to model this.

Examples of uses of the framework developed here are given below, first for Approval voting and in Section \ref{s:other} for several other voting systems.

\subsection{Common notations}

We start with the notation common to both the discrete space and continuous space frameworks; when possible, we use the notation from \cite{BCELP2016Handbook}.

The set of \emph{voter types} is assumed to be finite, of cardinal $n$, and denoted by $N$. It will be convenient to denote voter types by positive integers ($N=\{1,\dots, n\}$) in the theoretical part, but by upper-case letters from the end of the alphabet ($N=\{Z,Y,X,\dots\}$) in examples. Some of these letters will be used for maps below, but the context will prevent any confusion. In many examples below, voter types will coincide with preferences; however distinguishing gives more flexibility in the modelling, e.g. we can handle the case of different types of voters having the same preferences but using different heuristics.

Let $A$ be a finite set of \emph{candidates} (or \emph{alternatives}). Its cardinal is denoted by $m$ and assumed to be at least $2$ (most usually at least $3$). Candidates will be denoted by lower-case letters from the beginning of the alphabet ($A=\{a,b,c,\dots\}$). The set of weak orderings on $A$, i.e. rankings of candidates allowing ties, is denoted by $\mathcal{R}(A)$. The set of linear ordering on $A$, i.e. rankings of candidates without ties, is denoted by $\mathcal{L}(A)$.

\emph{Preferences} of each voter type will be modeled by an element of $\mathcal{R}(A)$; when disallowing ties, we will restrict to its subset $\mathcal{L}(A)$. A \emph{preference profile} is a pair $(P,w)$ where:
\begin{itemize}
\item $P=(\pref_1,\dots,\pref_n)\in\mathcal{R}(A)^n$ is a family of preferences, one for each voter type; the corresponding strict preferences are denoted by $\spref_i$,
\item $w=(w_1,\dots,w_n)\in \mathbb{R}_+^n$ is a family of \emph{weights} $w_i>0$, which represents the number of voters of each type. A usual situation is to take $w_i=1$ for all $i$ (or a positive integer, when grouping several voters in each type), but we allow more general weights for the continuous space framework.
\end{itemize}
A preference profile is represented by an array, each column of which represents a voter type, the column head carrying the weight (we shall often indicate the voter type on top for clarity) and the column listing the candidates in decreasing order of preference. With $3$ candidates $a,b,c$ an example of preference profile is:
\[
\begin{array}{ccc}
1 & 2 & 3 \\
100 & 101 & 102 \\
\hline
a & b & c \\
b & a & b \\
c & c & a
\end{array}
\]
meaning that $n=3$ and e.g. that $a\spref_1 b$ but $b\spref_2 a$, that there are 102  voters of type $3$, etc. For concision, an element of $\mathcal{L}(A)$ is denoted by listing the candidates in decreasing order of preference, e.g. $\pref:=acb$ means $a\spref c \spref b$. Elements of $\mathcal{R}(A)$ are denoted similarly with tied elements between parentheses.

Let $\mathcal{B}$ be a finite set of \emph{ballots} that can be cast by voters.
For example $\mathcal{B}$ can be $\mathcal{L}(A)$, or for Approval Voting the set $\mathcal{P}(A)$ of subsets of $A$.
An Approval Voting ballot $B\in \mathcal{P}(A)$ is said to be  \emph{sincere} for voters of type $i$ whenever
\[\forall a,b\in A \colon (b\in B \text{ and } a\spref_i b) \implies a\in B \]
i.e. when every candidate strictly preferred to any approved candidate is also approved.

Let $\mathcal{O}$ be a set representing the possible \emph{outcomes} of the election. We can let $\mathcal{O}=A$ if we are only interested in the winner, but it is convenient to take a larger set $\mathcal{O}$ to include all information that can be made available and used by voters to decide which ballot to cast. For example, for Approval Voting we can take $\mathcal{O}=[0,1]^A$, an element $r=(r_a, r_b,...)$ of $\mathcal{O}$ giving the proportion of voters approving of each candidate. We let $W:\mathcal{O}\to A$ be the function mapping a result to the corresponding winner (we shall always assume a tie-breaking rule, e.g. breaking ties in favor of the earlier candidate in the alphabetical order).

\subsection{Polling dynamics with discrete space}\label{s:DSPD}

We start with the usual case where in any given circumstance all voters of the same type cast the same ballot, which we call here the \emph{discrete space} framework.
A \emph{ballot profile} is thus an element of $\mathcal{B}^n$ and represents the ballots cast by each voter type.

We consider a \emph{social choice function}, also called \emph{voting rule} $f:\mathcal{B}^n\to A$ and assume the choice of $\mathcal{O}$ and $W$ makes it possible to decompose it as $f=W\circ g$ where $g:\mathcal{B}^n\to \mathcal{O}$ is called the \emph{information function}. Observe that by assuming a tie-breaking rule for $W$, we only consider \emph{resolute} social choice functions. For example, with the above choice of outcomes, the information function is given by
\[ g(B_1,\dots, B_n) = \bigg(\frac{1}{\sum_{i\in N} w_i}\sum_{i\in N \,|\, \alpha \in B_i} w_i\bigg)_{\alpha\in A}\]
and the social choice function maps $(B_1,\dots, B_n)$ to the candidates with the highest approval rating, ties broke by alphabetical order.

Each voter type $i$ is assumed to have a \emph{heuristic} $\sigma_i : \mathcal{O}\to \mathcal{B}$, representing the way voters of this type choose their next ballot given the information of the outcomes (which is itself determined from the previous ballot profile). Formally, the preferences of voter types do not appear in the Polling Dynamics to be defined in the next paragraph. Most usually, they do appear indirectly, through heuristics: more often than not, $\sigma_i(r)$ take the form $\sigma(\pref_i,r)$ where $\sigma$ is a map $\mathcal{R}(A)\times \mathcal{O}\to \mathcal{B}$ (or  $\mathcal{L}(A)\times \mathcal{O}\to \mathcal{B}$ if ties in preferences are not allowed). An example of such a map $\sigma$ is the Leader Rule of Laslier, which we mentioned earlier and recall below.

We obtain a \emph{Polling Dynamics}  on the set of ballot profiles
\begin{align*}
\varphi :\qquad\qquad \mathcal{B}^n &\to \mathcal{B}^n \\
(B_1,\dots,B_n) &\mapsto \big(\sigma_1(g(B_1,\dots,B_n)), \dots, \sigma_n(g(B_1,\dots,B_n)) \big)
\end{align*}
obtained by determining the outcome according to the ballot profile in argument and then applying each voter's heuristic to obtain a new ballot profile.

Alternatively, we can take as space the set of outcomes and consider the \emph{shifted Polling Dynamics}
\begin{align*}
\psi : \mathcal{O} &\to \mathcal{O} \\
 r &\mapsto g\big(\sigma_1(r), \dots, \sigma_n(r) \big)
\end{align*}
mapping an outcome to a new one, after applying each voter's heuristic then determining the resulting outcome.

These two maps are strongly related: $\psi\circ g = g\circ \varphi$ (in dynamical systems theory we say they are \emph{semi-conjugated}); $\varphi$ seems to carry more information, but their dynamical study are actually equivalent  because we assumed in the model that heuristics only depend on the outcome, not on the full details of the ballots.

By a \emph{dynamical equilibrium} we mean a fixed point of either $\psi$ of $\phi$, i.e.  either an outcome $r$ such that $\psi(r)=r$, or a ballot profile $(B_1,\dots, B_n)$ such that $\varphi(B_1,\dots, B_n)=(B_1,\dots, B_n)$. There is a one-to-one identification between these two points of view: when $r$ is fixed by $\psi$, then $(\sigma_1(r),\dots, \sigma_n(r))$ is fixed by $\varphi$ and sent back to $r$ by $g$; and when $(B_1,\dots, B_n)$ is fixed by $\varphi$, $g(B_1,\dots, B_n)$ is fixed by $\psi$ and sent back to $(B_1,\dots, B_n)$ by the heuristics.

\paragraph*{The Leader Rule and its modification.}
The Leader Rule is an example of heuristic for Approval Voting depending on tie-free preferences $\pref\in\mathcal{L}(A)$; assuming the expected winner $\omega(r)$ and the expected runner-up $\rho(r)$ can both be deduced from the outcome $r$, it is defined by
\[\mathrm{LR}(\pref,r) = \big\{\alpha\in A \,\big|\, \alpha\spref \omega(r) \text{ or }\alpha=\omega(r)\spref\rho(r)\big\};\]
in words, all candidates preferred to the expected winner are approved, and the expected winner is approved if and only if she is preferred to the expected runner-up.
When we say that voters (assumed to have tie-free preferences) apply the Leader Rule, we mean that their heuristics take the form $\sigma_i(r)=\mathrm{LR}(\pref_i,r)$.

When ties are allowed, i.e. $\pref{}\in\mathcal{R}(A)$, there are several possible ways to extend the Leader Rule. One could use the very same formula as above, but it would have the consequence that when the outcome is the ranking $abcd$ and the preferences are $(ab)(cd)$, the resulting ballot is blank. But in this situation, casting the ballot $\{a,b\}$ may prevent a worst candidate to be elected, and is always preferable to a blank ballot.
We will thus consider the following \emph{Modified Leader Rule} for preferences with ties by
\[\mathrm{MLR}(\pref,r) = \big\{\alpha\in A \,\big|\, \alpha\spref \omega(r) \text{ or } \alpha\simeq \omega(r) \spref \rho(r) \text{ or }(\alpha\simeq \omega(r) \text{ and }\nexists \beta\spref \alpha) \big\}  \]
in particular, compared to the above, when the expected winner is tied for top in the preferences, then she and all those tied with her are approved. Again, when we say that voters follow the Modified Leader Rule  we mean that $\sigma_i(r) = \mathrm{MLR}(\pref_i,r)$.

\subsection{Polling dynamics with continuous space}

We now turn to Polling Dynamics with continuous space (CS), the goal of which is to allow more flexibility in our assumption on voters' heuristics. In particular, we want to be able to consider a continuum of behaviors in each voter type, for example accounting for variable levels of bias, be it for example truth-bias (favoring sincere ballots) or lazy-bias (aversion to change one's ballot in view of new information). Example \ref{ex:robust_ex} illustrates this framework.

The discrete space of ballot profiles $\mathcal{B}^n$ thus has to be replaced by a more complicated object, encoding the proportion of each voter type casting each ballot. To this effect we use the \emph{simplex} $\Delta(X)$ over a finite set $X$, defined as
\[\Delta(X) = \Big\{(p_x)_{x\in X} \in [0,1]^X \,\Big|\, \sum_{x\in X} p_x = 1\Big\}.\]
The \emph{full continuous space} is then $\Delta(\mathcal{B})^n$; an element of this space is a doubly indexed family $(p^i_B)_{i\in N, B\in \mathcal{B}}$ where $p^i_B$ is the proportion of ballots $B$ cast among voters of type $i$, and is called a \emph{CS ballot profile}.

We now define a \emph{CS information function} as a map $G:\Delta(\mathcal{B})^n \to \mathcal{O}$, with the  \emph{associated CS social choice function} $F=W\circ G:\Delta(\mathcal{B})^n \to A$. For example, with the above choice of outcomes, for Approval Voting the CS information function is given by
\[ G\big((p^i_B)_{i,B} \big) = \bigg(\frac{1}{\sum_{i\in N} w_i}\sum_{\substack{i\in N\\ B\in\mathcal{B} \,|\, \alpha\in B}} p^i_B w_i\bigg)_{\alpha\in A} \]
where $p^i_B w_i$ represents the total number of voters of type $i$ casting the ballot $B$.

The full continuous space often has an unnecessarily large dimension, since under most heuristics voters will only cast ballots among a small subset $\mathcal{B}_i$ of $\mathcal{B}$ (one subset for each voter type;  for example, for Approval Voting we could restrict to ballots that are sincere with respect to the type's preferences). We thus define a more convenient \emph{continuous space} by
\[\mathcal{P} = \prod_{i\in N} \Delta(\mathcal{B}_i) \quad \subset  \Delta(\mathcal{B})^n\]
the elements of which are \emph{admissible CS ballot profiles}, or \emph{states}. We still denote by $G$ and $F$ the restrictions of the CS information function and social choice function to this space. Observe that the full CS appears as a special case by taking $\mathcal{B}_i=\mathcal{B}$ for all $i$.

A \emph{CS Polling Dynamics} is simply a map $\Phi : \mathcal{P} \to \mathcal{P}$. In general, it is not so much the dynamics of $\Phi$ that will be of interest, but the sequences of outcomes, which can be recovered as $G(\Phi^k(p_0))$ where $k$ is the number of iterations and $p_0=(p^i_B)_{i,B}\in \mathcal{P}$ is the initial ballots intended to be cast by the voters.

This very broad definition of a CS Polling Dynamics allows for much modeling flexibility, but the main interest is when such a map is deduced from some sort of heuristics of voters. We will not need this consideration in the robustness result which is very general, but let us give some definitions to show how this can be done.

We will assume that the choice of voters is based only on the predicted outcome and the ballots previously cast by voters of their type (this is enough to permit a fraction of voters not to adjust their choice, e.g. because they are not aware of the last poll). A \emph{CS heuristic} for voters of type $i$ is thus a map $\Sigma_i : \mathcal{O}\times\Delta(\mathcal{B}_i) \to \Delta(\mathcal{B}_i)$, and given a CS heuristic for each voter type the corresponding CS Polling Dynamics is given for all state $p=(p^i_B)_{i,B}\in \mathcal{P}$, setting $p^i=(p^i_B)_{B}$, by
\[\Phi(p) = \Big(\Sigma_1\big(G(p),p^1\big), \Sigma_2\big(G(p),p^2\big), \dots, \Sigma_n\big(G(p),p^n\big) \Big).\]

\section{Robustness of tie-free cycles}\label{s:robust}

In this section we want to prove that cycles of a discrete space Polling Dynamics that do not rely on the tie-breaking rule are robust, i.e. they persist under small enough perturbation. ``Small enough'' is a void concept in the discrete space setting, and we use CS Polling Dynamics to model the perturbations. The first step is to see that discrete space Polling Dynamics can always be realized as particular cases of CS Polling Dynamics.

\subsection{Embedding the discrete space Polling Dynamics into the continuous space.}

Observe that $\mathcal{B}^n$ embeds naturally into the full CS, by identifying $(B_1,\dots, B_n)$ with the element $\pi(B_1,\dots, B_n)=(p^i_B)_{i,B}\in \Delta(\mathcal{B})^n$ defined by
\[ p^i_B = \begin{cases} 1 & \text{when }B=B_i \\ 0 &\text{otherwise}  \end{cases} \]
Such an elements of the full CS, taking only the values $0$ and $1$, is called an \emph{extreme state}. We shall say that the CS information function $G$ \emph{extends} the information function $g$ whenever $G\circ\pi= g$, i.e. for all $(B_1,\dots,B_n)\in \mathcal{B}^n$, $G(\pi(B_1,\dots,B_n))=g(B_1,\dots,B_n)$. Without this assumption we would not be encoding the same voting method in both settings.

Consider fixed the (discrete space) heuristics $\sigma_i$ and the corresponding Polling Dynamics $\varphi$, and define the \emph{associated} CS Polling Dynamics by 
\[\Phi_0(p) = \pi(\sigma_1(G(p)),\dots,\sigma_n(G(p)),\]
i.e. starting from a CS ballot profile $p\in \Delta(\mathcal{B})^n$ we compute its outcome $G(p)$, then apply the heuristics to get new ballot choices for each voter type, then embed this into the full CS with the map $\pi$ to obtain a new CS ballot profile. Note that $\Phi_0$ is well-defined on the full CS, but all its values are extreme states.

The hypotheses we introduced in this section are precisely what needs to be assumed to ensure that the continuous space framework extends the discrete space situation: $\Phi_0$ produces the same sequences of outcome than $\varphi$ when starting at an extreme state.
\begin{lemm}
Assume that $G$ extends $g$ and that the heuristics only pick admissible ballots, i.e. $\sigma_i(r)\in\mathcal{B}_i$ for all voter type $i\in N$ and all outcome $r\in\mathcal{O}$. Then the associated CS Polling Dynamics takes its values in $\mathcal{P}$, and for all $(B_1,\dots, B_n)$ the sequences of outcomes
\[ g(\varphi^k(B_1,\dots, B_n)) \qquad\text{and}\qquad G(\Phi_0^k(\pi(B_1,\dots, B_n)))\]
are the same.
\end{lemm}

\begin{proof}
Since heuristics only pick admissible ballots, for all $p\in \Delta(\mathcal{B})^n$ we have 
\[\big(\sigma_1(G(p)),\dots,\sigma_n(G(p)\big)\in \prod_{i\in N} \mathcal{B}^i\]
so that $\Phi_0(p)\in \mathcal{P}$.

Since $G$ extends $g$, for all $(B_1,\dots,B_n)$ we have
\begin{align*}
\Phi_0(\pi(B_1,\dots, B_n)) &= \pi\big(\sigma_1(G(\pi(B_1,\dots, B_n))),\dots,\sigma_n(G(\pi(B_1,\dots, B_n))\big) \\
  &= \pi\big(\sigma_1(g(B_1,\dots, B_n)),\dots,\sigma_n(g(B_1,\dots, B_n)\big) \\
  &= \pi\big(\varphi(B_1,\dots, B_n)\big)
\end{align*}
By induction, we deduce that 
$\Phi_0^k\circ\pi=\pi\circ\varphi^k$
for all positive integer $k$. Using again that $G$ extends $g$, we get
$G\circ\Phi_0^k\circ\pi = G\circ \pi\circ \varphi^k = g\circ\varphi^k$, as desired.
\end{proof}

\subsection{Perturbations and robustness.}

The second step is to define what it means for two CS Polling Dynamics to be close one to the other.
We consider the metric induced on the full CS and on $\mathcal{P}$  by the supremum norm $\lVert\cdot\rVert$, i.e. given two CS ballot profiles $p=(p^i_B)_{i,B}$ and $\bar p=(\bar p^{\,i}_B)_{i,B}$ we set
\[ \lVert p-\bar p\rVert = \sup_{i\in N, B\in \mathcal{B}} \big\lvert p^i_B - \bar p^{\,i}_B \big\rvert.\]
This induces the usual topology on the full CS and, by restriction, on $\mathcal{P}$. We denote by $\bar B(p,\varepsilon)$ the closed ball of radius $\varepsilon\ge 0$ and center $p\in\mathcal{P}$ with respect to this metric.

We then consider the uniform metric, defined for every pair $\Phi_1,\Phi_2$ of CS Polling Dynamics by
\[D(\Phi_1,\Phi_2) := \sup_{p\in \mathcal{P}} \big\lVert \Phi_1(p)-\Phi_2(p) \big\rVert.\]

\begin{defi}
We say that a ballot profile $(B_1,\dots,B_n)$ is \emph{tie-free} (implicitly, with respect to heuristics $(\sigma_i)_i$, an information function $g$ and an extension $G$) whenever there exist a neighborhood $U$ of $\pi(B_1,\dots,B_n)$ in $\mathcal{P}$ such that for all $p\in U$:
\[W\circ G(p) = W\circ g(B_1,\dots,B_n) \quad\text{and}\quad \sigma_i(G(p)) = \sigma_i(g(B_1,\dots,B_n)) \quad\forall i\in N.\]

A ballot profile $(B^0_1,\dots,B^0_n)$ \emph{belongs to a tie-free cycle} for the Polling Dynamics $\varphi$ whenever $\varphi^k(B^0_1,\dots,B^0_n)=(B^0_1,\dots,B^0_n)$ for some positive integer $k$ (called a \emph{period} of the cycle) and the successive ballot profiles $(B^j_1,\dots,B^j_n)=\varphi^j(B^0_1,\dots,B^0_n)$ are all tie-free.
\end{defi}
Let us explain the rationale behind this definition. First, the condition $W\circ G(p) = W\circ g(B_1,\dots,B_n)$ means that the winner does not change if the considered CS ballot profile is close enough to $\pi(B_1,\dots,B_n)$. Second, all considered heuristics $\sigma:\mathcal{O}\to\mathcal{B}^n$ and their composition with the CS information function $\sigma\circ G:\mathcal{P}\to \mathcal{B}^n$ will be piecewise continuous, and thus piecewise constant (since they take value in a finite set). A state $p$ at which $\sigma\circ G$ is not continuous means that either a small change in the state can change the outcomes radically, or a small change in the outcome can change the resulting ballot profile for this heuristic. This correspond to breaking a tie, either in the information function or in the heuristic. For example in Approval Voting, when applying the Leader Rule with the above continuous choice of outcomes $\mathcal{O} = [0,1]^A$: when the first two candidates are tied, the tie-breaking rule embedded in $W$ is taken into account to determine the expected winner; and when several candidates are tied for runner-up, a tie-breaking rule must implicitly be written in the heuristic itself.

\begin{theomain}\label{t:robust}
Assume that $G$ extends $g$ and that the heuristics $\sigma_1,\dots,\sigma_n$ only pick admissible ballots. If the Polling Dynamics $\varphi$ has a tie-free cycle of period $k$, an element of which is denoted by $(B^0_1,\dots,B^0_n)$, then there exist $\varepsilon>0$ with the following property: for all CS Polling Dynamics $\Phi$ such that $D(\Phi,\Phi_0)\le \varepsilon$, for all $p^0\in\mathcal{P}$ such that $\lVert p^0-\pi(B^0_1,\dots,B^0_n)\rVert\le \varepsilon$, for all positive integer $j$,
\[W\circ G\big(\Phi^j(p^0)\big) = W\circ g\big(\varphi^j(B^0_1,\dots,B^0_n)\big).\]
If moreover $\Phi$ is continuous, then there exist $p^0\in\mathcal{P}$ such that $\lVert p^0 - \pi(B^0_1,\dots,B^0_n)\rVert\le \varepsilon$ and $\Phi^k(p^0)=p^0$.
\end{theomain}
In words, if the discrete space dynamics has a tie-free cycle, then perturbing the dynamics and starting point in the CS setting does not change the (periodic) sequence of winners; and if the perturbed CS Polling Dynamics is continuous, then this sequence of winners is furthermore realized by a cycle in $\mathcal{P}$, which is a perturbation of the  original cycle. We shall see that for some natural examples, the cycle moreover attracts a neighborhood of $p^0$ (see Examples \ref{ex:cycle} and \ref{ex:robust_ex}, which are easily generalized).

Note that Theorem \ref{t:robust} makes no rationality assumption: it applies not only to  best-response heuristics but allow arbitrary information-based heuristics, for example the pragmatist response policy of \cite{reijngoud2012response}.

\begin{proof}
For each positive integer $j$, set $B^j := (B^j_1,\dots,B^j_n) := \varphi^j(B^0_1,\dots,B^0_n)$. Since $B^j$ is tie-free, $\pi(B^j)$ admits a neighborhood $U^j$ (and we can choose the sequence $(U^j)_j$ to be $k$-periodic) such that for all $p\in U^j$,
\[W\circ G(p) = W\circ g(B^j) \quad\text{and}\quad \sigma_i(G(p)) = \sigma_i(g(B^j)) \quad\forall i\in N.\]
By definition of a neighborhood, for each $j$ there exist $\varepsilon^j>0$ such that $\bar B(\pi(B^j),\varepsilon^j)\subset U_j$; let $\varepsilon=\min(\varepsilon^0, \varepsilon^1,\dots, \varepsilon^{k-1})$.

The second tie-free condition $\sigma_i(G(p)) = \sigma_i(g(B^j))$ for all $i\in N$ and the definition of $\Phi_0$ ensures that for all $p\in U^j$, $\Phi_0(p)=\pi(B^{j+1})$.
Let $\Phi$ be a CS Polling Dynamics such that $D(\Phi,\Phi_0)\le \varepsilon$ and let $p^0\in\mathcal{P}$ such that $\lVert p^0-\pi(B^0_1,\dots,B^0_n)\rVert\le \varepsilon$. Then $\Phi_0(p^0)=\pi(B^1)$ and $\Phi(p^0) \in \bar B(\pi(B^1),\varepsilon)\subset U^1$. The same reasoning and an induction ensures that for all $j$, the state $p^j := \Phi^j(p^0)$ lies in $\bar B(\pi(B^j),\varepsilon)\subset U^j$. The first tie-free condition $W\circ G(p^j) = W\circ g(B^j)$ then gives the desired conclusion.

Assume further that $\Phi$ is continuous. Observe that $\mathcal{P}$ is a polyhedron of $\mathbb{R}^d$ for some $d$, and that balls $\bar B(s,\varepsilon)$ are thus convex sets, in particular homeomorphic to closed balls. For all $p\in \bar B(\pi(B^0),\varepsilon)$, we have $\Phi^{k-1}(p)\in U_{k-1}$ and thus $\Phi^k(p)\in \bar B(\pi(B^0),\varepsilon)$: $\Phi^k$ is a continuous map sending the topological ball $\bar B(\pi(B^0),\varepsilon)$ into itself. Brouwer's fixed point theorem ensures that there exist $p^0\in \bar B(\pi(B^0),\varepsilon)$ such that $\Phi^k(p^0)=p^0$.
\end{proof}

\begin{rema}
Theorem \ref{t:robust} applies to dynamical equilibria, by taking $k=1$.
\end{rema}

\begin{exem}\label{ex:cycle}
Consider Plurality voting with three candidates $A=\{a, b, c\}$, three voter types $N=\{X, Y, Z\}$ and preference profile
\[
\begin{array}{ccc}
X & Y & Z \\
4 & 2 & 3 \\
\hline
a & b & c \\
bc & ac & b \\
 &  & a
\end{array}
\]
since we use Plurality, $\mathcal{B}=A=\{a, b, c\}$; assume voters only consider the expected winner to choose how to vote, so that also $\mathcal{O}=A$ and $W$ is the identity map. The voting rule $f$ and the information function $g$ coincide; both take the triple of ballots $(v_X,v_Y,v_Z)\in\{a, b, c\}^3$ and return the candidate with the most votes.

For the discrete polling dynamics, we consider the following heuristics: $\sigma_X(\alpha)=a$ and $\sigma_Y(\alpha)=b$ for all $\alpha\in\{a, b, c\}$ (voters of type $X$ and $Y$ always vote for their favorite);
$\sigma_Z(a) = b$ and $\sigma_Z(\alpha)=c$ for $\alpha\in\{b,c\}$ (voters of type $Z$ vote for their favorite unless their least favorite threatens to win, in which case they settle for their second choice). The shifted polling dynamics is then given by
\[ \psi(a) = b,\quad \psi(b)=a, \quad \psi(c)=a \]
with a $2$-cycle $(a,b)$.
Let us now consider a natural perturbation of this dynamics. The full continuous space is $6$-dimensional: we need two numbers for each voting types, representing the proportions voting for two of the candidates (the third proportion being deduced from the first two). The CS information function $G$ compounds the votes in favor of each candidate, and returns the candidate with the most votes (tied broke by alphabetical order, say).

Assume $\mathcal{B}_X=\{a\}$, $\mathcal{B}_Y=\{b\}$ and $\mathcal{B}_Z=\{b, c\}$ (i.e. Voters of type $X$ and $Y$ still always vote for their favorite candidate, voters of type $Z$ can vote for either of their two preferred candidates). Then the continuous space $\mathcal{P}$ can be identified with $[0,1]$, a number $z\in [0,1]$ representing the proportion of voters of type $Z$ voting for $c$, and the restriction of $G$  to $\mathcal{P}$ is then given by
$G(z) = b$ when $z<\frac13$ and $G(z)=a$ when $z\ge \frac13$
(indeed $a$ always receives $4$ votes, while $c$ receives $3z\le 3$ and $b$ receives $5-3z$). We consider a parameter $t\in[0,1]$ representing the reluctance of voters to change their votes and define a corresponding CS heuristic by 
\[\Sigma_Z^t(\alpha,z) = \begin{cases} 1-t(1-z) &\text{when }\alpha\neq a \\ tz &\text{when }\alpha=a \end{cases}\]
(a proportion $t$ of voters of type $Z$ who should switch votes under the discrete heuristic $\sigma_Z$ do not do so under $\Sigma_Z^t$). We denote by $\varphi$ the discrete Polling Dynamics and by $\Phi_t$ the CS Polling dynamics driven by $\Sigma_Z^t$. The CS Polling Dynamics $\Phi_0$ associated to $\varphi$ is obtained as follows: given $z\in[0,1]$, we compute the expected winner $G(z)$ and determine the behavior of voters with the discrete heuristics: when $z<\frac13$, all voters of type $Z$ shall vote for $c$ in the next round so that $\Phi_0(z)=1$ while when $z\ge\frac13$, they all vote for $b$ so that $\Phi_0(z)=0$. Every starting point $z$ is immediately mapped to an extreme state, as always, and in this particular case they all are attracted to the cycle $(a,b)$. For $t\in[0,1]$, we get $\Phi_t(z) = 1-t+tz$ when $z<\frac13$ and $\Phi_t(z)=tz$ when $z\ge\frac13$ (in particular the $t=0$ matches $\Phi_0$); the graph of $\Phi_t$ consist in two lines of slope $t$, as pictured in Figure \ref{f:simple_example}. The map $t\mapsto \Phi_t$ is continuous in the uniform topology, so that Theorem \ref{t:robust} ensures that for $t$ small enough and $z_0$ close enough to $0$ or $1$, the iteration of $\Phi_t$ starting at $z_0$ alternates between the intervals $[0,\frac13)$ and $[\frac13,1]$, producing the same cycle of outcomes than the discrete Polling Dynamics.

Here we can compute exactly how large a perturbation we can afford. For all $t\in[0,\frac12)$, there is a $2$-cycle consisting of the points $\frac{t}{1+t}<\frac13$ and $\frac{1}{1+t} >\frac23$. Since the slope of the graph of $\Phi_t$ is less than one, this $2$-cycle attracts (exponentially fast) all nearby orbits. At $t=\frac12$, one can check that orbits converge to an almost-cycle $(\frac13,\frac23)$; this is not a true cycle since $\frac13$ is sent to $\frac16$ because of the tie-breaking rule, but points $z<\frac13$ close to $\frac13$ are sent near $\frac23$. One can check further that for $t>\frac12$, 
no orbit realizes the sequence $ababa\dots$ of winners (but $aaa\dots$ and $bbb\dots$ are not realized either, the winner still alternate between $a$ and $b$, but not as regularly as for $t<\frac12$).

\begin{figure}[htp]
\begin{center}
\includegraphics[scale=1]{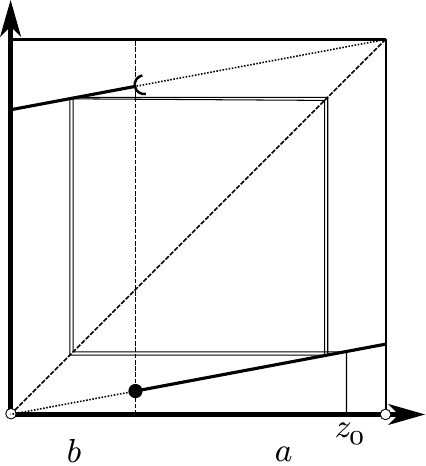}
\caption{Graph of $\Phi_t$ and an orbit. Dotted: construction lines. Dashed: diagonal and delimitation between the two possible outcomes.}\label{f:simple_example}
\end{center}
\end{figure}
\end{exem}

\begin{rema}
As mentioned by one of the anonymous referees, it seems Theorem \ref{t:robust} should apply to asynchronous iterative voting. It would need to change the definitions and notation to adjust to this case; let us briefly describe how one could proceed.

The usual discrete model for asynchronous iterative voting yields an oriented graph, with an arrow from one ballot profile to another when an admissible move takes the former to the latter. This can be modeled as a set-valued function on the set of ballot profiles. The continuous space dynamics could thus be any set-valued function on the CS space we defined here. The discrete model would  embed in that CS space (and give a particular example of CS dynamics) as follows: given a state, first we apply the CS information function $G$, and then we look at all admissible moves given this result. We can either assume all voters of a given type apply the chosen move, or only a (non-zero) ``quantum'' of them (rounding if necessary in such a way that every paths ultimately reflects a paths occurring in the original discrete model). In both cases we get a set-valued map on the CS space. Perturbations can be measured by the uniform Hausdorff distance, i.e. when $\Phi$ and $\Phi'$ are set-valued maps, $D(\Phi,\Phi')$ would be the least $\varepsilon$ such that for all state $p$, every element of $\Phi(p)$ is at distance at most $\varepsilon$ from some element of $\Phi'(p)$, and every element of $\Phi'(p)$ is at distance at most $\varepsilon$ from some element of $\Phi(p)$.

Then a variant of Theorem \ref{t:robust} can be expected, with basically the same proof: for each cycle in the discrete model that stays away from the frontiers between candidates' winning regions, if we consider a small enough perturbation we shall find a nearby CS cycle with the same sequence of winners. \end{rema}

\section{Bad cycles in Approval Voting}\label{s:AVexamples}

Let us recall some definitions relative to what can be considered ``good'' or ``bad'' collective choices among the candidates. 

Given a preference profile, a candidate $\alpha$ is said to \emph{dominate} a candidate $\beta$ (sometimes written $\alpha >\beta$, but beware that this is not a transitive relation) whenever there are strictly more voters that strictly prefer $\alpha$ to $\beta$ than voters that strictly prefer $\beta$ to $\alpha$ (i.e. in the \emph{majority graph}, assuming indifferent voters abstain, there is an arrow from $\alpha$ to $\beta$).
A candidate $\alpha$ is then said to be a \emph{Condorcet winner} whenever she dominates every other candidate; a Condorcet winner may or may not exist, but if she exists she is unique. When preferences have no ties, this is the usual definition of a (strong) Condorcet winner; when there are ties, a stronger definition could be possible: to dominate, one could ask for a majority of all voters, including abstainers. Similarly a candidate $\beta$ is said to be a \emph{Condorcet loser} whenever she is dominated by every other candidate; again, a Condorcet loser may or may not exist and is unique if she exist.

Last, we will use a stronger notion than Condorcet loser: a candidate is said to be an \emph{absolute majority loser} whenever she is a Condorcet loser and there is a strict majority of the electorate that ranks her last (possibly tied with others) in their preferences. An absolute majority loser rarely exists.

\subsection{First example: non-convergence of the Leader Rule}
\label{s:example1}

In this Section all heuristics will be given by the Leader Rule, which only relies on the identity of the expected winner and runner-up, not on the expected scores of candidates (its definition is recalled in Section \ref{s:DSPD}). We can thus consider a simple set of outcomes, $\mathcal{O}=\{\omega\rho\colon \omega\neq \rho \in A\}$ the set of order pairs of distinct candidates, and the winner map is defined by $W(\omega\rho)=\omega$; $\rho$ represents the runner-up. Preferences of voter types are without tie (i.e. in $\mathcal{L}(A)$) as needed for the Leader Rule. The information function $g$ is obtained by ranking candidates by decreasing order of approval numbers ($r_\alpha = \sum_{i \mid \alpha\in B_i} w_i$), breaking ties by alphabetical order, and then selecting the two first candidates, keeping their order. The CS information function $G$ is defined similarly, with approval numbers 
\[r_\alpha = \sum_{\substack{i\in N \\ B \mid \alpha\in B}} p^i_B w_i,\] making it an extension of $g$. The admissible ballots of type $i$ are the sincere ballots with respect to the preference order $\pref_i$; the LR ensures that heuristics only pick admissible ballots.
Our goal is to prove the following result.
\begin{theomain}\label{t:LR}
Using Approval Voting, there exists a preference profile for $4$ candidates such that:
\begin{itemize}
\item there is a Condorcet winner,
\item each voter has preferences without ties,
\item assuming voters follow the Leader Rule, the Polling Dynamics has a cycle along which the Condorcet winner is never elected. Moreover, a majority of the initial ballot profiles lead to this cycle, among those who are both sincere and expressive (i.e. approving a non-empty, strict subset of candidates).
\end{itemize}
\end{theomain}

\begin{proof}
We set $A=\{a,b,c,d\}$ and consider the following preference profile with $7$ types of voters:
\[\begin{array}{ccccccc}
T   & U   & V   & W   & X   & Y   & Z   \\
100 &1000 &1001 &1002 &1004 &1008 &1016 \\
\hline
a   & b   & c   & d   & b   & c   & d   \\
b   & a   & a   & a   & c   & d   & b   \\
c   & c   & d   & b   & a   & a   & a   \\
d   & d   & b   & c   & d   & b   & c 
\end{array}\]
and heuristics given by the Leader Rule.

The voters of type $U,V,W$ like $a$ but each prefers one of $b,c,d$ better, while the voters of type $X,Y,Z$ do not like $a$ too much but distaste one of $b,c,d$ even more, creating a cycle in the majority graph $b \spref c \spref d \spref b$  with $a$ close to tie with each of $b,c,d$. Meanwhile, voters of type $T$ prefers $a$ to any other candidate, and their moderate number suffice to make $a$ a Condorcet winner while maintaining the cycle $b,c,d$ in the majority graph. The precise numbers of types $U$ to $Z$ are chosen, for the sake of fanciness, to exclude any perfect tie (different sums of distinct powers of $2$ never agree).

Figure \ref{f:diagram} represents the shifted Polling Dynamics $\psi$ in the form of a graph. We only give the details of the computations along the cycle, others are similar.
\begin{figure}[htbp]
\begin{center}
\includegraphics[width=\linewidth]{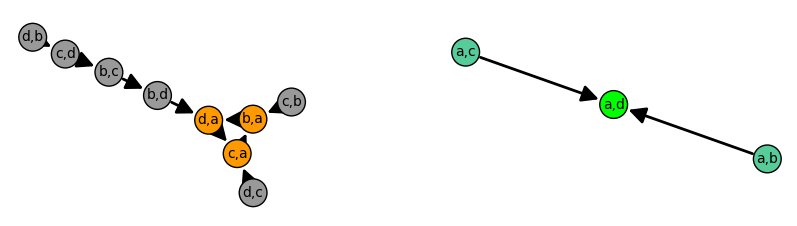}
\caption{The shifted Polling Dynamics of the first example. Outcomes where the Condorcet winner is elected are shown in green, light green for the dynamical equilibrium. The periodic outcomes not electing the Condorcet winner are shown in orange.}\label{f:diagram}
\end{center}
\end{figure}

Consider the outcome $ba$. Under the Leader Rule, it leads to the following ballot profile and results:
\begin{align*}
B_T &= \{a\} &  B_U &= \{b\}   &  B_X &= \{b\}  & r_a &=3111 & r_b &=3020\\
&& B_V &= \{c,a,d\}  & B_Y &= \{c,d,a\}                &   &      & r_c &=2009\\
&& B_W &= \{d,a\}  & B_Z &= \{d,b\}                &   &      & r_d &=4027
\end{align*}
so that $\psi(ba)=da$ -- i.e. $a$ stays second, while the previously unthreatening $d$ comes in first position. The strategic adjustments triggered by the outcome $da$ are as follows:
\begin{align*}
B_T &= \{a,b,c\}  &  B_U &= \{b,a,c\}  & B_X &= \{b,c,a\}  & r_a &=3105 & r_b &= 2104\\
&&             B_V &= \{c,a\}  & B_Y &= \{c,d\}  &   &      & r_c &= 4113\\
&&             B_W &= \{d\}  & B_Z &= \{d\}  &   &      & r_d &= 3026
\end{align*}
so that $\psi^2(ba) = \psi(da) = ca$. The corresponding strategic adjustments are then:
\begin{align*}
B_T &= \{ab\}  &  B_U &= \{b,a\}  & B_X &= \{b,c\}  & r_a &:3118 & r_b &: 4122 \\
&&             B_V &= \{c\} & B_Y &= \{c\}  &   &      & r_c &: 3013\\
&&             B_W &= \{d,a,b\}  & B_Z &= \{d,b,a\}  &   &      & r_d &: 2018
\end{align*}
so that $\psi^3(ba)=ba$.

The graph shows that more than two third of the outcomes lead to the bad cycle. A computer assisted enumeration of all possible sincere, expressive ballot profiles shows that $1353$ out of $2187$, i.e. more than $60\,\%$, lead to the bad cycle.
\end{proof}

\begin{rema}
It is worthwhile to observe (although a bit tedious to check) that in this example, \emph{all} outcomes can be obtained from a certain sincere ballot profile. For example, $dc$ is obtained if all voters cast the ballot with only their preferred candidates. That all outcomes are covered is hardly surprising: there are only $12$ of them, but $3^7=2187$ combinations of sincere ballots (excluding abstentions).
\end{rema}

\begin{rema}
In the cases of $m>4$ candidates, we can take the above example and add $n-4$ dummy candidates that appear at the end of all voters preferences. The only property that may not be preserved in this operation is the size of the basin of attraction of the $3$-cycle: for example the outcomes where one of the dummy candidates wins will all be sent by $\varphi$ to an outcome where $a$ wins, since voters would vote for all of $a, b, c, d$. However this is easily fixed by adding a voter type, for example in the case of a fifth candidate $e$ one could take $50$ voters of type $S$ with preference order $becda$. Indeed, this voters will break the tie between $a, b, c, d$ whenever $e$ is the expected winner, in favor of $b$, thus leading to the basin of attraction of the $3$-cycle.
\end{rema}

\begin{rema}
As pointed out by one of the anonymous referees, Theorem \ref{t:LR} does not hold for $3$ candidates. Assume there is a Condorcet Winner $a$ and two other candidates $b,c$. Since preferences have no ties, there are $6$ possible preferences, and we can reduce to six voter types; let $T_{\alpha\beta}$ denotes the voters ranking $\alpha$ first and $\beta$ second, and let it also denotes the number of such voters. We assume there is no equality in any partition into two groups of three voter types, which is a generic condition, to avoid dealing with ties. Since $a$ is a Condorcet winner, 
\begin{align*}
T_{ab} + T_{ac} + T_{ca} &> \frac12 > T_{bc} + T_{ba} + T_{cb} \\
T_{ab} + T_{ac} + T_{ba} &> \frac12 > T_{bc} + T_{ca} + T_{cb}
\end{align*}
One of $b$ and $c$ would win a duel against the other, and without loss of generality we assume $b$ does, i.e.
\[T_{ba} + T_{bc} + T_{ab} > \frac12 > T_{ca} + T_{cb} + T_{ac}.\]
We assume that all voters follow the Leader Rule and prove that the Polling Dynamic converges to one of the outcomes $ab$ or $ac$.

First, whatever the expected outcome is, $a$ receives either $T_{ab} + T_{ac} + T_{ca}$  or $T_{ab} + T_{ac} + T_{ba}$ votes, depending on whether $b$ is above $c$ or not, in any case more than half.

When $a$ is expected winner, $b$ and $c$ receives $T_{bc} + T_{ba} + T_{cb}<\frac12$ and  $T_{bc} + T_{ca} + T_{cb}<\frac12$ votes respectively, so that $a$ stays ahead and either $ab$ or $ac$ is an equilibrium, attracting the other one of these two outcomes.

Whenever $a$ is second, none of the voters for $a$ vote for the expected winner, and it follows that in the next round, the contender will be behind $a$. The third candidate will receive the votes of voters preferring her to the expected winner; in the case of $c$, it means either $T_{ca}+T_{cb}+T_{ac}$ or $T_{ca}+T_{cb}+T_{bc}$ votes, according to whether $b$ or $a$ is expected to win, and both are less than $\frac12$. It remains to consider the outcome $cab$, where $b$ receives the votes $T_{ba}+T_{bc}+T_{ab}$. In this case, we can have a sequence of outcomes $ca,ba$ but then $a$ wins in the next outcome. Whenever $a$ gets at least second, she will therefore become first and stay first ever after.

Now, in every case where $a$ is third, she will receive more than half the votes and $c$ will receive less than half, so that $a$ will get at least the second position in the next round, after which she will take the lead. This ends the proof of convergence to equilibrium.
\end{rema}

\begin{rema}
If we do not ask for a Condorcet winner to exist, a cycle can be produced with $m=3$ candidates, e.g. with the preference profile
\[\begin{array}{ccc}
X  & Y  & Z  \\
10 & 11 & 12 \\
\hline
a  & b  & c  \\
b  & c  & a  \\
c  & a  & b 
\end{array}\]
where starting with voters approving only their top candidate, we obtain a cycle of outcomes $cb$, $bc$, $ab$.
\end{rema}

From Theorem \ref{t:robust} we deduce that the cycle persists when we perturb the Polling Dynamics in the CS setting (to be meaningful, here we change back the set of outcomes to $\mathcal{O}=[0,1]^A$).
\begin{coro}
Consider a preference profile as given by Theorem \ref{t:LR}. Let $\Phi_0$ be the induced CS Polling Dynamics on the continuous space $\mathcal{P}$ defined by making admissible exactly the sincere ballots. 
Then there exist $\varepsilon>0$  and an open subset $U$ of $\mathcal{P}$ such that for \emph{all} CS Polling Dynamics $\Phi$ with $D(\Phi,\Phi_0)\le\varepsilon$ and all $p^0\in U$, the sequence of winners
\[W\circ G(\Phi^j(p^0)) \qquad j=0,1,2,\dots\]
is periodic and does not contain the Condorcet winner.
\end{coro}

\subsection{Second example: the possible election of an absolute majority loser}\label{s:example2}

Our second example, at the small cost of introducing a modification of the LR accounting for ties, improves on the previous one on two accounts: it necessitates only $3$ candidates, and it exhibits a cycle where an absolute majority loser could get elected.
We conserve most the setting of the previous Section: $\mathcal{O}$, $W$, $g$ and $G$ are as above. The only change is that we allow for ties in preferences and more varied heuristics.

\begin{theomain}\label{t:example2}
Using Approval Voting, there exists a preference profile on $3$ candidates with ties in preferences allowed and sincere heuristics such that:
\begin{itemize}
\item there are a Condorcet winner and an absolute majority loser,
\item the Polling Dynamics has a $2$-cycle, one of whose ballot profiles elects the absolute majority loser,
\item there is an equilibrium not electing the Condorcet winner,
\item moreover only one of four sincere, expressive ballot profiles avoid the above bad cycle and equilibrium.
\end{itemize}
\end{theomain}

\begin{proof}
We consider the following preference profile:
\[\begin{array}{cccc}
Z   & Y   & X   & W   \\
3 & 1   & 3 & 5 \\
\hline
a   & a   & b   & c   \\
b   & bc & a   & ab\\
c   &     & c   &   
\end{array}\]
with each voter type $i\in N$ using the Modified Leader Rule. In particular voters of type $W$ will not choose between $a$ and $b$, thus always casting the ballot $\{c\}$, no matter which outcome is expected. Similarly, voters of type $Y$ always cast the ballot $\{a\}$ (this last type is only introduced here for tie-breaking).

Note that $a$ is a Condorcet winner, beating $b$ with a score of $4$ to $3$ (voters of type $W$ abstaining) and $c$ with a score of $7$ to $5$. Moreover $c$ is a Condorcet loser, loosing to $b$ by $6$ to $5$; actually $c$ is a worst candidate of $7$ out of the $12$ voters, making it an absolute majority loser.

Assume as starting expected outcome the result obtained if each voter votes for every candidates she does not rank last:
\begin{align*}
B_Z   &= \{a,b\}&  B_Y &= \{a\}  &  B_X   &= \{b,a\}      &  B_W &= \{c\} & \\
&&&&&&&& r_a &= 7  &  r_b &= 6  &  r_c &= 5
\end{align*}
leading to $a$ being expected winner and $b$ expected runner-up (corresponding to the Condorcet order). This leads voters of type $Z$ and $X$ to adjust their votes: their favorite candidate is either threatened by their second-favorite (for $Z$) or have a shot at winning the election from a current runner-up position (for $X$). Consistently with their heuristics they choose to vote only for their favorite candidate:
\begin{align*}
B_Z   &= \{ a \}  &  B_Y &= \{a\}  &  B_X   &= \{b\}    &  B_W &= \{c\} & \\
&&&&&&&& r_a &= 4  &  r_b &= 3  &  r_c &= 5.
\end{align*}
The second poll thus results in a win of $c$ with $a$ as runner-up. This result induces voters of type $Z$ and $X$ to resume approving both $a$ and $b$, in order not to let $c$ be elected (again, this is a consistent application of the modified Leader Rule). This results in the same ballots being cast as in the first poll, so we get a $2$-cycle, in which the worst candidate is elected in one of the outcomes.

Figure \ref{f:diagram2} represents the shifted Polling Dynamics $\psi$ in graph form.
\begin{figure}[htbp]
\begin{center}
\includegraphics[width=\linewidth]{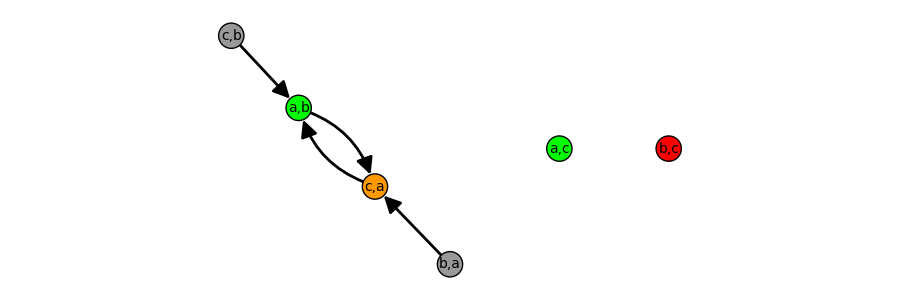}
\caption{The shifted Polling Dynamics of the second example. Light green outcomes are accessible after an arbitrary large number of iterations and elect the Condorcet winner; orange (periodic) and red (fixed point) outcomes are accessible after an arbitrarily large number of iterations but do not elect the Condorcet winner. The orange outcome is arguably the most problematic, as it elects the absolute majority loser $c$.}\label{f:diagram2}
\end{center}
\end{figure}

We see that of $6$ outcomes, $4$ lead to the cycle that can elect either the Condorcet winner $a$ or the absolute majority loser $c$ depending on whether the number of polls conducted before the election is odd or even, while the other $2$ are equilibria, one electing the Condorcet winner $a$ and the other the Condorcet runner-up $b$. There are $4$ sincere, expressive ballot profiles (two possible choices for each of $Z$ and $X$), one for each of the equilibria and each of the $2$-cycle outcomes.
\end{proof}

\begin{rema}\label{r:strongCW}
If we accept to depart further from the LR, we could avoid ties in preferences and preserve the features of the examples by splitting $W$ into two types of voters of equal size, with respective preferences $cab$ and $cba$ and heuristic to always vote $\{c\}$ (with the interpretation that these voters prefer $c$ to the other two by far, but still have a slight preference between $a$ and $b$). This shows that Theorem \ref{t:example2} also holds with the stronger definition of Condorcet winner (see beginning of Section \ref{s:AVexamples}).
\end{rema}


\begin{rema}
An argument that could be raised against this example is that it needs that a large proportion of voters having $c$ as favorite candidate would never vote for any other candidate. While this is indeed a crucial feature of the voters preferences in this example, there are two counter-arguments. First, this situation seems not all that unlikely: far-right candidates with a strong anti-establishment discourse can have many supporters who would consider all other candidates (or at least those with a chance of being elected) as part of the very same ``establishment'' and thus would only approve of $c$. Second, this can be a textbook case of manipulation by a coalition: if the minority of all voters who prefer $c$ (with preferences $cb$ or $ca$ say) gather in a coalition and decide to vote only for $c$, they get a good chance to have $c$ elected against the will of a two-third majority! Actually, these counter-arguments feed on each other: an anti-establishment discourse can serve the purpose of forming a coalition-in-practice of voters who will not express their preferences between $a$ and $b$ in order to favor $c$.
\end{rema}

Again, Theorem \ref{t:robust} ensures that this bad cycle is robust. Let us make this more concrete by considering an explicit perturbation.

\begin{exem}\label{ex:robust_ex}
We now give a CS setting extending the discrete-space example in the proof of Theorem \ref{t:example2}. Let the sets of admissible ballots be 
\[\mathcal{B}_Z=\big\{\{a\},\{a,b\}\big\}, \quad \mathcal{B}_Y=\big\{\{a\}\big\}, \quad \mathcal{B}_X=\big\{\{b\},\{a,b\}\big\}\quad \text{and}\quad\mathcal{B}_W=\big\{\{c\}\big\}\]
and consider outcomes giving the proportion of votes received by each candidates: $\mathcal{O}=[0,1]^A$.

Observe that since the simplex over a singleton is a singleton, and the simplex over a two-element set is an interval, $\mathcal{P}$ can be identified with the square $[0,1]^2$, with coordinates $(x,z)$ where $x$ (respectively $z$) represents the proportion of voters of type $X$ (respectively $Z$) casting the ballot $\{a,b\}$.
The leftmost part of Figure \ref{f:continuous} represents $\mathcal{P}$. The three lines corresponding to ties (of equation $(z=x+\frac13)$ for $a$ and $b$; $(x=\frac13)$ for $a$ and $c$; $(z=\frac23)$ for $b$ and $c$); they are concurrent at the point where all three candidates are tied, and delimit six areas, one for each possible outcome (all of which are possible under the chosen preference profile and restriction of ballots, as we can see in the figure). We denote by $A_{abc}$ the region where the outcome is $abc$ (boundary segments are attributed according to the tie-breaking rule), and similarly for the other five regions.

\begin{figure}[htp]
\begin{center}
\includegraphics[scale=1]{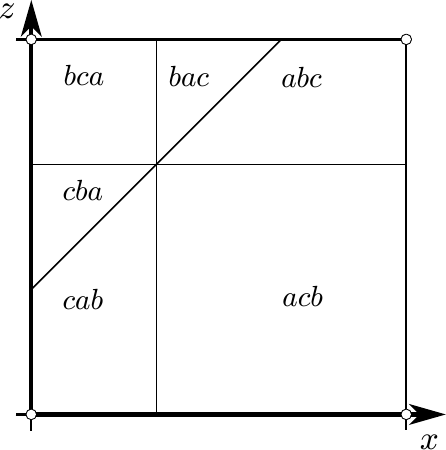}
\,\includegraphics[scale=1]{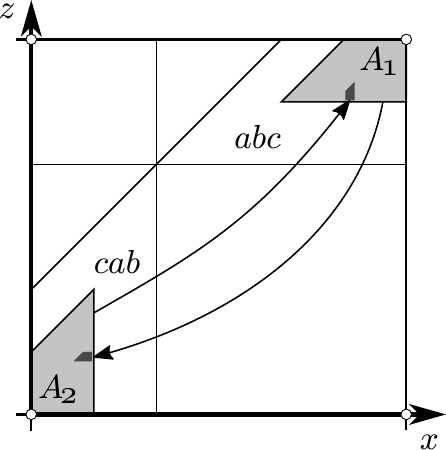}
\caption{A continuous-space example. \emph{Left:} the state space $\mathcal{P}$. The corners are the extreme states, corresponding to the four outcomes that are attainable from a ballot profile under the discrete-space Polling Dynamics.
\emph{Right:} on the same state space, any CS Polling Dynamics where $85\%$ of voters adjust their ballot according to the modified Leader Rule when candidates are separated by $4\%$ margins will send the light-grey regions $A_1$ and $A_2$ into each other (images shown in dark grey), thus ensuring a $2$-cycle with one state near the upper-rigth corner, one near the lower-left corner.}\label{f:continuous}
\end{center}
\end{figure}

Given $\delta\in [0,1]$, we let $\mathcal{T}_\delta$ be the set of $(x,z)\in\mathcal{P}$ such that $(r_a,r_b,r_c):=V(x,z)$ avoids ties by a margin at least $\delta$, i.e. $\lvert r_\alpha-r_\beta\rvert \ge \delta$ for all $\alpha\neq\beta\in A$. We consider any perturbation $(\Phi_\varepsilon)_{\varepsilon\in [0,1]}$ of the embedded discrete space Polling Dynamics such that:
\[ \Phi_\varepsilon(x,z) = (1-\varepsilon)\Phi_0(x,z) + \varepsilon (x,z) \qquad \forall (x,z)\in\mathcal{T}_\delta\]
which models the situation where after each poll where no two candidates are $\delta$-close to be tied, a proportion $\varepsilon$ of voters keep their ballots unchanged and the remaining $(1-\varepsilon)$ apply the heuristics of their types defined in the discrete setting (here the LR for types $X$ and $Z$, types $Y$ and $W$ always voting only for their top candidate). 

We will show that taking $\delta=.04$ and $\varepsilon = .15$ is small enough for the conclusion of Theorem \ref{t:robust} to apply. Figure \ref{f:continuous-numeric} shows the first five iterations of such a perturbation for $\delta=.04$ and $\varepsilon=.15$, and we can observe a bad $2$-cycle with points in $A_{abc}$ and $A_{cab}$.

\begin{figure}[htp]\begin{center}
\includegraphics[width=.32\linewidth]{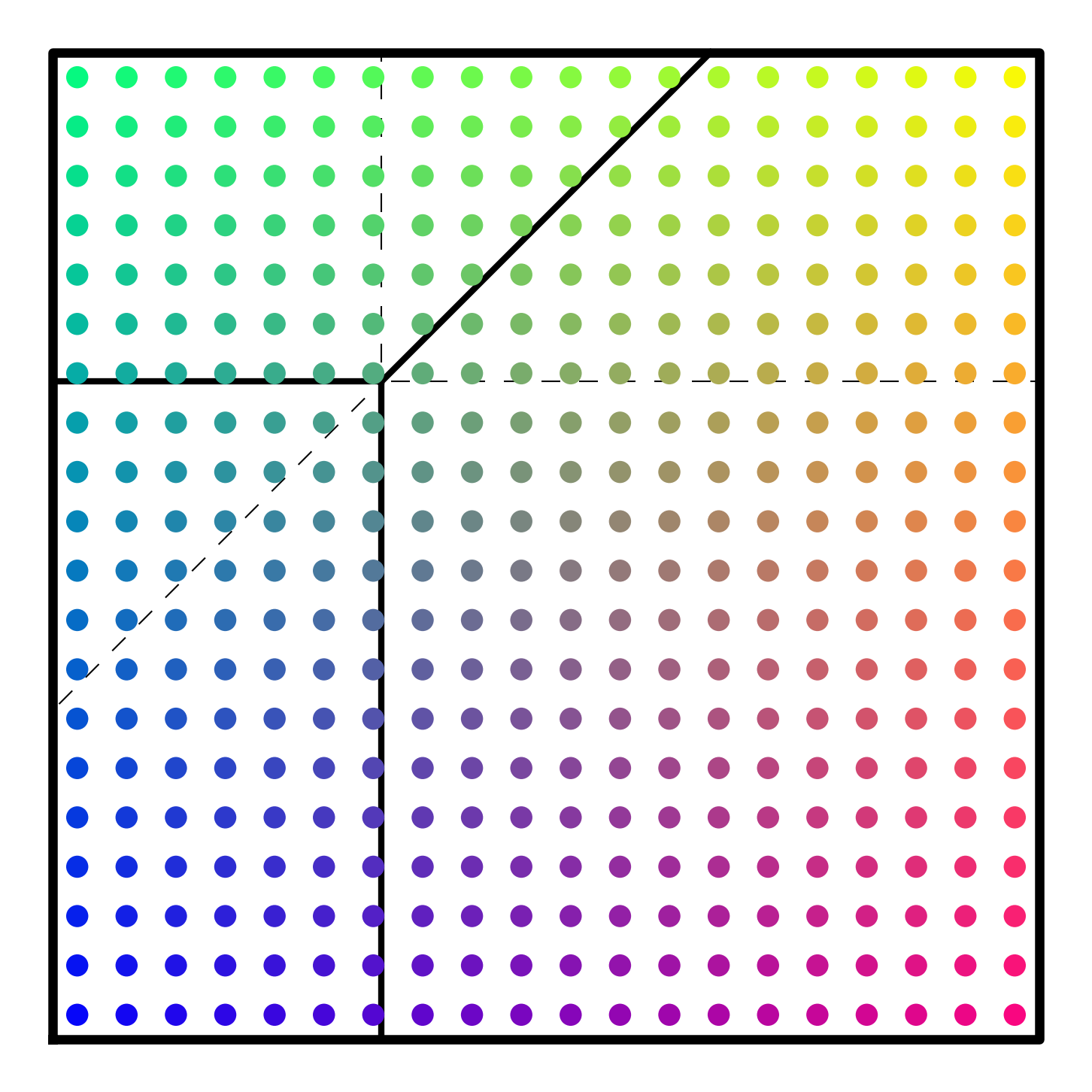}
\includegraphics[width=.32\linewidth]{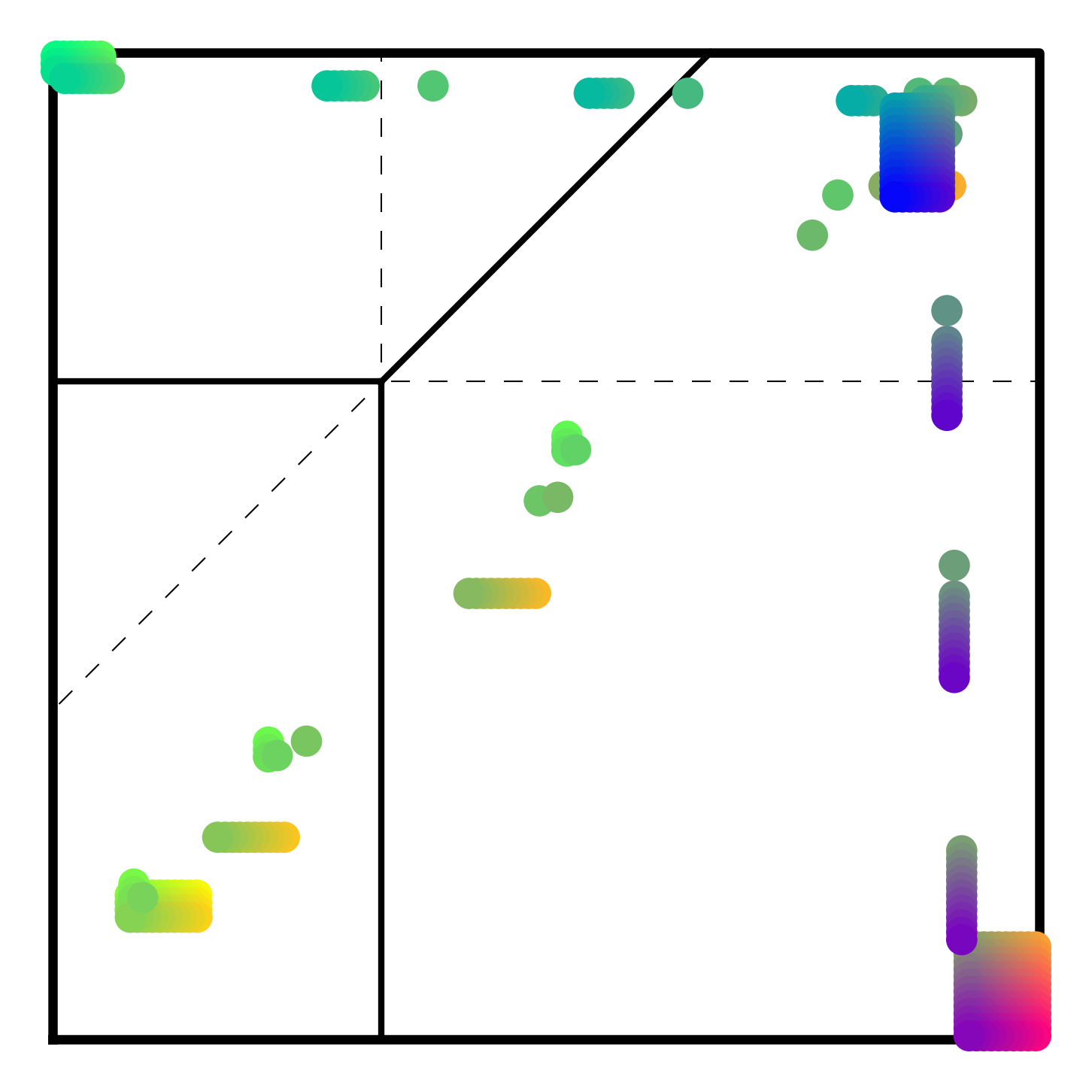}
\includegraphics[width=.32\linewidth]{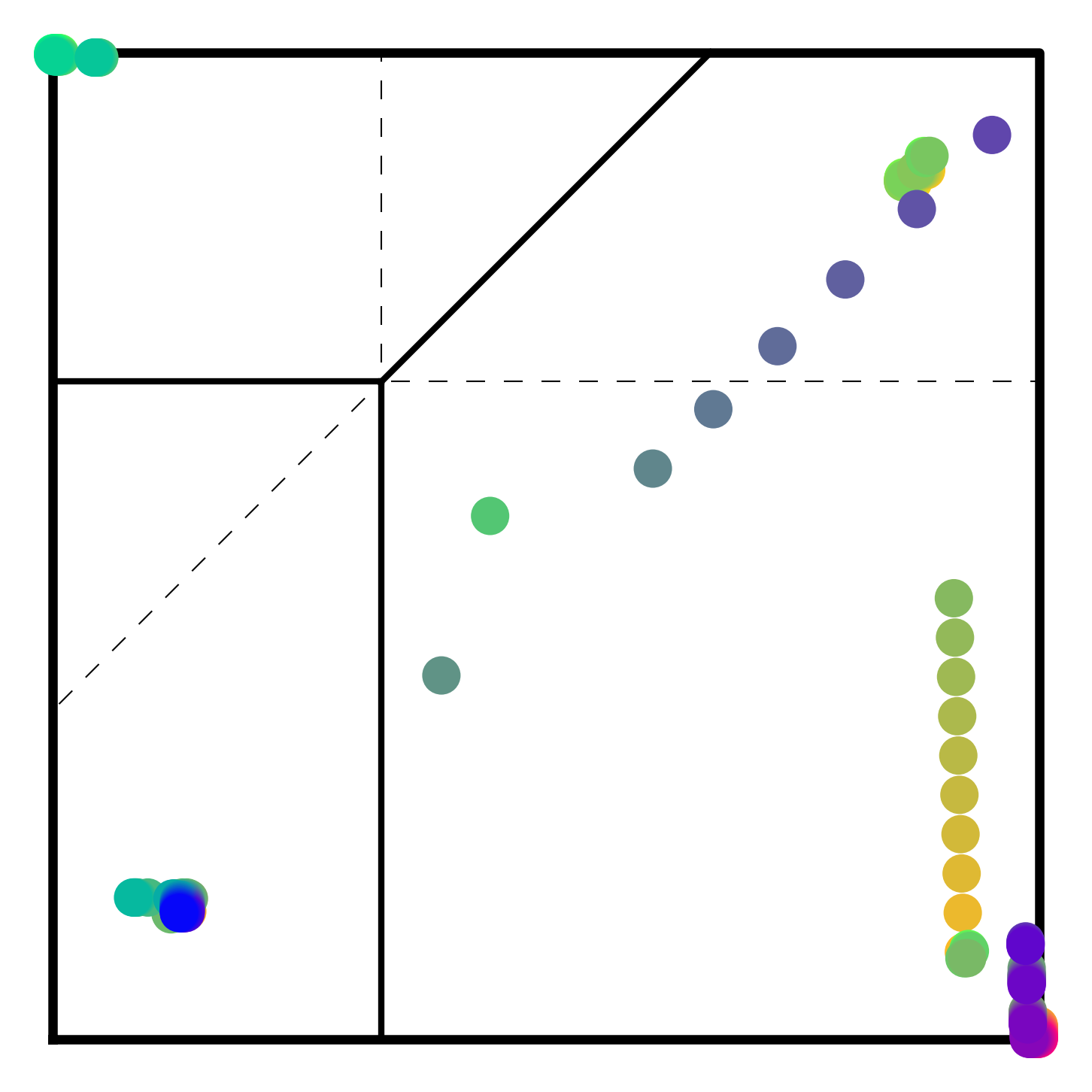}\\
\includegraphics[width=.32\linewidth]{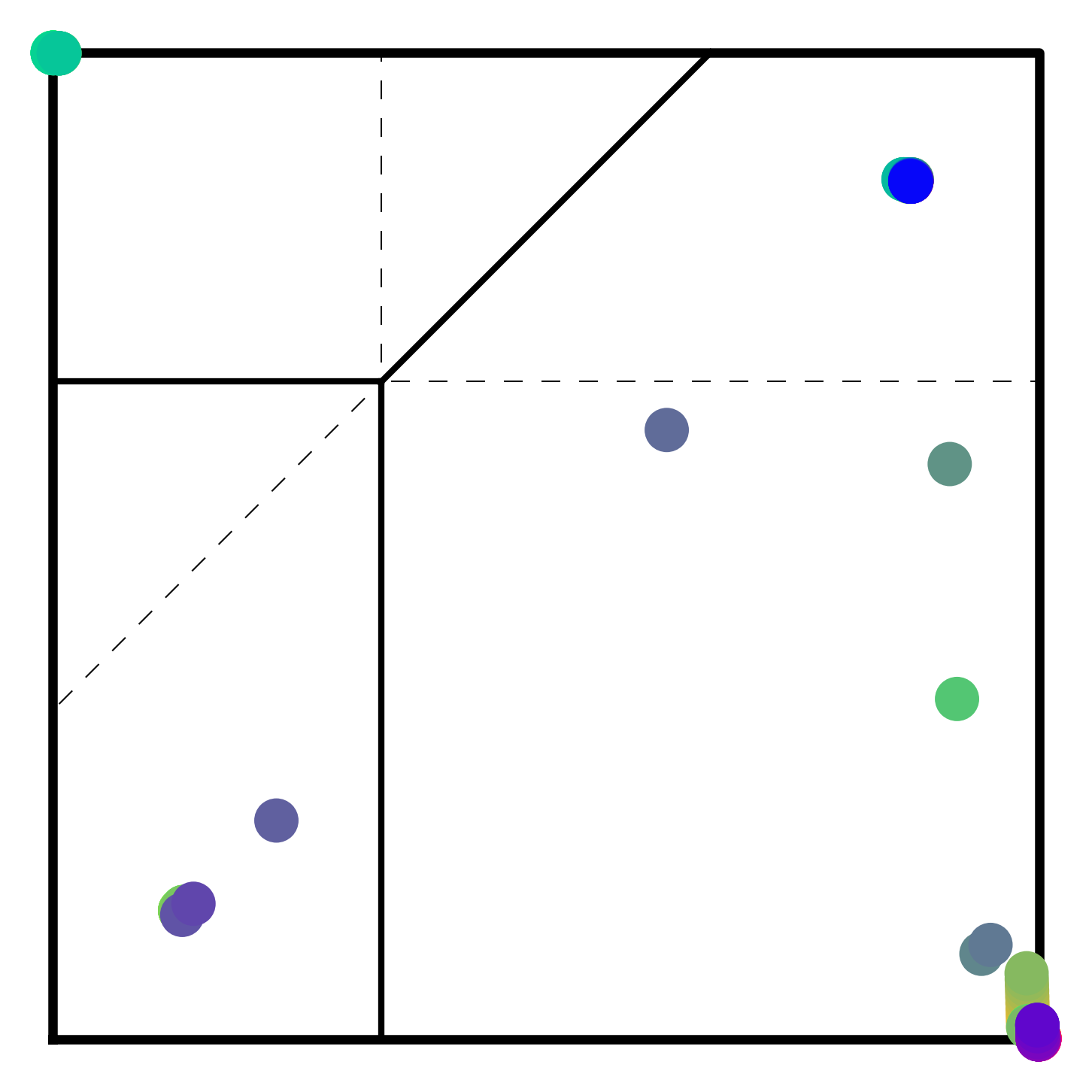}
\includegraphics[width=.32\linewidth]{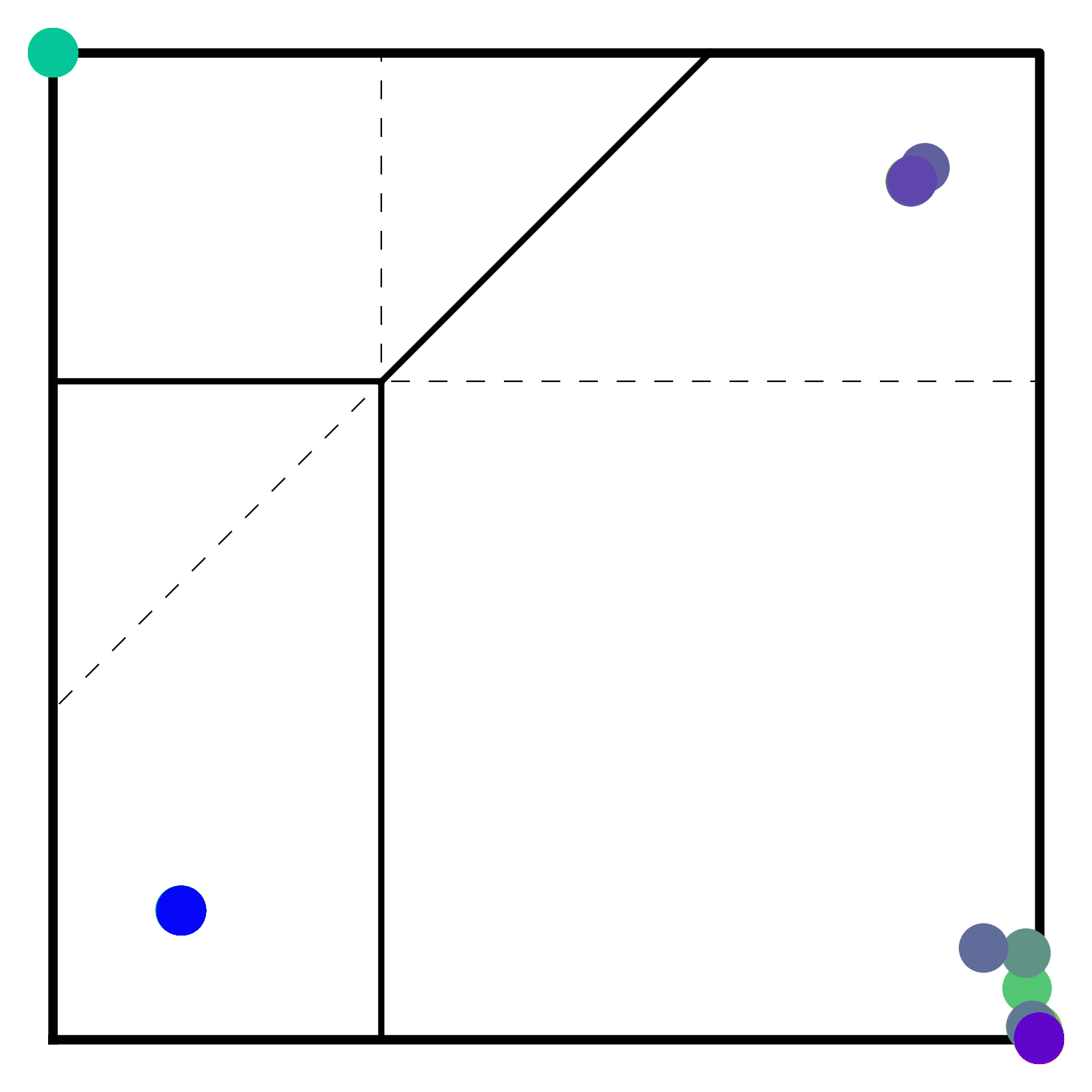}
\includegraphics[width=.32\linewidth]{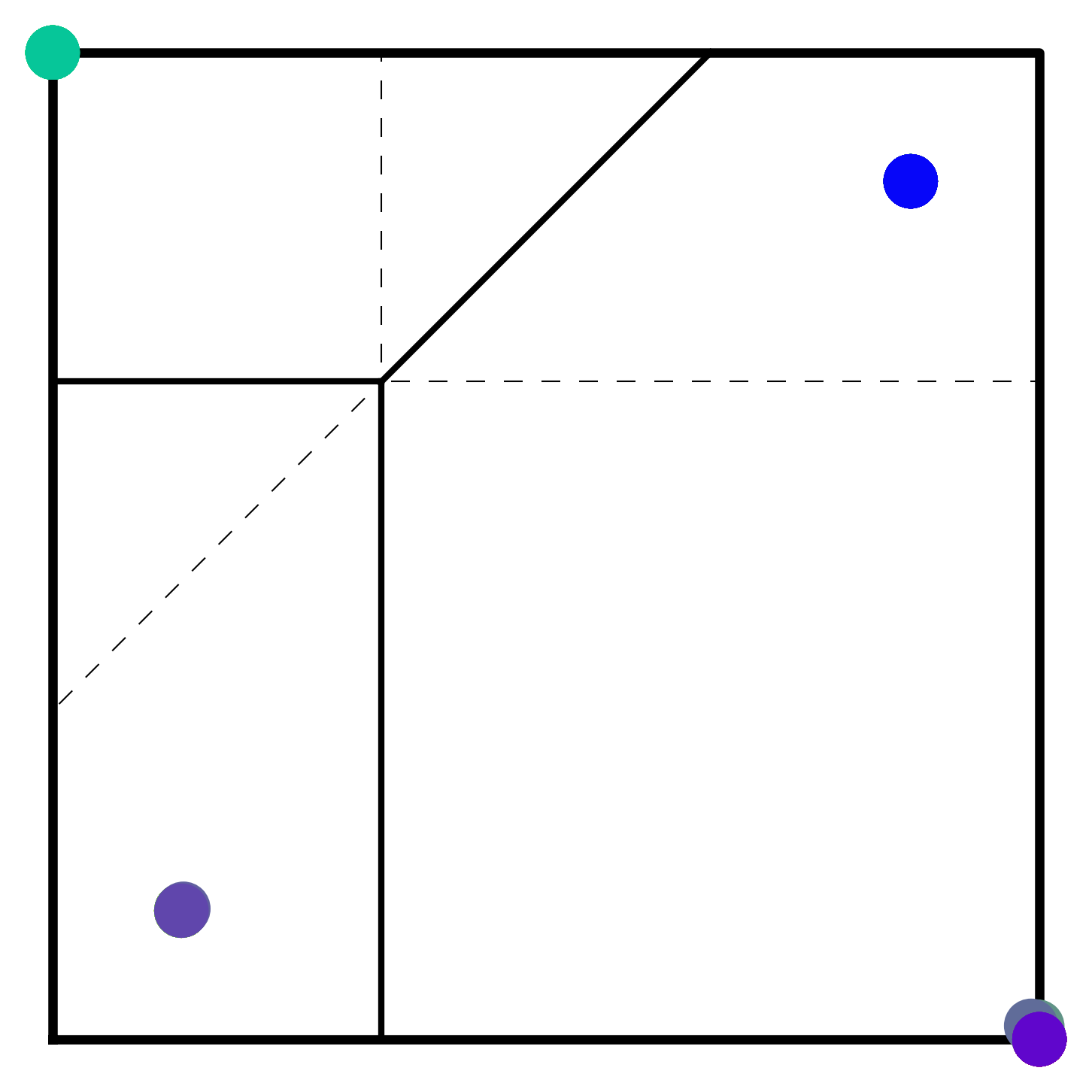}
\caption{An example of a CS Polling Dynamics obtained by perturbation of the Polling Dynamics used in the proof of Theorem \ref{t:example2}, showing robustness of the bad $2$-cycle. Top-left: initial points of the plane; then from left to right then top to bottom, the first five iterations of a continuous CPD where, whenever margins are above $4\,\%$, $85\,\%$ of voters apply the Leader Rule and $15\,\%$ keep their ballot unchanged. Points are drawn with the color corresponding to initial position (darker above for better readability). After the first iteration, already most points are attracted to the periodic points near the four corners. The top-left and bottom-right corners are attracting fixed points, and there is an attractive orbit of period $2$ near the other corners.}
\label{f:continuous-numeric}
\end{center}\end{figure}

The region $A_1\subset A_{abc}$ delimited by the lines of equations $(z<x+\frac16)$ and $(z>\frac56)$ results in the outcome $abc$ with margins of $\frac{1}{24}$th of the electorate, i.e. slightly over $4\%$. Similarly, the region $A_2\subset A_{cab}$ delimited by the lines of equations $(z<x+\frac16)$ and $(x<\frac16)$ result in the outcome $cab$ with the same margins.
We have $\Phi_{.15}(x,z)=(.15 x, .15 z)$ whenever $(x,z)\in A_1$ and $\Phi_{.15}(x,z)=(.85+.15 x,.85+.15 z)$ whenever $(x,z)\in A_2$. One easily checks that $\Phi_{.15}(A_1)\subset A_2$ and $\Phi_{.15}(A_2)\subset A_1$. It follows that $\Phi_{.15}^2(A_1)\subset A_1$, and since $\Phi_{.15}^2$ is a contraction on $A_1$ (of ratio $.15^2$, much small than $1$), the Contraction Mapping Theorem ensures it has a fixed point $(x_1,z_1)\in A_1$. Then the orbit of $(x_1,z_1)$ is a $2$-cycle with one state inducing the outcome $abc$ and the other inducing $cab$. Moreover any element of the open set $A_1\cup A_2$ is attracted to this cycle exponentially fast.
\end{exem}

\subsection{Numerical study}\label{s:experimental}

The examples of the previous section are interesting from a theoretical point of view, showing that the existence of equilibria and the quality of candidates they elect do not by themselves suffice to ensure a good outcome. But from a practical point of view, it could be that the bad cycles we exhibited are so rare that they do not matter too much.
In this section, we present experimental estimates of the frequency of bad cycle in preference profiles where a Condorcet winner exists. We continue to follow the setup of Sections \ref{s:example1} and \ref{s:example2}, using either the Leader Rule or the modified Leader Rule. This leaves us to model the preference profile, and for this we compare several ``cultures''.

Our results complement for example the findings of  \cite{lepelley2000computer} and \cite{laslier2010silico}. Compared to the latter, for each preference profile we determine the existence of a bad cycle anywhere in the space of outcomes, and of arbitrary length.

\paragraph*{Main experimental setup}

We fix the number of candidates to $m=6$, and the number of voter types to $n=20$, $N=\{1,\dots,20\}$, a compromise between complexity of the preference profile and computational power limitations. The influence of these numbers is tested at the end of the section.

Each voter type is assigned a number of voters uniformly drawn in $[0,1]$ and is assigned a preference modeled by a linear order on $\bar A = A\cup\{\limit\}$; when we use the Leader Rule, $\limit$ is simply ignored, but when we use its modified version, the candidates ranked beyond $\limit$ are considered tied for the last rank and thus never put on this voters' ballots. For example, the order $a\spref b\spref\limit\spref c\spref d\spref e\spref f$ corresponds to the preference $abcdef$ when the LR is used, and to $ab(cdef)$ when its modified version is used.

We consider the following ways of constructing preferences:
\begin{itemize}
\item \emph{Impartial Culture}: each voter type draws uniformly an element $\mathcal{L}(\bar A)$,
\item \emph{$d$-dimensional culture}: each candidate and each voter types is given a uniform random position in $[0,1]^d$ where each coordinate represents a ``political axis''; the preferences of a voter type is obtained by sorting the candidates by increasing distance to the voter type, using the $\ell^1$ metric
\[ d((x_1,\dots, x_d),(y_1,\dots,y_d)) = \sum_{i=1}^d \lvert x_i-y_i\rvert\]
that sums the levels of disagreement along the various axes; $\limit$ is then inserted by drawing a random ``distance'' to the voter type, with the same law than the distance between two uniform points in $[0,1]^d$
\end{itemize}

We ran experiments for each combination of a heuristic among the Leader Rule and its modification and each political culture among the Impartial Culture and the $d$-dimensional cultures with $d\in\{1,2,3,400\}$. In each case, we generated $100\,000$ independent preference profiles and recorded whether a Condorcet winner exists, and if yes whether the Polling Dynamics has a ``bad cycle'' (or bad equilibrium), i.e. where some of the ballot profiles do not elect the Condorcet winner. Each run took a couple of hours on a single core of a modern CPU. The results are provided in Table \ref{tab:experiments}.
The first take-away is that in practice
\begin{important}
The Leader Rule is very effective in electing the Condorcet winner.
\end{important}
Only the Impartial Culture witnesses slightly larger odds of not electing her while structured cultures very rarely produce bad cycles, the linear ($d=1$) culture having produced none in $100\,000$ attempts (this is no accident, see Section \ref{s:1d}). 

\begin{table}[ht]
\begin{center}
\begin{tabular}{|l||p{1.6cm}|p{1.6cm}|p{1.6cm}|p{1.6cm}|p{1.6cm}|}
\hline
\diagbox{Heuristic}{Culture} & Impartial& $d=1$ & $d=2$ & $d=3$ & $d=400$ \\
\hline
\hline
Leader Rule  & {\scriptsize($70\%$)}  $1.3\%$ & {\scriptsize($100\%$)}  $0.0\%$ & {\scriptsize($90\%$)}  $0.2\%$ & {\scriptsize($87\%$)}  $0.2\%$  & {\scriptsize($85\%$)}  $0.3\%$ \\
\hline
Modified Leader Rule & {\scriptsize($75\%$)}  $6.3\%$  &{\scriptsize($92\%$)} \hspace{1ex}  $15\%$   & {\scriptsize($89\%$)}  $7.8\%$  & {\scriptsize($88\%$)}  $6.0\%$  & {\scriptsize($87\%$)}  $3.0\%$ \\
\hline
\end{tabular}
\caption{Experimental results with $6$ candidates and $20$ voter types. In small, the proportion of preference profiles where a Condorcet winner exists; in normal size, the proportion of preference profiles with a bad cycle or equilibrium, among preference profiles having a Condorcet winner (both rounded). The $95\%$ confidence Wilson score interval gives a deviation of less than $0.23$ percentage point for all values; the deviation is even less than $0.04$ percentage point for the value $0.3\%$ and below.}\label{tab:experiments}
\end{center}
\end{table}

On the contrary, when voters apply the Modified Leader Rule and thus, with the kind of preferences used here, refuse to approve of candidates below their personal threshold no matter what, bad cycles are more common. They never dominate, but they culminate \emph{precisely} in the case of a linear culture, with $15\%$ of preference profiles leading to a bad cycle in the Polling Dynamics. The situation that is most favorable to the Leader Rule, is also the worst one for its modification! 

Imagine that voters of a given type are aware of a Condorcet winner quite low on their preferences, and that they know that when everyone uses the Leader Rule, the election of the Condorcet winner is very likely. By not applying the Leader Rule and instead refuse to approve candidates below a threshold, they can get significant odds to obtain the election of a candidate they prefer to the Condorcet winner:
\begin{important}
When some voters expect a candidate they do not like to be a Condorcet winner, their best interest can be not to apply the Leader Rule, at least not below a certain threshold.
\end{important}
At first sight, this may seem difficult to reconcile with the optimality of the Leader Rule proven by Laslier \cite{Laslier2009leader}. The point is that Laslier considers that each voter assumes imperfect recording of ballots and think as if she would be the only one to change her ballot, trying to optimize the expected outcome. Our numerical results show that whenever voters expect other voters of the same type to also change their votes strategically, the Leader Rule might not be their best heuristic.

This experiment also gave the opportunity to find especially problematic Polling Dynamics, see Figures \ref{f:experiments-relaxed-2} and \ref{f:experiments-MLR-d1} where green outcomes correspond to the Condorcet winner being elected, orange outcomes are part of a cycle but do not elect the Condorcet winner, and red outcomes are equilibria not electing the Condorcet winner.

%
%

\begin{figure}[htp]
\centering
\includegraphics[width=\linewidth]{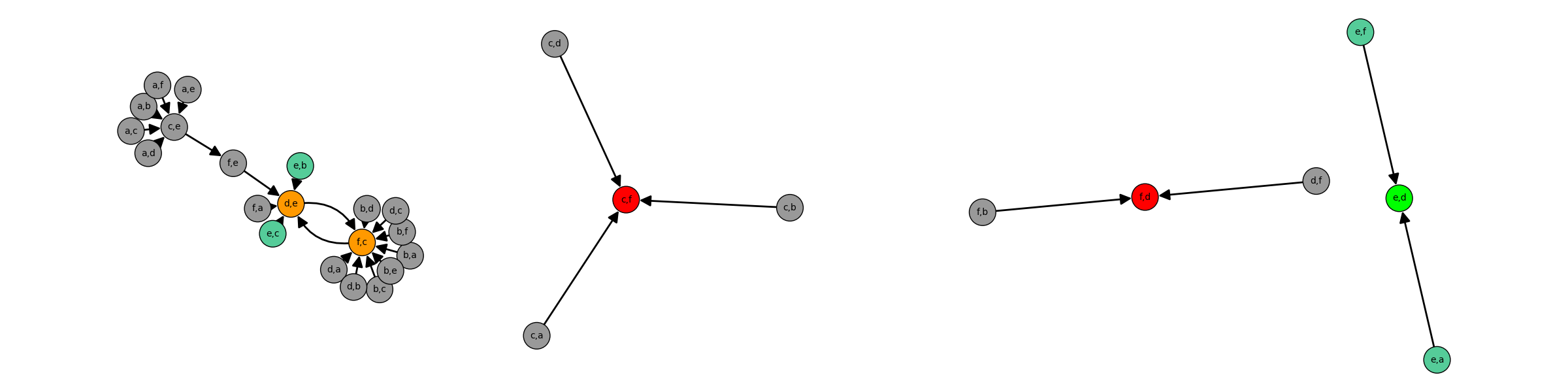}
\caption{A bad cycle and two equilibria not electing the Condorcet winner under the modified Leader Rule, produced with the Impartial Culture. Only $3$ outcomes out of $30$ end up electing the Condorcet winner, and there are outcomes where the Condorcet winner is elected, but that lie in the basin of attraction of the bad cycle (this never happens with the Leader Rule).}\label{f:experiments-relaxed-2}
\end{figure}

\begin{figure}[htp]
\centering
\includegraphics[width=\linewidth]{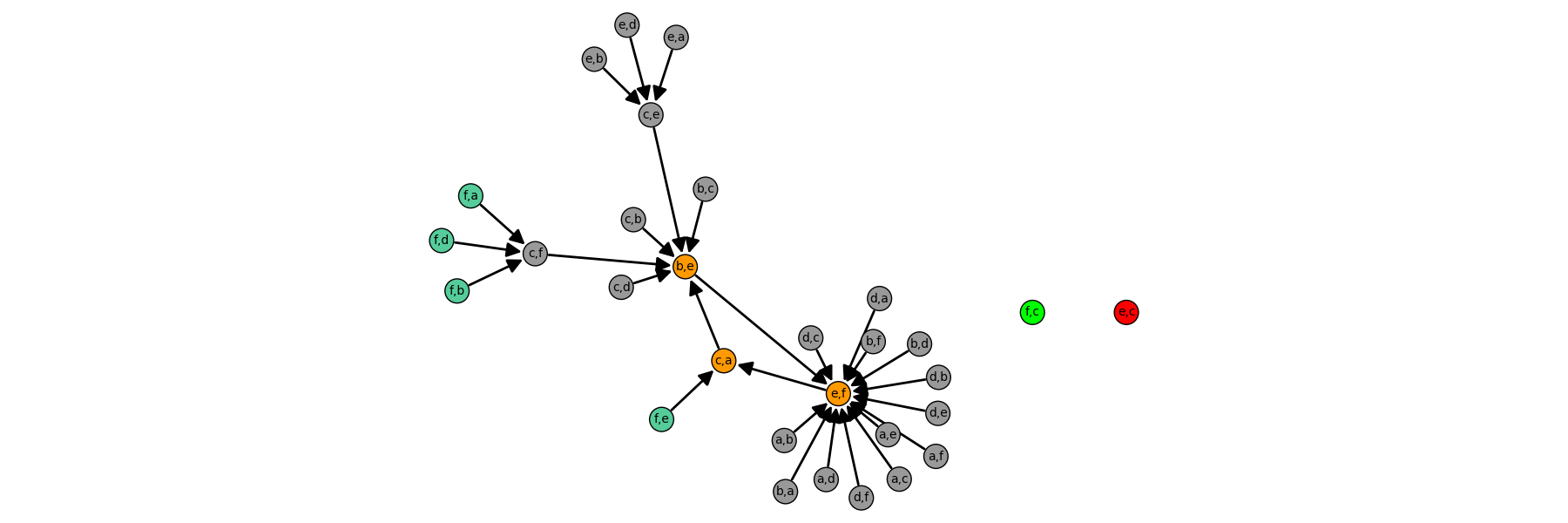}
\caption{An example in the one-dimensional culture with the modified Leader Rule: the bad cycle attracts all outcomes but two, including most outcomes electing the Condorcet winner.}\label{f:experiments-MLR-d1}
\end{figure}

\paragraph*{Effect of the numbers of candidates and voter types}

We report in Table \ref{tab:experiments2} more modest numerical experiment to assert the influence of the number of candidates and the number of voters.
We see that the number of voter types has relatively small influence, at least in the range tested, with the exception of the $1$-dimensional culture with the modified LR, where a larger voter type makes bad cycles and equilibria  more likely. The main take away, valid in all considered conditions except the $1$-dimensional culture with the LR, is that:
\begin{important}
A larger number of candidates seems to make bad cycles and equilibria more common among situations where a Condorcet Winner exists.
\end{important}
Computational power prevented us to explore larger number of candidates with samples large enough to draw conclusions.

\begin{table}[htp]
\begin{center}{\footnotesize
\begin{tabular}{|l||p{1.3cm}|p{1.3cm}|p{1.3cm}|p{1.3cm}|p{1.3cm}|p{1.3cm}|}
\hline
\diagbox{Cul. \\\& Strat.}{$\lvert\mathcal{C}\rvert$ \&  $\lvert\mathcal{T}\rvert$} & $3$, $10$ & $4$, $10$ & $6$,  $10$ & $\boldsymbol{6}$, $\boldsymbol{20}$ & $6$, $30$ & $8$, $20$  \\
\hline
\hline
Impartial, LR  & {\scriptsize ($92\%$)} $0\phantom{\%}$  & {\scriptsize ($84\%$)} $0.3\%$  &  {\scriptsize ($71\%$)} $1.0\%$  & {\scriptsize ($\boldsymbol{70\%}$)} $\boldsymbol{1.3\%}$ & {\scriptsize ($69\%$)} $1.3\%$  &   {\scriptsize ($60\%$)} $2.0\%$ \\
\hline
$1$-dim., LR &{\scriptsize ($100\%$)} $0\phantom{\%}$ &{\scriptsize ($100\%$)} $0\phantom{\%}$  & {\scriptsize ($100\%$)} $0\phantom{\%}$ & {\scriptsize ($\boldsymbol{100\%}$)} $\boldsymbol{0}$ & {\scriptsize ($100\%$)} $0\phantom{\%}$  & {\scriptsize ($100\%$)} $0\phantom{\%}$\\
\hline
$2$-dim., LR  &{\scriptsize ($97\%$)} $0\phantom{\%}$ &{\scriptsize ($94\%$)} $0.05\%$  & {\scriptsize ($90\%$)} $0.3\%$ & {\scriptsize ($\boldsymbol{90\%}$)} $\boldsymbol{0.2\%}$ & {\scriptsize ($91\%$)} $0.05\%$  & {\scriptsize ($87\%$)} $0.5\%$\\
\hline
$3$-dim., LR  &{\scriptsize ($96\%$)} $0\phantom{\%}$ &{\scriptsize ($92\%$)} $0.1\%$  & {\scriptsize ($85\%$)} $0.4\%$ & {\scriptsize ($\boldsymbol{87\%}$)} $\boldsymbol{0.2}\%$ & {\scriptsize ($89\%$)} $0.1\%$  & {\scriptsize ($84\%$)} $0.4\%$\\
\hline
$400$-dim., LR  &{\scriptsize ($95\%$)} $0\phantom{\%}$ &{\scriptsize ($90\%$)} $0.1\%$  & {\scriptsize ($81\%$)} $0.3\%$ & {\scriptsize ($\boldsymbol{85\%}$)} $\boldsymbol{0.3\%}$ & {\scriptsize ($89\%$)} $0.3\%$  & {\scriptsize ($81\%$)} $0.6\%$\\
\hline
Impartial, modified LR &{\scriptsize ($94\%$)} $2,1\%$ &{\scriptsize ($88\%$)} $3.5\%$  & {\scriptsize ($77\%$)} $5,9\%$ & {\scriptsize ($\boldsymbol{75\%}$)} $\boldsymbol{6.3\%}$ & {\scriptsize ($75\%$)} $6.4\%$  & {\scriptsize ($66\%$)} $8.0\%$\\
\hline
$1$-dim., modified LR &{\scriptsize ($98\%$)} $2.6\%$ &{\scriptsize ($96\%$)} $5.4\%$  & {\scriptsize ($90\%$)} $10\%$ & {\scriptsize ($\boldsymbol{92\%}$)} $\boldsymbol{15\%}$ & {\scriptsize ($92\%$)} $19\%$  & {\scriptsize ($88\%$)} $20\%$\\
\hline
$2$-dim., modified LR &{\scriptsize ($97\%$)} $1.5\%$ &{\scriptsize ($95\%$)} $3.4\%$  & {\scriptsize ($89\%$)} $6.2\%$ & {\scriptsize ($\boldsymbol{89\%}$)} $\boldsymbol{7.8\%}$ & {\scriptsize ($91\%$)} $7.8\%$  & {\scriptsize ($84\%$)} $11\%$\\
\hline
$3$-dim., modified LR &{\scriptsize ($97\%$)} $1.4\%$ &{\scriptsize ($93\%$)} $2.7\%$  & {\scriptsize ($86\%$)} $5.3\%$ & {\scriptsize ($\boldsymbol{88\%}$)} $\boldsymbol{6\%}$ & {\scriptsize ($90\%$)} $5.9\%$  & {\scriptsize ($84\%$)} $8.2\%$\\
\hline
$400$-dim., modified LR &{\scriptsize ($96\%$)} $0.9\%$ &{\scriptsize ($92\%$)} $2.1\%$  & {\scriptsize ($84\%$)} $3.4\%$ & {\scriptsize ($\boldsymbol{87\%}$)} $\boldsymbol{3.0\%}$ & {\scriptsize ($89\%$)} $2.6\%$  & {\scriptsize ($82\%$)} $4.1\%$\\
\hline
\end{tabular}}
\caption{Complementary experimental results to check robustness. Samples have $6\,250$ pseudo-random preference profiles (except for $6$ candidates and $20$ voter types, where we reported the previous results with larger samples), yielding typically $4$ times larger confidence intervals than samples of size $100,000$ and leaving the expected deviations under the percentage point. In small, the proportion of preference profiles where a Condorcet winner exists; in normal size, the proportion of preference profiles with a bad cycle or equilibrium, among profiles having a Condorcet winner (both rounded).}\label{tab:experiments2}
\end{center}
\end{table}

\subsection{Convergence of the Leader Rule in one-dimensional cultures}\label{s:1d}

In view of the above experiments, it seems that the one-dimensional culture combines well with the Leader Rule. Black's Median Voter Theorem \cite{black1958theory} implies the existence of a Condorcet winner, and from \cite{Laslier2009leader} it follows that there exist equilibria and that all equilibria elect $\omega$. Here we prove that in this particular culture, we moreover have convergence of the Polling Dynamics.

\begin{defi}
We say that a preference profile \emph{can be modeled by a one-dimensional culture}
when there exist a \emph{positional mapping}
\[ x : A\cup N \to \mathbb{R} \]
with the following property: the preferences $\pref$ of any voter type $i\in N$ is given by
\[\forall \alpha\neq\beta\in A,\qquad \alpha \pref_i \beta \quad\text{if and only if}\quad \lvert x(\alpha) - x(i)\rvert < \lvert x(\beta) - x(i)\rvert.\]
For simplicity, we assume further that $\lvert x(\alpha) - x(i)\rvert \neq \lvert x(\beta) - x(i)\rvert$ for all $\alpha\neq\beta$ and all $i$, so that all preferences are linear orders (without ties); and we assume that there is no partition of the voter types in two groups that have equal total number of voters, so that no duel between two candidates would end up in a tie (these are \emph{generic} conditions: they are stable under small perturbation of the data and any data can be approximated arbitrarily close by data satisfying those conditions).
\end{defi}

The goal of this section is to prove the following result.
\begin{theo}\label{t:1d-convergence}
Under Approval Voting, if the preference profile can be modeled by a one-dimensional culture and voters apply the Leader Rule, then the Polling Dynamics converges to an equilibrium (which elects the Condorcet winner, as prescribed by Laslier's Theorem).
\end{theo}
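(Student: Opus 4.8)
The plan is to reduce the one-dimensional polling dynamics to a monotone motion of the expected winner toward the median voter. First I would fix the geometry: writing $m$ for the weighted median of the positions $x(\Omega)$, $\Omega\in\mathscr{T}$, the genericity hypotheses guarantee that $m$ is well defined and that no two candidates are placed symmetrically about it, so the candidates are \emph{strictly} ordered by their distance $\lvert x(\alpha)-m\rvert$ to the median; write this order as $c_1,c_2,\dots$, so that $c_1=\omega$. The single fact I would use repeatedly is that in a duel the candidate closer to $m$ wins: the voters preferring one of two candidates are exactly those on its side of the midpoint of the two positions, a half-line, which contains a majority precisely when $m$ lies in it. Thus $c_1,c_2,\dots$ is a Condorcet order and $\omega$ is the Condorcet winner, recovering Black's theorem.

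The second ingredient is a reformulation of the Leader Rule valid in any dimension: for an expected outcome $o$ with winner $w=w(o)$ and runner-up $r=r(o)$, a voter approves $\alpha\neq w$ exactly when $\alpha>_\pi w$, and approves $w$ itself exactly when $w>_\pi r$. Hence the score that $\PD$ assigns to any $\alpha\neq w$ is its duel score against $w$, while the score of $w$ is its duel score against $r$. Combined with the first step, this shows that the candidates receiving a majority are precisely those strictly closer to $m$ than $w$, together with $w$ itself when $w$ beats $r$. Writing $w=c_k$ and assuming $x(w)>m$ (the opposite case being symmetric), the candidates strictly closer to $m$ than $w$ are exactly those with position in the interval $(2m-x(w),x(w))$, all lying to the left of $w$; on this interval the score $\alpha\mapsto G\bigl(\tfrac{x(\alpha)+x(w)}{2}\bigr)$, with $G$ the cumulative voter distribution, is increasing in $x(\alpha)$. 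In particular such candidates exist iff $k\ge 2$, and whenever the new winner differs from $w$ it lies in this interval, hence is \emph{strictly closer to $m$} than $w$.

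The core of the proof is then to show that the expected winner moves strictly toward $m$ within at most two steps unless it is already $\omega$. If $w=\omega$, every other candidate loses its duel against $\omega$ while $\omega$ wins against $r$, so $\omega$ stays the winner and its rank ($=1$) never changes. If $w=c_k$ with $k\ge2$: either $w$ loses to $r$, so $w$ fails to get a majority and the new winner sits in the interval above, strictly closer to $m$; or $w$ beats $r$, in which case $w$ may remain winner for one step, but the new runner-up is then forced to be the top scorer of the interval, a candidate closer to $m$ than $w$, and at the following step $w$ loses its duel against that runner-up and the winner jumps into the interval. In all cases the rank of the winner is non-increasing and drops at least once every two steps, so, being a positive integer, it reaches $1$ in finite time and the winner becomes $\omega$ forever. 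Once the winner is $\omega$, the scores of the other candidates are their time-independent duel scores against $\omega$, so after one further step the runner-up settles to the fixed candidate $r^\star$ doing best against $\omega$; the state $(\omega,r^\star)$ is then a genuine fixed point, electing $\omega$ as Laslier's theorem demands.

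The main obstacle I expect is precisely the ``winner stays put'' sub-case, i.e. excluding a cycle in which some $c_k$, $k\ge2$, remains the expected winner indefinitely. The resolution rests on tracking the runner-up and observing that a non-Condorcet winner can only survive a step at the price of inheriting a runner-up closer to $m$, which defeats it at the next step. Turning this into a clean statement, and checking that equalities of scores (resolved by the fixed tie-breaking rule, and kept away from the decisive threshold $\tfrac12$ by the no-equal-partition assumption) never allow the winner to leave the interval $(2m-x(w),x(w))$, is the delicate bookkeeping; the remainder is just bounded monotonicity of an integer potential.
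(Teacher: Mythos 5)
Your proof is correct, and its overall shape matches the paper's: a potential argument forcing the expected winner to move monotonically toward the center, followed by the observation that once the Condorcet winner leads, the runner-up stabilizes after one more step (this last step is identical in both proofs). The key lemma, however, is genuinely different, and in a way that matters. The paper's lemma measures progress by the distance $\lvert x(\cdot)-x(\mu)\rvert$ to the \emph{Condorcet winner's position} and asserts that at each step either the winner, or (if the winner stays put) the runner-up, is replaced by the candidate \emph{adjacent} to the old winner in the direction of $\mu$, giving a lexicographic decrease at every single step. Your lemma instead keys everything to the \emph{median} $m$: the candidates receiving a strict majority are exactly those strictly closer to $m$ than the current winner $w$ (plus $w$ itself when it beats the runner-up), so the new winner is either $w$ or lies in that set, and the winner's rank in the duel order drops within two steps. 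Your coarser localization is actually more robust. The paper's precise claim can fail when two candidates receive \emph{identical approval scores} -- a degeneracy the genericity hypotheses do not exclude, since it can arise from identical approval sets rather than from an equal partition of voter types: with candidates at positions $0$, $1$, $10$, voters $49$ at $0.4$, $2$ at $5.4$, $49$ at $6$ (median $5.4$, Condorcet winner at $1$), the outcome ``winner at $0$, runner-up at $10$'' gives both other candidates a score of $51$, and the alphabetical tie-break may crown the candidate at $10$, which is \emph{farther} from $x(\mu)$ than the old winner -- contradicting the paper's lemma, though not yours, since the candidate at $10$ is closer to $m$ than the one at $0$. Because ``strictly closer to $m$'' is exactly the duel-winning relation and the no-equal-partition assumption keeps all scores strictly away from $N/2$, your interval-based claim is insensitive to how ties among majority-receiving candidates are broken; the only price is the two-step bookkeeping for the ``winner stays put'' case, which you handle correctly by showing the inherited runner-up defeats $w$ at the following step.
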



The end of the section is dedicated to the proof of Theorem \ref{t:1d-convergence}. We use outcomes reduced to the winner and runner-up, we assume that the preference profile can be modeled by a one-dimensional culture and we fix a positional mapping $x$. Denoting by $w= \sum_{i\in N} w_i$ the total number of voters, a \emph{median} is a value $m\in\mathbb{R}$ such that
\[ \sum_{i\colon x(i)\le m} w_i \ge \frac w2 \qquad\text{and}\qquad \sum_{i\colon x(i)\ge m} w_i \ge \frac w2\]
i.e., at least half the voters lie on the left of $m$, and at least half have lie on the right of $m$. The assumption that there is no partition of voter types in two groups of equal size ensures that there is a unique median $m$, coinciding with the position of some voter type: $m=x(i_m)$, where the voter type $i_m$ is called the \emph{median type}. By assumption, there is a single candidate $\mu$ whose position is closest to $x(i_m)$.

We start with the following particular instance of Black's Median Voter Theorem, which we prove for the sake of completeness.
\begin{lemm}\label{l:1d-winner}
The candidate $\mu$ is a Condorcet winner. 
\end{lemm}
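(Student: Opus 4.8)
The plan is to exploit the single-peakedness of the preferences induced by the positional mapping: since a voter of type $\Omega$ ranks candidates by increasing distance $\lvert x(\alpha)-x(\Omega)\rvert$, for any two candidates the set of voters preferring one to the other is determined by a single threshold on the line. First I would record the key consequence of the definition of $\mu$: as $\mu$ is the \emph{unique} candidate minimizing $\lvert x(\alpha)-x(X_m)\rvert$, the median type $X_m$ strictly prefers $\mu$ to every other candidate.

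Next, fix an arbitrary candidate $\beta\neq\mu$ and set $c=\tfrac12\bigl(x(\mu)+x(\beta)\bigr)$, the midpoint of the two positions. A voter of type $\Omega$ prefers $\mu$ to $\beta$ exactly when $x(\Omega)$ lies strictly on the $\mu$-side of $c$, that is, when $x(\Omega)<c$ if $x(\mu)<x(\beta)$ and when $x(\Omega)>c$ if $x(\mu)>x(\beta)$. Because the median type prefers $\mu$ to $\beta$, its position $m$ lies on that same $\mu$-side of $c$. Consequently every voter type whose position is on the far side of $m$ (all $\Omega$ with $x(\Omega)\le m$ in the first case, or all $\Omega$ with $x(\Omega)\ge m$ in the second) is also on the $\mu$-side of $c$, and therefore prefers $\mu$ to $\beta$.

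It then remains to see that this set of voters forms a \emph{strict} majority. By the defining inequalities of the median, $\sum_{\Omega\colon x(\Omega)\le m} n_\Omega\ge N/2$ and $\sum_{\Omega\colon x(\Omega)\ge m} n_\Omega\ge N/2$; the genericity assumption that no two complementary groups of voter types carry equal total weight rules out equality in either one, so each of these half-electorates in fact contains more than $N/2$ voters. Hence in both cases strictly more than half of the electorate prefers $\mu$ to $\beta$, which is exactly the statement that $\mu$ dominates $\beta$. Since $\beta$ was arbitrary, $\mu$ dominates every other candidate and is thus a Condorcet winner. I expect the only delicate point to be the promotion of the weak median inequalities to strict majorities, which is precisely where the no-equal-partition hypothesis is used; the geometry (the half-line structure and the location of $m$ relative to the midpoint $c$) is otherwise immediate.
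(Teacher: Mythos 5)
Your proof is correct and follows essentially the same route as the paper's: both identify the voters preferring $\mu$ to $\beta$ as those on the $\mu$-side of the midpoint $\tfrac12\bigl(x(\mu)+x(\beta)\bigr)$, observe that the median position lies on that side because $\mu$ is closest to $x(X_m)$, and then invoke the median property together with the no-equal-partition assumption to upgrade the weak half to a strict majority. The only cosmetic difference is that you apply the genericity assumption to the partition at $m$, whereas the paper applies it to the partition into $\mu$-preferrers and $\beta$-preferrers; both are valid.
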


\begin{proof}
Let $\alpha$ be any other candidate. Voters preferring $\mu$ to $\alpha$ are those positioned on the half line $L$ of endpoint $\frac12(x(\alpha)+x(\mu))$ and containing $x(\mu)$. Since $x(\mu)$ is closer from $x(X_m)$ than $x(\alpha)$, $L$ contains $x(i_m)$ and thus contains at least half the voters. Since there are no possible ties, $\mu$ dominates $\alpha$.
\end{proof}

\begin{lemm}
Let $\alpha\beta$ be an outcome such that $\alpha\neq\mu$ and let $\alpha'\beta'=\psi(\alpha\beta)$ be the next outcome in the (shifted) Polling Dynamics. Then either $\lvert x(\alpha')-x(\mu)\rvert < \lvert x(\alpha)-x(\mu)\rvert$, or $\alpha' = \alpha$ and $\lvert x(\beta')-x(\mu)\rvert < \lvert x(\beta)-x(\mu)\rvert$.
\end{lemm}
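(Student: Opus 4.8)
The plan is to build an explicit Lyapunov function for $\PD$ out of distances to $\mu$ and to recognize the two alternatives in the statement as the two ways this function can strictly decrease. Concretely, I would order states lexicographically by the pair $\Phi(\alpha\beta)=\bigl(\lvert x(\alpha)-x(\mu)\rvert,\ \lvert x(\beta)-x(\mu)\rvert\bigr)$, so that the lemma is exactly the assertion that $\Phi$ drops strictly whenever the winner is not $\mu$. By reflecting the positional mapping $x$ if needed, I would assume once and for all that $x(\mu)<x(\alpha)$, writing $a=x(\alpha)$; thus among candidates left of $\alpha$, being ``closer to $\mu$'' means having a position closer to $a$ from the left.

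The first ingredient is an exact reading of the Leader Rule scores in one dimension. For a candidate $\gamma\neq\alpha$, the voters approving $\gamma$ under expected winner $\alpha$ are precisely those who prefer $\gamma$ to $\alpha$, i.e. those lying on the $\gamma$-side of the midpoint $\tfrac12(x(\gamma)+a)$; hence the score of $\gamma\ne\alpha$ is the size of the duel $\gamma$ versus $\alpha$, while the score of $\alpha$ is the size of the duel $\alpha$ versus $\beta$. Two consequences follow, with $N$ the total number of voters. First, since $\mu$ is the Condorcet winner (Lemma \ref{l:1d-winner}), its score equals the duel $\mu$ versus $\alpha$, a strict majority, so $S(\mu)>N/2$. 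Second, $x(\mu)<a$ forces the median to satisfy $m<a$ (otherwise $\alpha$ would be at least as close to $m$ as $\mu$, contradicting minimality of $\lvert x(\mu)-m\rvert$); consequently any candidate to the right of $\alpha$ gathers only voters to the right of a point exceeding $a>m$, hence a strict minority $<N/2<S(\mu)$, and the scores of the left candidates increase monotonically with their position, so any candidate left of $\mu$ scores at most $S(\mu)$. I would package these into the key localisation: both the new winner $\alpha'$ and the new runner-up $\beta'$, having score at least $S(\mu)$, are either $\alpha$ itself or candidates whose position lies in $[x(\mu),a)$.

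The two cases then separate cleanly. If $\alpha'\neq\alpha$, then $x(\mu)\le x(\alpha')<a$, so $\lvert x(\alpha')-x(\mu)\rvert=x(\alpha')-x(\mu)<a-x(\mu)=\lvert x(\alpha)-x(\mu)\rvert$, which is the first alternative. If instead $\alpha'=\alpha$, the runner-up $\beta'$ again lies in $[x(\mu),a)$, so $\lvert x(\beta')-x(\mu)\rvert<\lvert x(\alpha)-x(\mu)\rvert$, and it remains only to prove that the \emph{old} runner-up satisfies $\lvert x(\beta)-x(\mu)\rvert\ge\lvert x(\alpha)-x(\mu)\rvert$. For this I would use that $\alpha'=\alpha$ forces $S(\alpha)>S(\mu)>N/2$, so a strict majority prefers $\alpha$ to $\beta$; by the midpoint description this is equivalent to $\alpha$ being strictly closer to the median than $\beta$, which expels $\beta$ from the interval symmetric about $m$ bounded by $\alpha$, placing it either beyond $\alpha$ (on the right) or far on $\mu$'s own side.

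The main obstacle is precisely this comparison in the winner-fixed case. When $\beta$ lies beyond $\alpha$ one gets $\lvert x(\beta)-x(\mu)\rvert>a-x(\mu)$ at once and concludes; the delicate configuration is when $\beta$ sits far on $\mu$'s side, where ``far from the median'' does not automatically mean ``far from $\mu$'' (the gap between $m$ and $x(\mu)$ is exactly what must be overcome). Closing this sub-case is where the argument must use more than the bare majority $S(\alpha)>N/2$: the natural route is to observe that such an extreme $\beta$ can never be the genuine runner-up of any election produced by $\PD$ (its Leader-Rule score is too small to reach second place), so that it is enough to establish the statement for states in the image of $\PD$ — which is all the convergence argument of Theorem \ref{t:1d-convergence} requires. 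I expect the careful bookkeeping tying the two majorities $S(\alpha)$ and $S(\mu)$ to the position of $\beta$ \emph{relative to $\mu$} rather than to $m$ to be the genuinely subtle point of the proof.
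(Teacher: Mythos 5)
Your exact reading of the Leader--Rule scores (every $\gamma\neq\alpha$ scores its duel against $\alpha$, while $\alpha$ scores its duel against $\beta$) and the resulting localisation of any candidate scoring at least $S(\mu)$ in $\{\alpha\}\cup[x(\mu),x(\alpha))$ are correct, and up to the winner-changes case your argument matches the mechanics of the paper's proof (which instead splits the line into the three regions determined by $x(\alpha)$ and $x(\beta)$). But the sub-case you flag is a genuine gap, and it is worse than you suspect: the comparison $\lvert x(\beta)-x(\mu)\rvert\ge\lvert x(\alpha)-x(\mu)\rvert$ you hope to establish when $\alpha'=\alpha$ is false, and in fact the lemma itself fails in that configuration. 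Concretely (with your normalization $x(\mu)<x(\alpha)$): take candidates at $x(\mu)=0$, $x(\gamma)=0.9$, $x(\alpha)=1$, $x(\beta)=-0.25$, and voter types of sizes $30,30,5,35$ at positions $0.38,\,0.40,\,0.7,\,0.96$. The median is $0.4$, whose closest candidate is $\mu$, so $\mu$ is the Condorcet winner; these data satisfy the genericity assumptions. Yet $\beta$ is ranked last by every voter, so from the outcome $\alpha\beta$ every voter approves $\alpha$: the scores are $\alpha:100$, $\gamma:65$, $\mu:60$, $\beta:0$, whence $\PD(\alpha\beta)=\alpha\gamma$, with $\alpha'=\alpha$ but $\lvert x(\beta')-x(\mu)\rvert=0.9>0.25=\lvert x(\beta)-x(\mu)\rvert$: both alternatives of the lemma fail. (For what it is worth, the paper's own proof is also incorrect at this very point: its second case asserts that $\alpha'$ is the candidate next to $\alpha$ in the direction of $\mu$, which your formula $S(\alpha)=\#\{\text{voters preferring }\alpha\text{ to }\beta\}$ shows can fail, since $S(\alpha)$ can be as large as $N$ when $\beta$ is universally despised.)

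Consequently, an amendment such as your proposed restriction to outcomes in the image of $\PD$ is not merely convenient but necessary, and it is precisely where the real work lies — work your proposal does not carry out. You would need to show that when $\alpha\beta$ is an outcome actually produced by $\PD$, $\beta$ cannot sit so far on $\mu$'s side that the lexicographic quantity increases; note that a $\beta$ far on $\mu$'s side can still be a legitimate runner-up if enough voters sit far left, and exactly those voters then vote against $\alpha$ in the duel with $\beta$, so there is a genuine trade-off between $S(\alpha)$ and the position of $\beta$ that must be quantified (in the example above $\beta$ has score $0$ under every expected outcome, so it is never a produced runner-up, consistent with your plan — but this must be proved in general, not just observed in one instance). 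As it stands, your proposal, like the paper's proof, correctly handles the winner-changes case and the case $x(\beta)>x(\alpha)$, but leaves the decisive configuration open; until that sub-case (equivalently, the lemma restricted to the image of $\PD$, or some other repaired monotone quantity) is closed, the convergence argument of Theorem \ref{t:1d-convergence} does not go through.
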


\begin{proof}
The positions $x(\alpha)$ and $x(\beta)$ divide the real line in three component: the open bounded interval $I$ between them, the open half-line $H_\alpha$ with extremity $x(\alpha)$ avoiding $I$, and the open half-line $H_\beta$ with extremity $x(\beta)$ avoiding $I$. 

If $x(\mu)\in H_\alpha$, $\alpha$ will receive the votes of all voters positioned in a half line starting at $(x(\alpha)+x(\beta))/2$ and containing $x(\mu)$, while other candidates receive votes only from voters positioned in $H_\alpha$ or in $I\cup H_\beta$. It follows that $\alpha' = \alpha$ and $\beta'$ is the candidate positioned next to $\alpha$, in the direction of $x(\mu)$ (possibly $\beta'=\mu$). In this case, we thus have $\alpha' = \alpha$ and $\lvert x(\beta')-x(\mu)\rvert < \lvert x(\beta)-x(\mu)\rvert$.

In all other cases, $\alpha'$ is the candidate positioned next to $\alpha$, in the direction of $\mu$ (possibly $\alpha'=\mu$), therefore $\lvert x(\alpha')-x(\mu)\rvert < \lvert x(\alpha)-x(\mu)\rvert$.
\end{proof}

Let $>$ denote the ``geo-lexicographic'' order on outcomes, defined by 
$\alpha\beta>\alpha'\beta'$ whenever
\[\lvert x(\alpha')-x(\mu)\rvert < \lvert x(\alpha)-x(\mu)\rvert, \text{ or }\alpha' = \alpha \text{ and }\lvert x(\beta')-x(\mu)\rvert < \lvert x(\beta)-x(\mu)\rvert.\]
The previous lemma shows that along an orbit of $\psi$, the outcome can only decrease in this order until $\mu$ becomes winner. Since there cannot be an infinite decreasing sequence of outcomes, eventually $\mu$ becomes winner. Now, as is well-known, this is a stable situation: for all $\beta$, $\psi(\mu\beta)=\mu\beta_0$ where $\beta_0$ is the candidate getting the better score in a duel against $\mu$. Therefore, every outcome converges under $\psi$ to the outcome $\mu\beta_0$ (and every ballot profile of $B^n$ converges under $\varphi$ to the ballot profile induced by the outcome $\mu\beta_0$ and the Leader Rule). The proof of Theorem \ref{t:1d-convergence} is complete.

\section{Other Voting rules}\label{s:other}

In order to show how our framework applies to general voting rule, let us consider three further examples: 
one for Plurality voting, where contrary to Example \ref{ex:cycle} heuristics use quantitative information in the outcome rather than only the ranking;
one for Instant Run-Off voting (IRV);
one for the \emph{condorcified} version of IRV (or, for that matter, of any ranked voting). 

There is a paucity of examples in the literature concerning \emph{synchronous} iterative voting; we can still mention Example 1 in \cite{chopra2004knowledge} (whose heuristics imply that voters know the number of voters of the same type), and Example 7 in \cite{Brams2007approval} (which needs a change of heuristic along the cycle).

\subsection{A plurality cycle}

We consider Plurality voting, with outcomes giving the total number of votes of each candidate, with $n=4$ voter types and $m=3$ candidates, $A=\{a,b,c\}$. We thus set $\mathcal{O}=[0,w]^A$ where $w=\sum_{i\in N} w_i$ is the total number of voters, an element of which is written as a triple $r=(r_a,r_b, r_c)$. Ballots bear a single name, i.e. $\mathcal{B}=A$, and the information function $g$ sends a family $(B_i)_{i\in N}\in A^N$ of ballots to the outcome
\[g((B_i)_{i\in N}) = \Big( \sum_{i\in N, B_i=\alpha} w_i  \Big)_{\alpha\in A.}\]
The winner map sends $(r_a,r_b,r_c)$ to the candidate with the most votes, with ties broken in favor of the earlier candidate in the alphabetical order.

We consider the preference profile
\[\begin{array}{ccc}
X   & Y   & Z   \\
10 & 13 & 14 \\
\hline
a   & b   & c   \\
b   & a   & ab  \\
c   & c   & 
\end{array}\]
and assume the following heuristics:
\begin{itemize}
\item voters of type $Z$ always vote for $c$,
\item voters of type $X$ vote as follows. If their least-preferred candidate $c$ is in a close contest for winning with their second-preferred candidate $b$, more precisely if
\[\lvert r_c - r_b \rvert \le .03(r_a+r_b+r_c) \quad\text{and}\quad \min(r_b,r_c)>r_a+.03(r_a+r_b+r_c)\]
then they vote for $b$; in all other cases they vote sincerely, for $a$,
\item voters of type $Y$ vote as voters of type $X$, with $a$ and $b$ switched.
\end{itemize}
These heuristics allow voters of type $X$ and $Y$ to vote strategically whenever their preferred choice is far from winning, but their second-preferred choice needs support to avoid the win of their least-preferred candidate. The factor $.03$ is chosen because it is the typical uncertainty in polls on about a thousand voters. With the above preference profile, $.03(r_a+r_b+r_c)= 1.11$.

If voters first report their preferred candidates, we obtain the outcome
\[r_0=(10, 13, 14)\]
and we are in the situation where $c$ wins in close contest with $b$. The above heuristics lead voters of type $X$ to vote for $b$, while others remain sincere, leading to the new outcome $r_1=\psi(r_0) = (0, 23, 14)$ where $b$ wins by a large margin. The above heuristics lead every voter to vote sincerely, so that $\psi(r_1)=r_0$, and we have a two-cycle (an outcome of which elects the absolute majority loser). By Theorem \ref{t:robust}, this bad cycle persists in a small enough CS perturbation.

\subsection{An IRV cycle}

We now consider IRV and give an example with $n=3$ and $m=4$.
Ballots and outcomes are linear rankings of candidates: $\mathcal{B} = \mathcal{O} = \mathcal{L}(A)$. The information function $g$ is defined as follows. Given an argument $(B_i)_{i\in N}\in \mathcal{B}^N$, we first rank candidates according to the total weight of ballots where they are ranked top (again, ties broke in alphabetical order). The last ranked candidate is eliminated, and we rank the remaining candidates according to the total weight of ballots where they are ranked top \emph{among candidates that have not been eliminated}. The last ranked candidate is again eliminated. The resulting outcome is denoted with the last standing candidate first, then the other by reversed order of elimination. The winner map sends a ranking $r\in \mathcal{O} = \mathcal{L}(A)$ to its top candidate.

We consider the preference profile
\[\begin{array}{ccc}
X   & Y   & Z   \\
10 & 11 & 12 \\
\hline
b   & c   & d   \\
a   & a   & a   \\
c   & b   & b   \\
d   & d   & c
\end{array}\]
so that the Condorcet order coincides with the alphabetical order, in particular $a$ is the Condorcet winner. Let us define simple heuristics that make use of an expected outcome to try to improve it. Denote by $\alpha\beta\gamma\delta$ the preferences of the considered type $i$ of voters (e.g. for $i=X$, $\alpha=b$, $\beta=a$, $\gamma=c$ and $\delta=d$). Given an outcome $r$, if the expected winner is neither $\alpha$ nor $\beta$ and $\beta$ was eliminated before $\alpha$ then $\sigma_i(r)=\beta\alpha\gamma\delta$; for every other outcome $r$, $\sigma_i(r)=\alpha\beta\gamma\delta$. In other words, voters strategically invert their two first preferred candidates when the top one had a better run but still could not prevent a worse candidate to be elected, giving their second choice a chance to do better.

If voters start voting sincerely, we obtain the outcome $cdba$ with the Condorcet winner eliminated with no votes at all in the first round, and ballots of type $X$ voters ultimately transferred to candidate $c$ against $d$. The above heuristics have voters of type $X$ and $Z$ invert their two most preferred candidates and vote $abcd$, $adbc$ respectively. The outcome is then $acbd$. Then all voters resume voting sincerely, and we have a $2$-cycle one of whose ballot profiles elects the third candidate in the Condorcet order. Observe that every outcome where the Condorcet winner is elected leads voters to cast their sincere ballot, thus leading to this bad cycle.

According to Theorem \ref{t:robust}, this cycle is again robust under perturbation in a CS setting.

\subsection{A cycle in condorcified ranked voting}

Consider a ranked voting rule, i.e. one with $\mathcal{B}=\mathcal{L}(A)$. It has been proved by Durand, Mathieu and Noirie \cite{durand2014condorcification,durand2016condorcet} and independently by Green-Armytage, Tideman and Cosman \cite{green2016statistical} that under mild conditions \emph{condorcification} of a voting system (i.e. electing the Condorcet winner if she exist and applying the given rule otherwise) cannot increase manipulability and often reduces it. We can therefore ask whether a condorcified voting rule is less susceptible to the presence of bad cycles when a Condorcet winner is present. Unsurprisingly, the answer is negative when voters who do not like the Condorcet winner strategically choose to rank her lower in their ballot

Let us consider a condorcified voting rule, i.e. for all $(B_i)_{i\in N}\in \mathcal{B}^N$ for which their is a Condorcet winner $\gamma$, $W\circ g((B_i)_{i\in N})=\gamma$. Assume moreover that $\mathcal{O}$ contains the information whether or not the winner was a Condorcet winner (with respect to the \emph{ballot} profile, since the preference profile is not available information). Consider the preference profile
\[\begin{array}{cccc}
 Z   & Y   & X   & W   \\
150  & 102 & 101 & 100 \\
\hline
a    & b   & c   & b   \\
b    & c   & a   & a   \\
c    & a   & b   & c
\end{array}\]
for which $a$ is the sincere Condorcet winner. Assume the following heuristics for voter of any type $i$, whose preferences are denoted as above by $\alpha\beta\gamma$: if $\beta$ is elected as Condorcet winner, then they cast the ballot $\alpha\gamma\beta$, in an attempt to give $\alpha$ a shot. Otherwise, they vote sincerely.

If at first all voters vote sincerely, in the first outcome $r_0$ the winner is $a$, declared a Condorcet winner. This leads voters of type $X$ to next cast the ballot $cba$ and voters of type $W$ to cast the ballot $bca$. This makes $b$ the Condorcet winner of the new ballot profile. The above heuristics then lead all voters to vote sincerely, except voters of type $Z$ who cast the ballot $acb$; this makes again $a$ the Condorcet 
winner, and we have a $2$-cycle, one of whose ballot profiles elects a candidate that is not the Condorcet winner of the preference profile. Moreover this cycle attracts all outcomes where $a$ is a Condorcet winner of the ballot profile.

Again, Theorem \ref{t:robust} ensures that this cycle is robust under perturbation in a CS setting.

\section{Chaos}\label{s:chaos}

We finish with an example illustrating the flexibility of the CS setting for the Polling Dynamics. The starting point is to observe that the embedding $\Phi_0$ of a discrete space Polling Dynamics in a CS setting cannot be continuous (unless it is a constant map). Indeed, it takes a finite number of values, and the state space $\mathcal{P}$ is connected. There must exist some CS ballot profiles near which arbitrarily close CS ballot profiles are sent to very different images by $\Phi_0$. In practice, assuming the outcomes carry continuous information (e.g. shares of votes for each candidate), a heuristic like the Leader Rule would probably not be applied blindly by all voters when the expected winner and runner-off are in a close-call contest, or when the runner-up is in a close-call contest with the next candidate. Different voters will have different confidence in the available information and a small change in the expected outcome should result in a small change in the ballots cast, i.e. the CS Polling Dynamics should be continuous.

We now give a relatively simple example with $m=3$, based on a reluctance to add one's second-preferred candidate unless it seems likely to improve the outcome, and we observe that a complicated dynamics emerges. The take-away is that
\begin{important}%
Even simple CS Polling Dynamics can exhibit a chaotic behavior, where the sequence of winners is impossible to predict reliably from the observation of arbitrarily many of its first terms.
\end{important}
By looking at the influence of a small number of core parameters of the model, we shall see that chaos is neither universal nor restricted to exceptional parameters.

\paragraph*{A simple continuous CS Polling Dynamics}
\label{s:chaos-example}

We consider the four-types preference profile used in the proof of Theorem \ref{t:example2}:
\[\begin{array}{cccc}
Z   & Y   & X   & W   \\
3 & 1   & 3 & 5 \\
\hline
a   & a   & b   & c   \\
b   & bc & a   & ab\\
c   &     & c   &   
\end{array}\]
for which $a$ is a Condorcet winner and $c$ an absolute majority loser.
As above, the corresponding sets of admissible ballots makes $\mathcal{P}$ a square with coordinates $(x,z)$ where $x$, $z$ are the proportions of voters of type $X$ , $Z$ respectively voting $\{a,b\}$. This choice of ballot can be seen either as a form of cooperation between the two types of voters, or as the result of risk aversion (which are complementary views, not opposed ones).

An outcome is a triple $r=(r_a,r_b,r_c)\in[0,1]^3$ giving the shares of votes obtained by each candidate, and we choose the following CS heuristics (here the resulting distribution of ballots does not depend upon the value $x$ or $z$ in $\Delta(\mathcal{B}_i)$ that is by definition part of their argument):
\[\sigma_Z((r_a,r_b,r_c),z) = C_Z\circ S(r_a,r_b,r_c) \qquad \sigma_X(x,(r_a,r_b,r_c),x)= C_X\circ S(r_b,r_a,r_c) \]
where $C_Z$, $C_X$, $S$ will be defined below, to be interpreted as follows:
\begin{itemize}
\item $S$ is a ``safety function'', quantifying how unlikely it seems that cooperating would be useful to counter a threat by $c$ (notice how $r_a, r_b$ are exchanged in $\sigma_X$, to take into account the preferences of this type). $S$ will be in particular very small when both $c$ is expected winner or close to be expected winner, \emph{and} collaborating has a good chance to prevent her to win,
\item $C$ is a ``collaboration function'', translating a level of safety into a proportion of collaborations (high safety resulting in low collaboration).
\end{itemize}
We take here a very simple safety function:
\[S(r_1,r_2,r_3) = \begin{cases}
\lvert r_2-r_3\rvert &\text{when } r_2 > r_1 \\
\frac12 \lvert r_2-r_3\rvert + \frac12 \lvert r_1 - r_3\rvert &\text{otherwise.}
\end{cases} \]
i.e. when the second-preferred candidate is ranked higher than the preferred one, the safety is the margin in her race with the least-preferred one; otherwise, safety is an average of the margins in races of the two preferred candidates against the last one. For example, if $r_1\gg r_3\simeq r_2$ then the safety is not too small (the least preferred candidate has little chance of winning) but not maximal, meaning some voters will prefer to collaborate in order to ensure $c$ finishes at the last rank.  When $r_1\simeq r_2$, the safety is close to both $\lvert r_1-r_3\rvert$ and $\lvert r_2-r_3\rvert$; when $c$ is far above or far below, one's vote seems unlikely to change the outcome and the incentive to collaborate is small.

Finally, we choose a simple collaboration function extrapolating linearly between the value $1$ (all voters collaborate) when the safety vanishes, and the value $0$ (no voter collaborates) when the safety is large enough:
\[C_i(t) = (1-c_i t)_+ := \max(0,1-c_i t)\]
where as indicated $(\cdot)_+$ is the positive part, and where the coefficient $c_i$ ($i= Z, X$) quantifies the risk tolerance: the higher, the less likely $X$ and $Z$ voters will collaborate (low values of $c_i$ thus correspond to high aversion to risk). For now, we take $c_Z=c_X=5$, a rather moderate value: $1-5t$ reaches $0$ only at $t=0.2$, i.e. when the safety margins reaches a staggeringly high $20\%$ of voters.

Since a state $(x,z)$ results in the outcome $(r_a,r_b,r_c)=(3x+4, 3z+x, 5)$, these choices yield the CS Polling Dynamics $\Phi$ defined by:
\[\Phi(x,z) = \big(\big(1-5\lvert 3x-1\rvert\big)_+, \big(1-2.5\lvert 3x-1\rvert - 2.5\lvert 3z+x-5\rvert\big)_+  \big)\]
when $2x+4\ge 3z$ and
\[\Phi(x,z) = \big( \big(1-2.5\lvert 3z+x-5\rvert - 2.5 \lvert 3x-1\rvert\big)_+, \big(1-5\lvert 3z+x-5\rvert\big)_+  \big)\]
when $2x+4\le 3z$. While $\Phi$ may look like an innocent map, drawing an orbit of this map reveals an interesting pattern (Figure \ref{f:chaos}). It turns out \emph{all} orbits yield pretty much the same image, typical of a \emph{chaotic attractor}. 

In dynamical systems, \emph{chaos} is not a formally defined word, but refers to a number of properties all seeking to translate the idea that orbits behave in an unpredictable fashion. The most prominent one is \emph{entropy}, which comes in a number of versions, of which we shall only give one flavor (more information is available in numerous books, e.g. \cite{katok1995modern}). It is to be understood that ``chaotic'' situations are those of positive entropy, zero entropy being the sign for a relatively ``tame'' system.

While we will not \emph{prove} chaos here, we will show compelling numerical evidence that some entropy is positive, meaning that an orbit have the same level of complexity as a random sequence of (skewed) coin tosses. Note that the fact that the map $\Phi$ is chaotic is not the most important point: what matters is that the sequence of winners along many orbits are chaotic, which here results from the attractor intersecting regions electing different candidates.

\begin{figure}[htp]
\centering
\begin{minipage}{.6\linewidth}
\includegraphics[width=\linewidth]{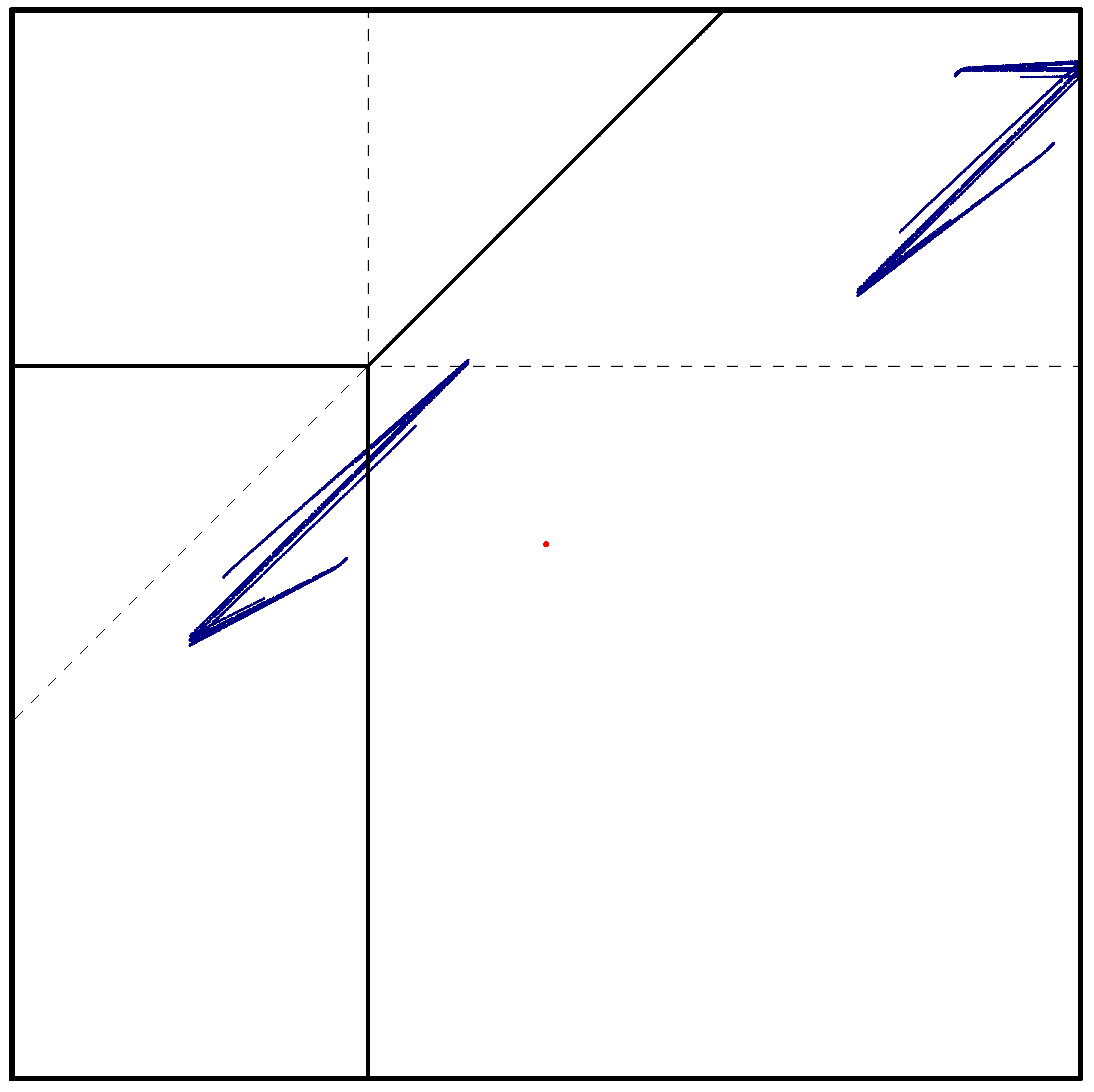}
\end{minipage}
\begin{minipage}{.38\linewidth}
\includegraphics[width=\linewidth]{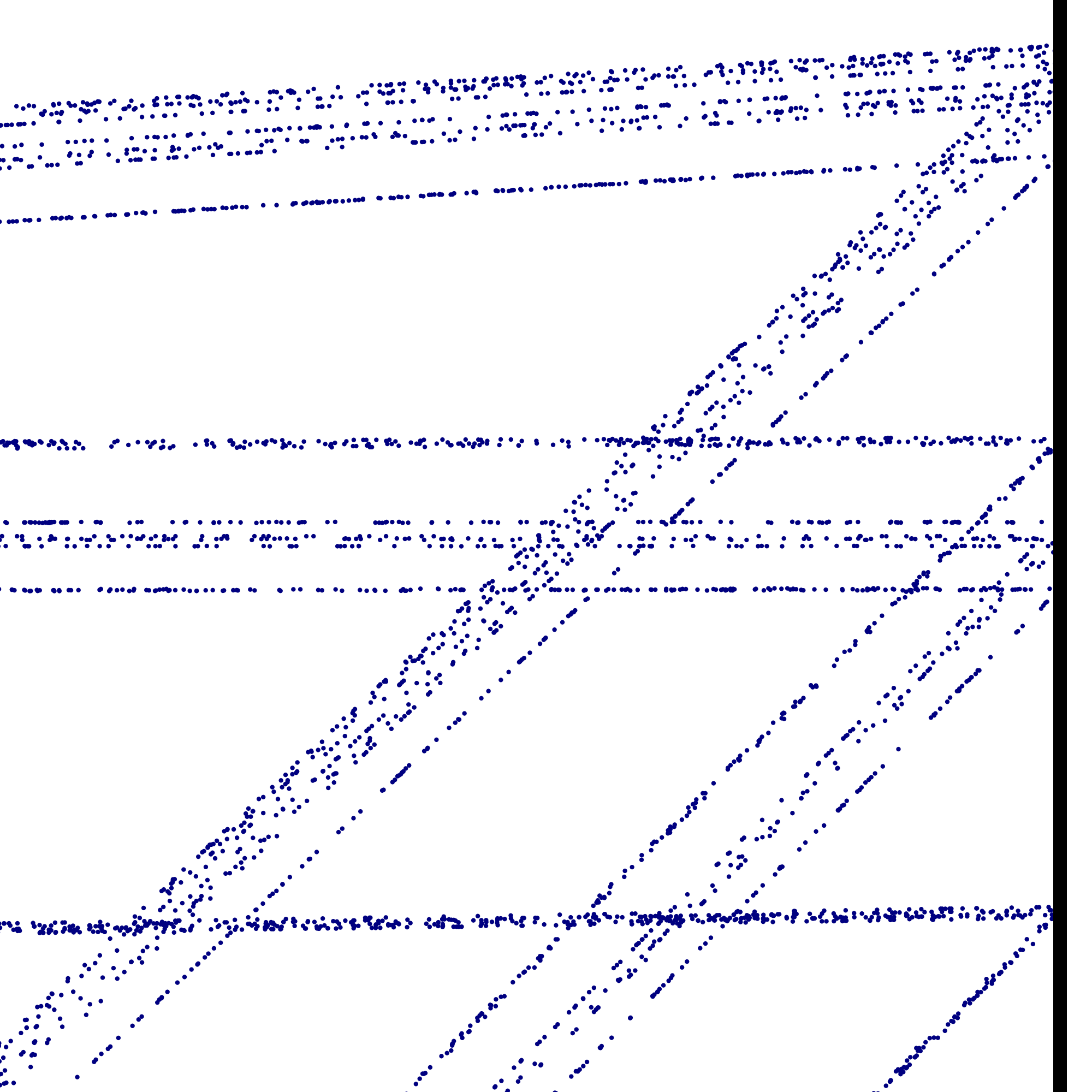}
\end{minipage}
\caption{One orbit of the CS Polling Dynamics $\Phi$. Left: full picture. Right: a zoom to the upper-right part showing a linear/fractal struture.}\label{f:chaos}
\end{figure}

\paragraph{Numerical quantification of the chaos}

Given any state $s\in\mathcal{P}$, its \emph{winners word} is the word $\omega(s)$ whose $k$-th letter is $W(G(\Phi^k(s)))$, the winner after $k$ iterations. In the example above the word of the successive winners starting from the center of the space $\mathcal{P}$ is ($32$ first letters shown, capital $c$ for better readability):
\[\omega(.5,.5) = aaaCaCaaaCaCaCaaaCaCaCaCaCaaaCaC\cdots\]
(the choice of $(.5,.5)$ is arbitrary, but experimentally one checks that it has no bearing on the results exposed here).
There is some visible structure in this word, for example there are never two successive $c$ no four $a$ in a row; apart from the first three letters, the word seems constructed only with the blocs $\beta=ca$ and $\gamma=caaa$. However, the succession of these blocks looks somewhat random:
\[\omega(.5,.5) = aaa\gamma\beta\gamma\gamma\beta\gamma\gamma\gamma\gamma\beta\gamma\gamma\beta\gamma\gamma\gamma\gamma\beta\beta\beta\beta\gamma\dots \]
To quantify the randomness of $\omega(.5,.5)$, we use the following definition.

\begin{defi}\label{d:entropy}
Let $\omega=\alpha_1\alpha_2\cdots$ be an infinite word on the alphabet $A$, i.e. a sequence of elements of $A$ (written without parentheses or comma).
For all $k<j\in\mathbb{N}$ we set $\omega_{k: j} = \alpha_k \alpha_{k+1}\alpha_{k+2}\cdots \alpha_{j}$; each $\omega_{k: j}$ is called a \emph{subword} of $\omega$ (of length $j-k+1$).

For all $k,\ell\in\mathbb{N}$ with $\ell\le k$ and all finite word $\theta$ of length $\ell$, we denote by $S_k^\theta(\omega)$ the number of times the subword $\theta$ appears in $\omega_{1: k}$, we set 
\[P_k^\theta(\omega) := \frac{S_k^\theta(\omega)}{k-\ell+1}\]
the proportion of length $\ell$ subwords of $\omega_{1: k}$ equal to $\theta$, and by $P_k^\ell(\omega)$ the ``probability vector'' $(P_k^\theta(\omega))_{\theta\in A^\ell}$. Finally the \emph{Kolmogorov-Sinai entropy} of the word $\omega$ is defined as
\[h_\mathrm{KS}(\omega) = \lim_{\ell\to\infty} \frac1\ell  \limsup_{k\to\infty} H(P^\ell_k(\omega))\]
where
\[H(p_1,p_2,\dots,p_k) = \sum_{i=1}^k -p_i \log p_i  \qquad \text{with the convention } 0\log 0= 0. \]
\end{defi}

\begin{rema}
The limit in the definition exists by Fekete's lemma, the needed  subadditivity being of the perks of the function $H$. The entropy $h_\mathrm{KS}(\omega)$ is never larger than $\log m$, with equality when all possible subwords of length $\ell$ appear in $\omega$ with the same asymptotic frequency.

The value of $h_\mathrm{KS}(\omega)$ is to be interpreted as the ``uncertainty'' of a random guess. The reference case is that of an object uniformly drawn among $N$, having uncertainty $\log N$; Observe that this makes uncertainty linear in the number of objects to be guessed in the sense that guessing $\ell$ independent objects each uniformly drawn among $N$ has uncertainty $\log(N^\ell) = \ell\log N$. More generally, one defines the uncertainty of a choice made according to a probability vector $(p_1,\dots, p_N)$ as $H(p_1,\dots, p_N)$, and this definition enjoys many natural properties (see e.g. \cite{walters1982ergodic} Theorem 4.1). The uncertainty of guessing a length-$\ell$ subword of $\omega$, when $\ell$ is large and we look far on the right of the word, is asymptotically of the order of $\ell\cdot h_\mathrm{KS}(\omega)$.
\end{rema}

The sequence $\ell\mapsto H(P^\ell_{2^{20}}(\omega))$ is plotted in Figure \ref{f:entropy}. We observe an extremely good alignment from $\ell=4$ onward, with slope $\simeq 0.229$, a strong indication that $h_\mathrm{KS}(\omega(.5,.5))$ is close to this value. Every starting state $s$ yields very similar results, a strong numerical indication that the map $\Phi$ is chaotic, with chaotic sequences of winners. To check whether all states have orbits accumulating on the attractor pictured in Figure \ref{f:continuous-numeric}, we have drawn the first iterates of $\Phi$ in Figure \ref{f:iterations}. All orbits seem to accumulate to the attractor, which appears in fact connected (a fact that can be confirmed by topological arguments), made of a very thin curve between two arrow-shaped  parts, which are exchanged by $\Phi$.

\begin{figure}[htp]
\centering
\includegraphics[width=.6\linewidth]{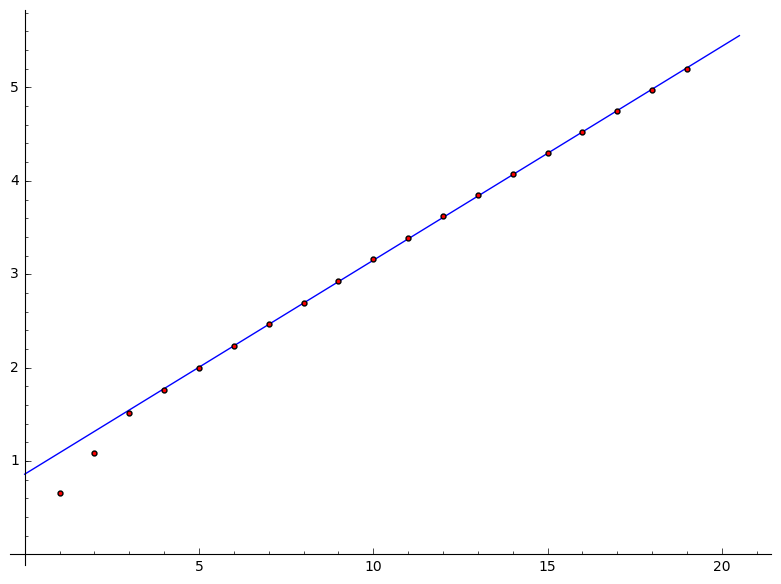}
\caption{Plot of the entropy-estimating sequence  $\ell\mapsto H(P^\ell_{2^{20}}(\omega))$. The slope of the marked line is $\simeq 0.2291$}\label{f:entropy}
\end{figure}

\begin{figure}[htp]
\centering
\includegraphics[width=.24\linewidth]{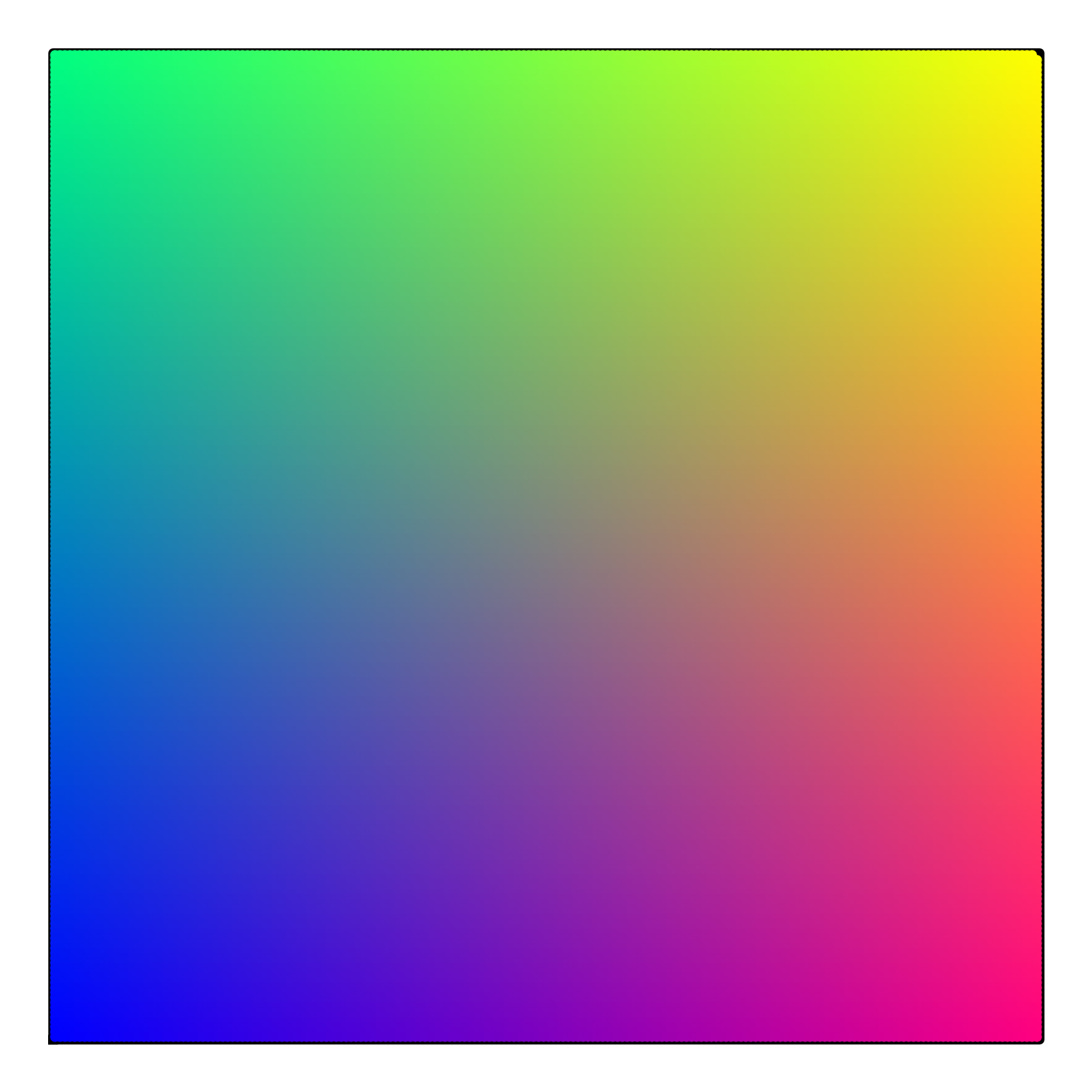}
\includegraphics[width=.24\linewidth]{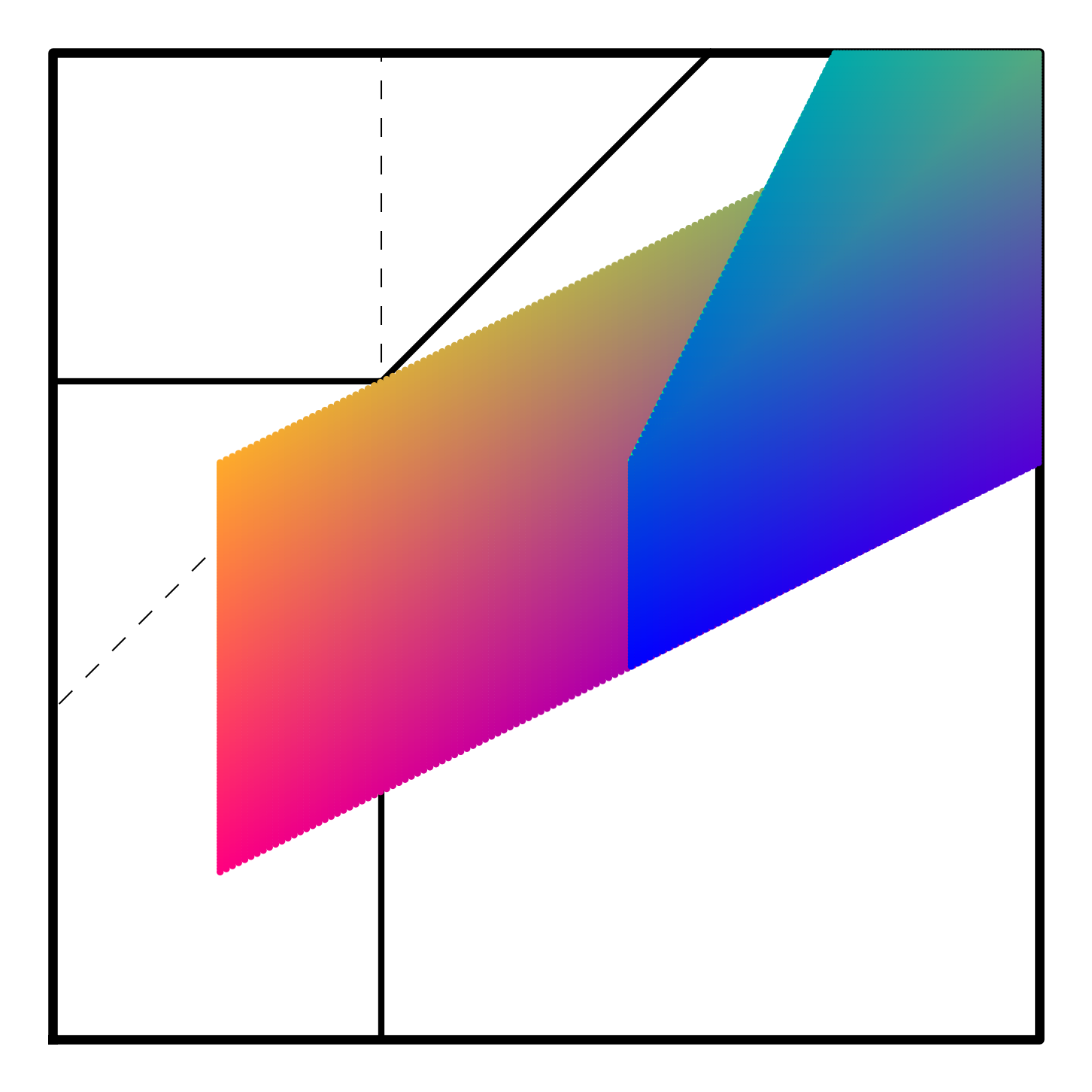}
\includegraphics[width=.24\linewidth]{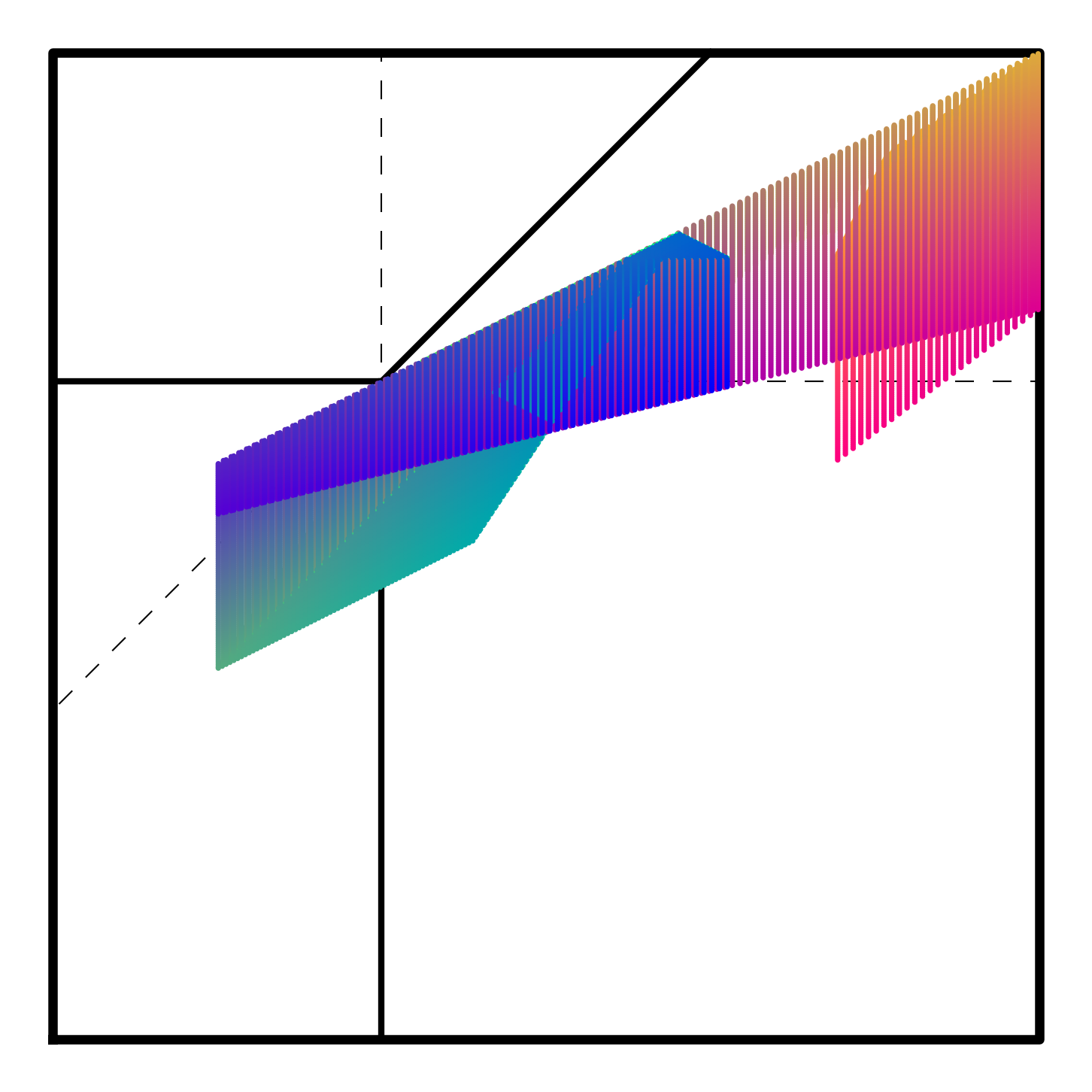}
\includegraphics[width=.24\linewidth]{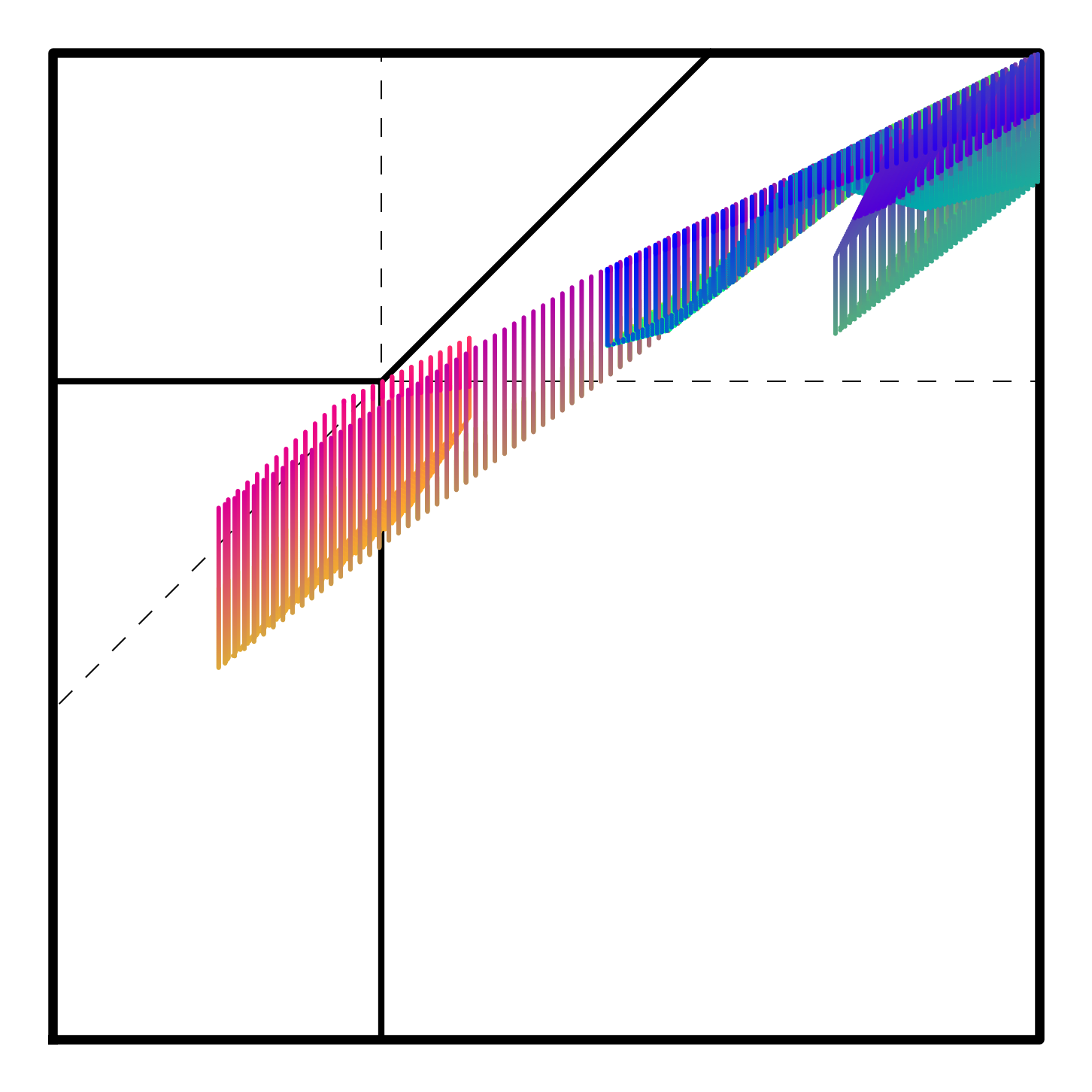}
\includegraphics[width=.24\linewidth]{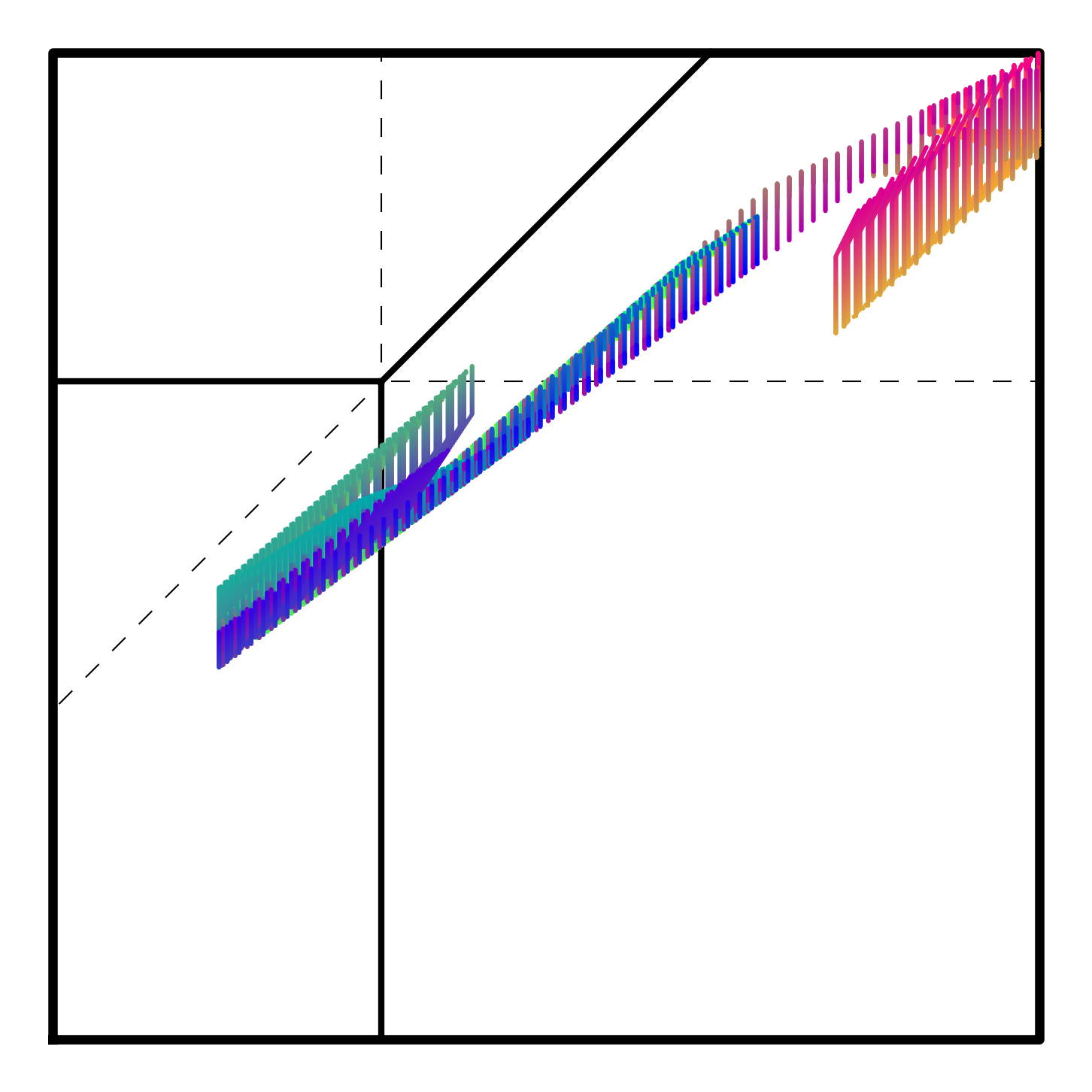}
\includegraphics[width=.24\linewidth]{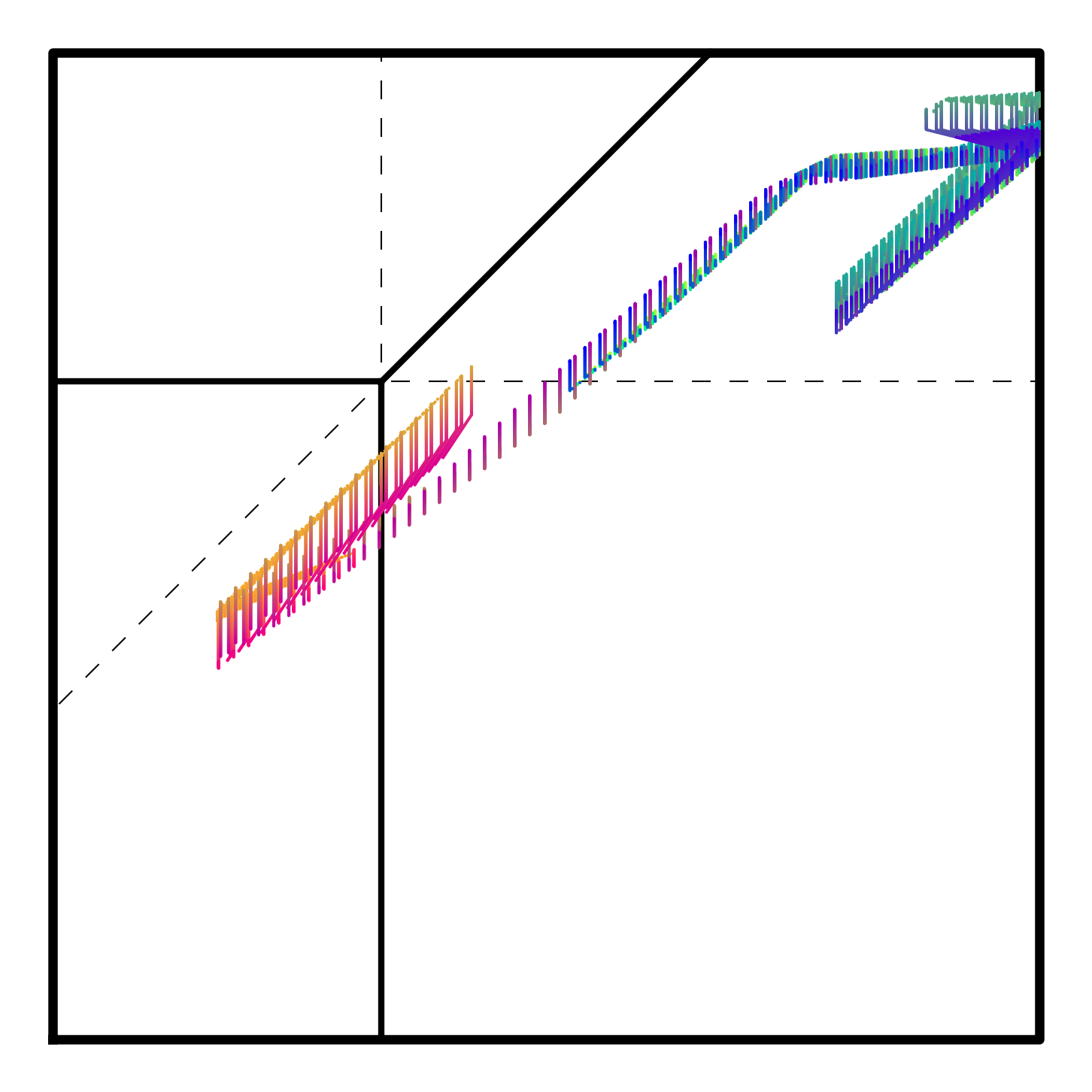}
\includegraphics[width=.24\linewidth]{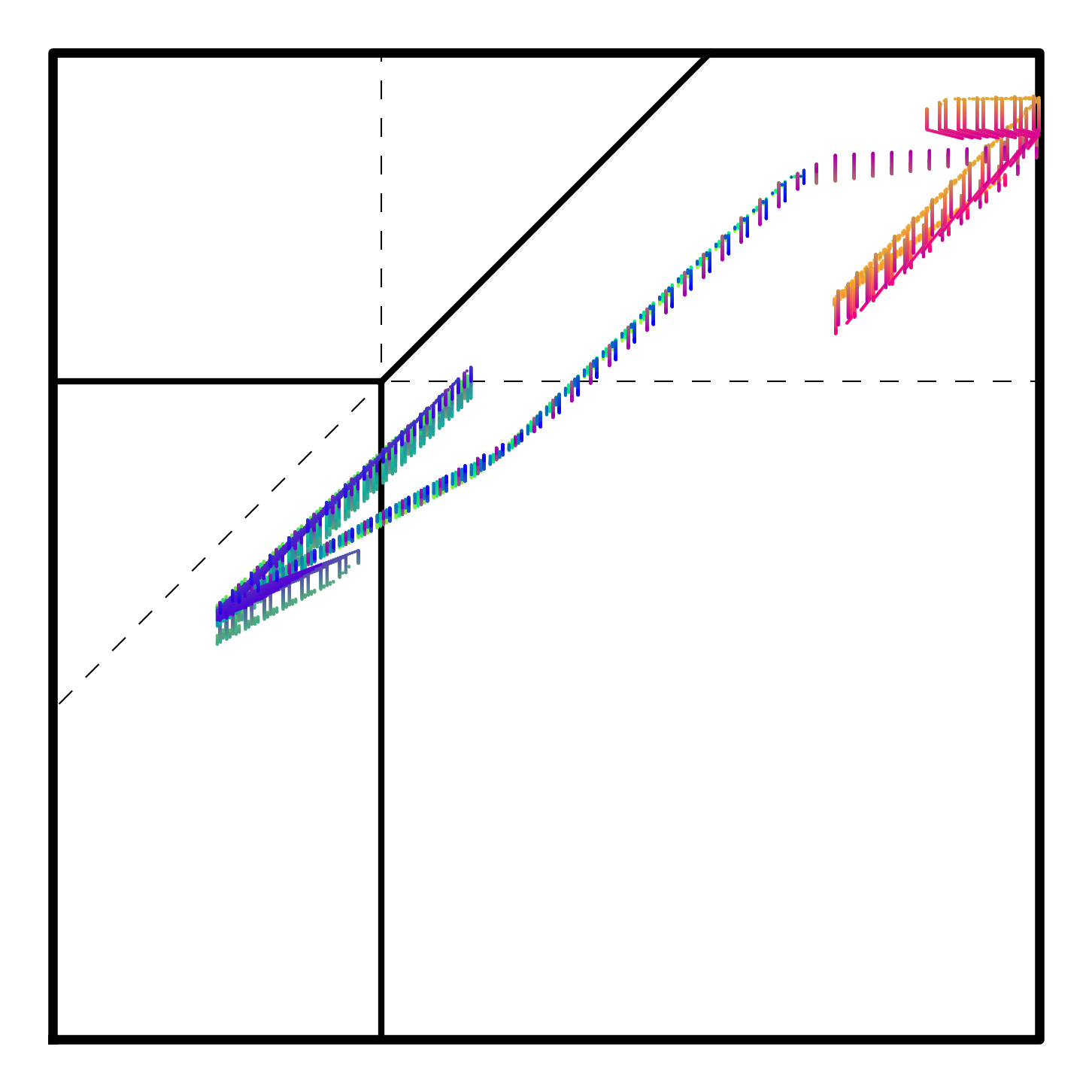}
\includegraphics[width=.24\linewidth]{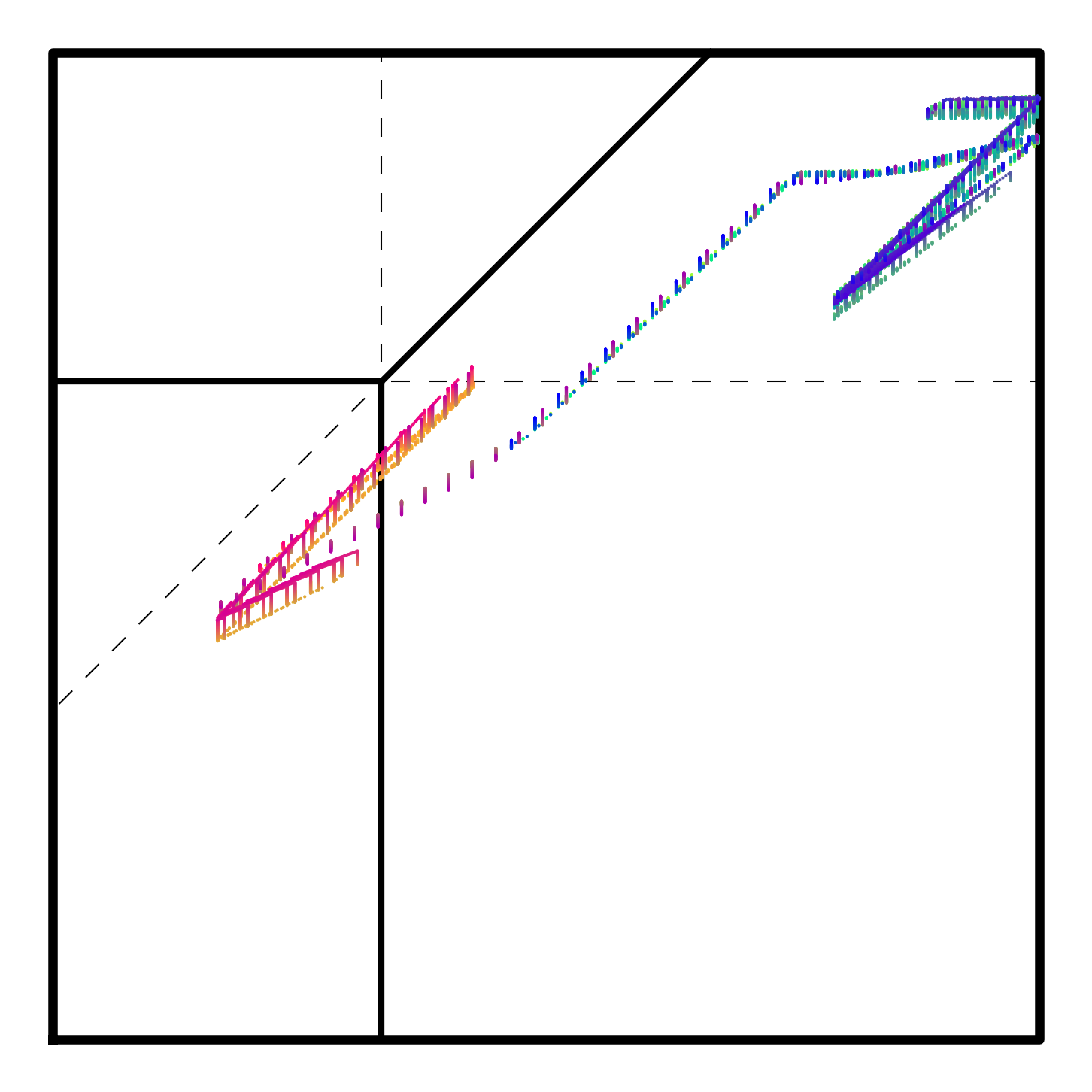}
\caption{From left to right then top to bottom, $200\times 200$ points of the state space regularly spaced and their $7$ first iterates by $\Phi$ (colors preserved along each orbit).}\label{f:iterations}
\end{figure}

While this is only a specific example, it was constructed from basic principles and it is possible that chaos be a quite common feature of CS Polling Dynamics. In order to start testing this hypothesis, we look at what happens when we let $c_X$ and $c_Z$ vary, see Figure \ref{f:entropy-param}. We observe that entropy is mostly concentrated in an L shaped region, where both voter types have not too high a risk aversion, and either type has not too high risk tolerance. However when $Z$ has high risk-aversion, moderate-high values of $c_X$ still result in positive entropy. This asymmetry must comes from the asymmetry of the preference profile introduced by voters of type $Y$.

Looking at sequences of winners with various parameters show that in the upper-right square where entropy vanishes (high risk tolerance for both voter types), several different dynamics can happen: $c$ can be constantly elected or we can observe a periodic pattern involving $a$ and $c$ as winners. When $X$ has high risk aversion, we only observed constant election of $a$. When $Z$ has high risk aversion, for increasing values of $c_X$, we observe successively: constant election of $a$, periodic patterns involving $a$ and $b$, chaotic patterns involving $a$ and $b$, constant election of $b$.\\

\begin{figure}[htp]
\centering
\includegraphics[scale=1]{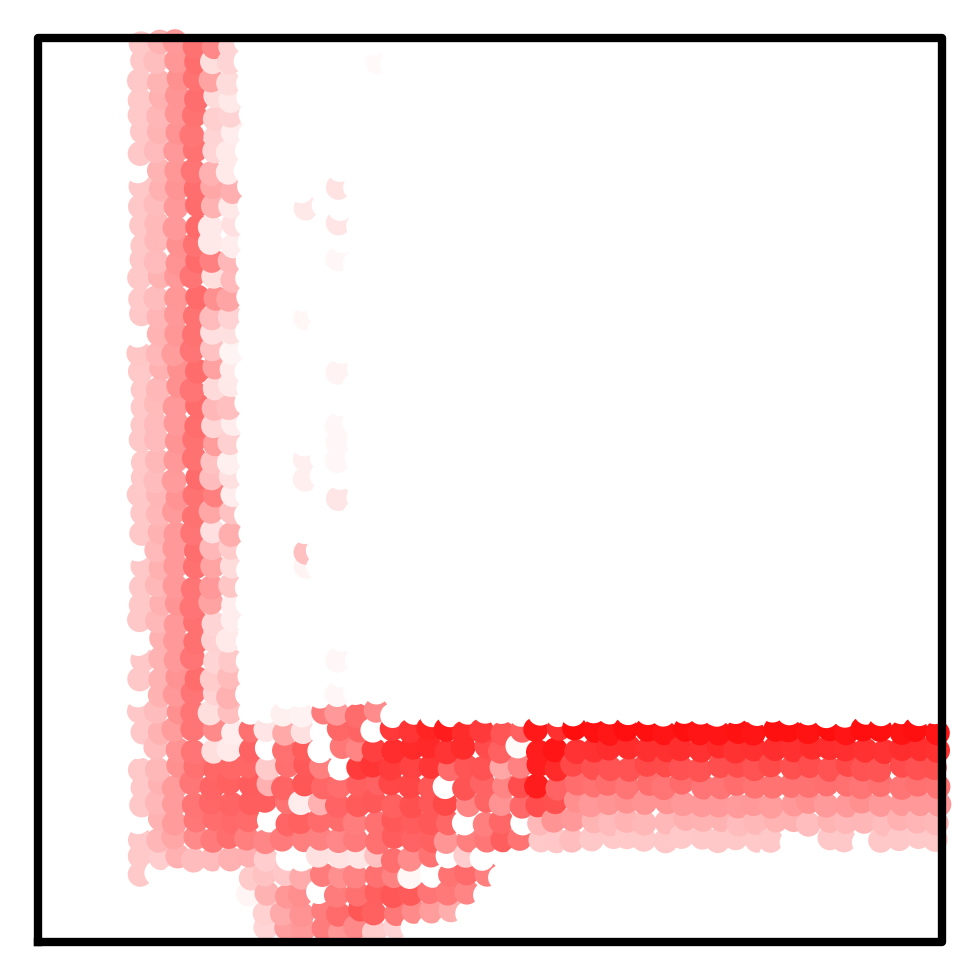}
\caption{Entropy estimated for various values of $c_X$ (abscissa) and $c_Z$ (ordinate), from $1.$ to $30.$, mostly regularly spaced with a random perturbation. Each dot is one value of the pair $(c_X,c_Z)$, color indicating the estimated entropy (linear regression made on subwords of length $4$ to $11$ in $\omega(.5,.5)$): white is entropy $0$ while pure red is entropy $\ln(2)$ or above.}\label{f:entropy-param}
\end{figure}

Let us end with the influence of the preference profile. We look at the basic example ($c_X=c_Z=5.$) with randomly chosen weights for all four voter types. The following examples complements the main one to form a representative sample of what a quick experimental exploration revealed for this model.

\begin{exem}
With $w_X=0.05$, $w_Y=0.02$, $w_Z=0.8$ and $w_W=0.8$, $a$ is constantly elected. Observe that it always gets at least $.82$ votes while $c$ receives only $.8$, so the only possible contender would be $b$, but she would need almost all of $Z$ voters to cooperate and most $X$ voters not to cooperate, a behavior prevented in our model. This is a non-chaotic, actually very stable example.
\end{exem}

\begin{exem}\label{e:intermittent}
Take now $w_X=0.6$, $w_Y=0.08$, $w_Z=0.56$ and $w_W=0.82$.
Then applying the same method as above, we get a higher entropy estimation:
\[h_\mathrm{KS}(\omega(.5,.5)) \simeq 0.36\]
but the word itself looks a bit different, with long sequences of consecutive $a$:
\[\omega(.5,.5) = aaaCaaaCaCaaaaaCaCaaaaaaaaaCaaaaaaaaaaaaaCaCaCaCaCa\cdots\]
This could be example of an ``intermittent'' behavior, with slow regions where $\Phi$ is relatively tame but from which all orbits eventually escape to enter strongly chaotic regions, constantly alternating between two behaviors: predictable and chaotic. See one orbit in Figure \ref{f:examples} (left); other orbits produce very similar pictures.
\end{exem}

\begin{exem}\label{e:no-entropy}
Changing only slightly the previous example with $w_X=0.6$, $w_Y=0.08$, $w_Z=0.56$ and $w_W=0.81$, a radical change in long-term behavior appears: entropy seems to vanish, with $H(P^\ell_{2^{20}})$ plateauing abruptly from $\ell=10$ onward. When looking at subwords of $\omega(.5,.5)$ of length $10$ and more, one observes that there are exactly $22$ of them, no matter the length. Closer observation then reveals that $\omega(.5,.5)$ is the concatenation of copies of the length $22$ word $aaacacaaacacacacaaaaaa$. Other starting states yield similar results, indicating that $\Phi$ has here an attracting periodic orbit, of period $22$ (see Figure \ref{f:examples}, right). At the time scale of an election, it would appear chaotic, but in the longer run it is not. 

This examples incites us to plot very long orbits on top of our entropy estimation, to rule out an attracting periodic orbit of length significantly larger than the maximal length of subwords used in the entropy estimation. Rigorously proving positive entropy would need more sophisticated mathematical tools (one would search for a ``horseshoe'').
\end{exem}

\begin{figure}[htp]
\centering
\includegraphics[width=.48\linewidth]{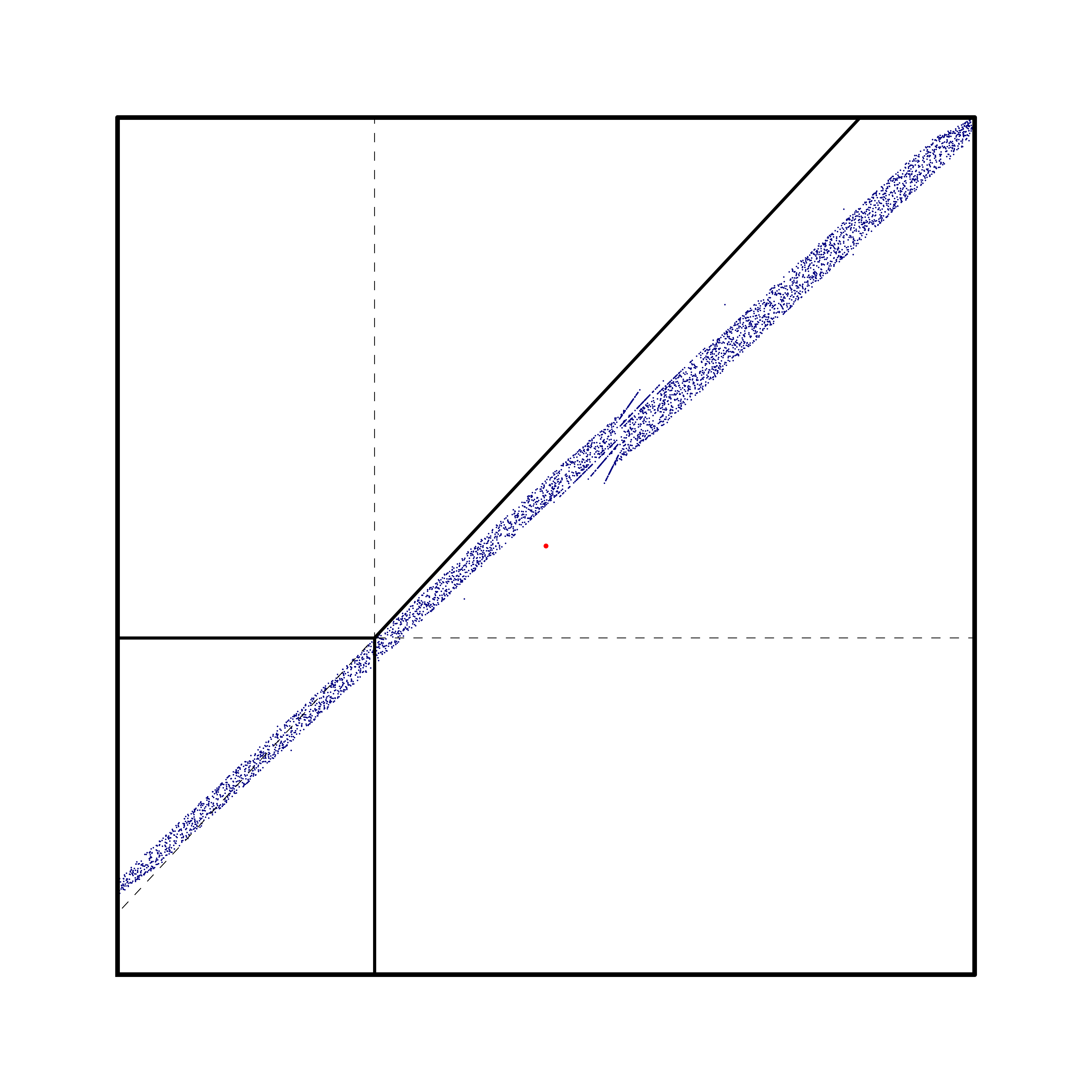}
\includegraphics[width=.48\linewidth]{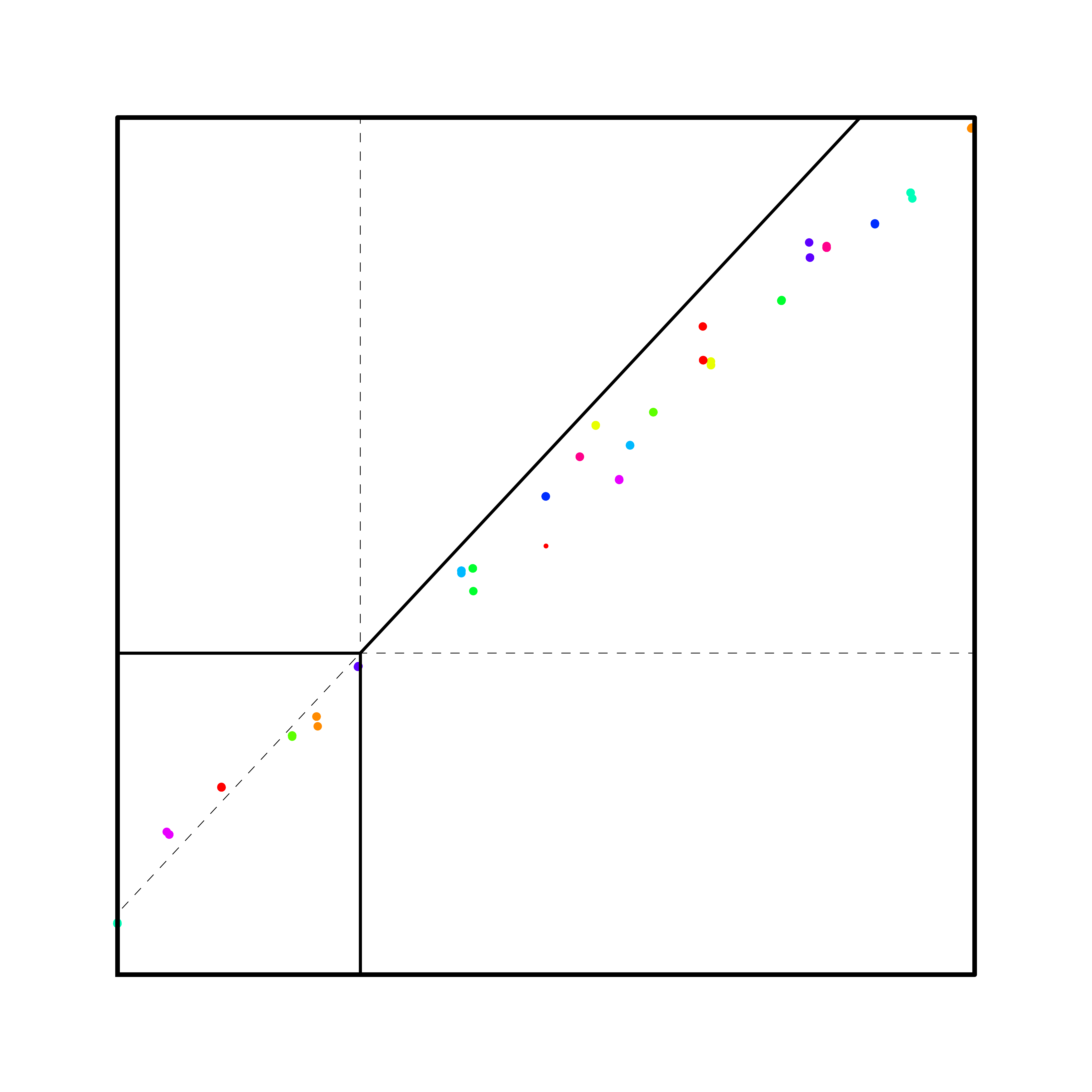}
\caption{Comparison of the orbits of $s=(.5,.5)$ for two close populations: Example \ref{e:intermittent} on the left, Example \ref{e:no-entropy} on the right, with colors of period $22$. Both runs have $5000$ points (larger points are used on the right for readability).}\label{f:examples}
\end{figure}

\begin{figure}[htp]
\centering
\includegraphics[width=.8\linewidth]{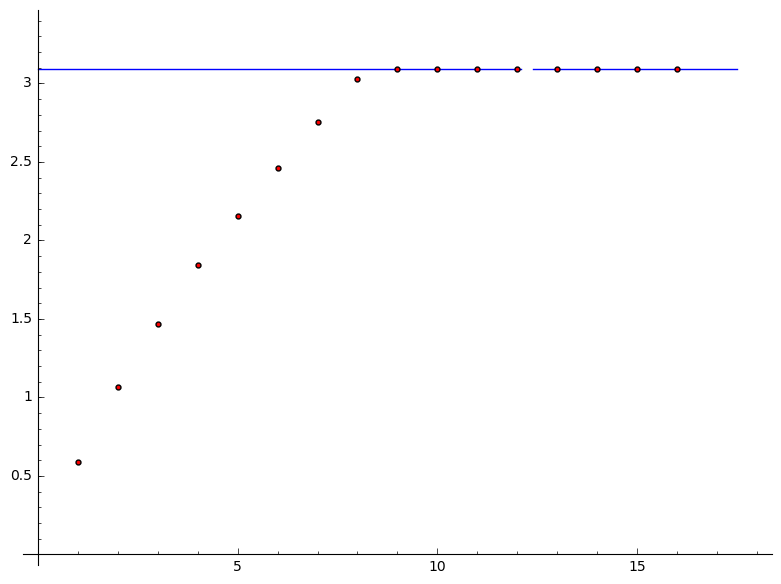}
\caption{Plot of the entropy-estimating sequence  $\ell\mapsto H(P^\ell_{2^{20}}(\omega))$ for Example \ref{e:no-entropy}}\label{f:no-entropy}
\end{figure}

\section{Conclusion}\label{s:conclusion}

We have explained why \emph{synchronized} iterated voting is relevant both to model political elections preceded by polls, and to iterative anticipation of other voters' strategies. We proposed a framework for synchronized iterated voting, producing a ``Polling Dynamics'', in two flavors: one with a discrete space of states, the other with a continuous space allowing more flexible modeling. We showed that the former can be seen as a particular case of the latter.

Our first main result is that cycles of outcomes are robust under perturbation, so that finding a cycle in the discrete-space setting ensures any small enough variation of the model in the continuous-space setting still exhibit the same cycle.

Then we produced two examples in Approval voting, where voters apply \emph{simple, consistent, sincere} heuristics but cycles with sub-optimal or outright bad outcomes appear. The first example somewhat mitigate an important result of Laslier: while the equilibria of the Polling Dynamics following from his Leader Rule, as he proved, elect the Condorcet winner when she exist, they may fail to attract most of the possible outcomes.

We performed \emph{in silico} experiments to assert the prevalence of these electoral conundrums. They are very rare for the Leader Rule, and even impossible when the culture is unidimensional. However a slight relaxation of the Leader Rule makes precisely the unidimensional culture the worst one among those tested, with up to more than $15\%$ of electorate producing a bad cycle.

We thus showed that under Approval Voting, not only convergence to equilibrium may not happen, but cycles can lead individually sound heuristics to result collectively in the worst possible outcome.
We then gave example of bad cycles for other voting systems, in particular Condorcet systems, showing that these issues are not at all specific to Approval Voting.

Last, we considered a simplistic example of continuous-space Polling Dynamics ensuring continuity, i.e. small changes in the expected outcome leads to small changes in the ballots cast. It turned out this model has a chaotic behavior, and we conjectured that chaos is not uncommon at all for continuous-space Polling Dynamics. We supported this conjecture by looking at the influence of some parameters of the model, showing that chaos can be observed in a non-negligible range. The way in which a particular model of voters behavior and a particular preference profile result in either constant, periodic or chaotic patterns is only illustrated here, and would deserve a full study that is way beyond the scope of the present work.

\bibliographystyle{amsalpha}
\bibliography{biblio}
\end{document}